\title{Identity-Compatible Auctions\thanks{
    This paper was previously titled ``Identity-Proof Auctions.'' I thank my advisor, Marek Pycia, for continuous guidance and support, and Samuel H\"{a}fner for many useful discussions. For their comments, I thank Christoph Carnehl, Christian Ewerhart, Piotr Dworczak, P\'{e}ter Es\"{o}, William Fuchs, Ian Jewitt, Bettina Klaus, Paul Klemperer, Margaret Meyer, Nick Netzer, Alessandro Pavan, Tim Roughgarden, Armin Schmutzler, Shigehiro Serizawa, Ludvig Sinander, Vasiliki Skreta, Roland Strausz, \'{E}va Tardos, Alex Teytelboym, Jiawei Zhang and talk audiences at the University of Zurich, the 2024 Conference on Mechanism and Institution Design, the 8th Swiss Theory Day, the University of Oxford, the 14th Conference on Economic Design, the 1st Berlin MT \& BE PhD Conference, the 2025 World Congress of the Econometric Society. First presentation: March 2022. All errors are my own.
}
}
\author{Haoyuan Zeng\thanks{
    Department of Economics, University of Zurich. Email: haoyuan.zeng@econ.uzh.ch.
    }
}
\date{\today \\ \href{https://haoyuanzeng.github.io/papers/ICA-Zeng.pdf}{Latest version here.}}
\begin{document}

\maketitle

\begin{abstract}

    This paper studies the incentives of the seller and buyers to shill bid in a single-item auction. An auction is seller identity-compatible if the seller cannot profit from pretending to be one or more bidders via fake identities. It is buyer identity-compatible if no buyer profits from posing as more than one bidder. Lit auctions reveal the number of bidders, whereas dark auctions conceal the information. We characterize three classic selling mechanisms---first-price, second-price, and posted-price---based on identity compatibility. We show the importance of concealing the number of bidders, which enables the implementation of a broader range of outcome rules. In particular, no optimal lit auction is ex-post seller identity-compatible, while the dark first-price auction (with reserve) achieves the goal.

\end{abstract}

\keywords{Mechanism Design, Auction, Identity Compatibility, Shill Bidding, Lit, Dark}

\JELcodes{D44, D47, D82, D83}

\newpage

\section{Introduction}

Shill bidding is frequently observed in practice. Historically, even when physical attendance at auctions was required because of technical constraints, multiple bidders might still have acted on behalf of a buyer or the seller. These bidders often used subtle, hard-to-detect signals to communicate their intentions. ``Such signals may be in the form of a wink, a nod, scratching an ear, lifting a pencil, tugging the coat of the auctioneer, or even staring into the auctioneer's eyes---all of them perfectly legal'' \citep{cassadyAuctionsAuctioneering1967}.

In today's marketplaces, shill bidding becomes more common and challenging to identify and prevent. As reported by the Wall Street Journal,\footnote{\label{fn:empty}\href{https://www.wsj.com/articles/with-absentee-bidding-on-the-rise-auction-rooms-seem-empty-these-days-1402683887}{Why auction rooms seem empty these days. The Wall Street Journal,  June 15,  2014}.} most bids come in online or by telephone in major auction houses like Christie's or Sotheby's, which makes it difficult to verify the identities of bidders. A long-standing practice in the industry is ``chandelier bidding,'' under which auctioneers declare a series of fictitious bids to drum up excitement within the crowd. It used to be only legal until reaching the secret reserve price, i.e., the minimum price at which a consignor was willing to sell. Recently, in an unexpected move, New York City repealed such regulation, which made many experts fear that it would erode people's trust and confidence.\footnote{\href{https://www.nytimes.com/2022/05/03/arts/design/nyc-auction-rules-sothebys-christies.html}{New York City Eliminates the Rules That Govern Art and Other Auctions, The New York Times, May 3,  2022}.} Internet auctions are particularly susceptible to shill bidding, because buyers and sellers do not reveal their identities and trade goods anonymously. Despite the risk of legal consequences,\footnote{\href{https://www.nytimes.com/2001/03/09/business/3-men-are-charged-with-fraud-in-1100-art-auctions-on-ebay.html}{3 Men Are Charged With Fraud In 1,100 Art Auctions on EBay, The New York Times, March 9, 2001}.} shill bidding persists due to its ease of implementation and the significant challenges involved in detecting it. For example, \citet*{chenHowSeriousShill2020} estimate that in eBay Motors auctions, approximately 9\% of bidders are shill bidders and around 22\% of all listings contain shill bids. While shill bidding is typically associated with the seller, buyers can also exploit it profitably\@. \citet{seibelCollusionExclusionPublic2023} document that buyers can employ shills to crowd out competitors during the preselection stage in two-stage auctions (see Example~\ref{ex:collusion} below).

Shill bidding raises concerns not only for the seller and buyers but also for the marketplace as a whole, because marketplaces aim to build a reputation for fairness, which can be damaged by shill bidding. For example, despite the deregulation in New York City, major auction houses have expressed their intent to operate as if the previous regulations were still in place \citep{herreraNYCAuctionHouses2023}. eBay emphasizes that ``we want to maintain a fair marketplace for all our users, and as such, shill bidding is prohibited on eBay.\dots eBay has a number of systems in place to detect and monitor bidding patterns and practices. If we identify any malicious behavior, we'll take steps to prevent it.''\footnote{\href{https://www.ebay.com/help/policies/selling-policies/selling-practices-policy/shill-bidding-policy?id=4353}{Shill Bidding Policy, eBay}.} But so far, those measures appear to be only partially effective. Besides, marketplaces also struggle with challenges due to a lack of commitment power and credibility \citep{akbarpourCredibleAuctionsTrilemma2020}. For instance, Google, as the operator of the largest online displaying advertising market, is facing a lawsuit over alleged shill-bidding-like manipulations of ad auctions.\footnote{\href{https://www.wsj.com/articles/google-misled-publishers-and-advertisers-unredacted-lawsuit-alleges-11642176036}{Google Misled Publishers and Advertisers, Unredacted Lawsuit Alleges, The Wall Street Journal, Jan 14, 2022}.\label{fn:google}}\textsuperscript{,}\footnote{\href{https://www.nytimes.com/2025/04/17/technology/google-ad-tech-antitrust-ruling.html}{Google Broke the Law to Keep Its Advertising Monopoly, a Judge Rules, The New York Times, April 17, 2025}.}

Given concerns about shill bidding, which auction designs maximize revenue or achieve efficiency while deterring this practice? This paper answers this question by examining which auction formats are incentive-compatible for the seller and/or buyers to truthfully report their \emph{identities}---that is, who they truly are. We differentiate between \emph{buyers}, who act on behalf of their own sincere interests for purchase, and \emph{bidders}---identities employed by the seller and buyers. In addition, we differentiate between the \emph{seller} and the \emph{designer} (\emph{auctioneer}), with the latter representing the marketplace. Shill bidding occurs when the seller uses fake identities to pretend to be one or more bidders, or when a single buyer employs more than one identity to pose as multiple bidders. We assume that no one knows who controls which bidders (identities) except for their own.

The challenge of identifying shill bidding stems from the uncertainty regarding the number of buyers. In the absence of such uncertainty, shill bidding becomes evident when the number of bidders participating in the auction exceeds the number of buyers. This stands in contrast with the standard auction theory, which assumes common knowledge of a fixed number of buyers \citep{vickreyCounterspeculationAuctionsCompetitive1961,myersonOptimalAuctionDesign1981,milgromTheoryAuctionsCompetitive1982}. It follows that when the designer proposes an auction, the exact number of buyers is not yet known.\footnote{One key consideration in practical auction design is attracting buyers \citep{bulowAuctionsNegotiations1996,klempererWhatReallyMatters2002,milgromPuttingAuctionTheory2004,milgromDiscoveringPricesAuction2017}, as increased competition---driven by a higher number of buyers---can significantly raise the auction's revenue \citep{holtUncertaintyBiddingIncentive1979,mcafeeAuctionsBidding1987}. The uncertainty regarding the number of buyers is a typical feature of many real-world auctions, even in the absence of shill bidding.} Then, the proposed auction must be capable of accommodating varying numbers of buyers, unlike standard auctions where the number of buyers is considered a fixed primitive. Note that the designer observes only the bidders rather than the underlying seller or buyers who control them. Hence, when the number of buyers is uncertain, we model the auction proposed by the designer as a collection of standard auctions, each corresponding to a different finite number of bidders. The auction game proceeds as outlined below in the presence of shill bidding.

The game begins with the designer publicly announcing the auction. Following this, both the seller and buyers simultaneously decide how many identities to employ, which together determine the total number of bidders. Finally, the specific standard auction corresponding to this total number of bidders is conducted, with both the seller and buyers controlling the bidders (identities) they have employed. Although the designer is unable to distinguish among bidders, the auction format may vary depending on the number of bidders. For example, the designer might run the first-price auction when there are no more than five bidders, and switch to the second-price auction otherwise. We study the implications of shill bidding in the symmetric independent private values model \citep{myersonOptimalAuctionDesign1981}.

A new design element emerges when the number of buyers is uncertain: whether the number of bidders is common knowledge or not. When the standard auction theory assumes common knowledge of a fixed number of buyers, it implicitly encompasses two assumptions: first, that each bidder represents a distinct buyer, and second, that the number of bidders is common knowledge. In the presence of shill bidding, the first assumption no longer holds true. As a result, the number of bidders does not necessarily equal the number of buyers. The second assumption transforms into a design choice. We define an auction as a \emph{lit} auction if the number of bidders is common knowledge, and as a \emph{dark} auction if the number of bidders is concealed.

Shill bidding offers two channels for participants to manipulate the auction. First, by strategically selecting the number of identities they employ, participants can influence the bidding strategies of others. This channel operates in two ways: (i) by altering the auction format---for instance, when the use of multiple identities shifts the auction format from the first-price to the second-price auction---and (ii) by affecting the perceived competition among others---for example, even when the auction format remains the first-price auction, an increase in the number of bidders leads to higher equilibrium bids.\footnote{When the type \(\theta_{i}\) is uniformly distribution over the interval \(\left[0,1\right]\), the equilibrium bidding function in the first-price auction is given by \(\beta_{i}\left(\theta_{i}\right)=\frac{n-1}{n}\theta_{i}\), where \(n\) is the number of bidders in the auction. An increase in the number of bidders leads to a rise in the equilibrium bid.\label{fn:first-price}} Second, participants can optimize the bidding strategies of the bidders under their control within a given auction. In particular, the use of multiple identities enlarges their own strategy space.

To deter shill bidding, the auction should ensure that the seller finds it unprofitable to participate via multiple identities, and buyers find it unprofitable to control more than one identity. In other words, it is incentive-compatible for them to report their true identities rather than exploiting fake ones. We refer to this property as \emph{identity compatibility}, and explore different notions of it as follows.\footnote{Incentive compatibility ``requires that no one should find it profitable to ``cheat,'' where cheating is defined as behavior that can be made to look ``legal'' by a misrepresentation of a participant's preferences or endowment'' \citep{hurwiczInformationallyDecentralizedSystems1972}. Identity compatibility requires that no one should find it profitable to ``cheat'' by deceiving others using fake identities.} Bayesian (seller or buyer) identity compatibility entails no profitability in expectation (for the seller or buyers respectively). Ex-post (seller or buyer) identity compatibility is a stronger notion, ensuring no profitability ex-post, i.e., for every type profile of others. Finally, ex-post auctioneer identity compatibility further strengthens ex-post seller identity compatibility by ensuring no profitability ex-post even if the seller can collude with the auctioneer to deviate from auction rules in ways that are undetectable to buyers.\footnote{In some cases, the seller is also the auctioneer. The deviations are undetectable in the sense that buyers cannot identify them even when they share information after the auction (see Definition~\ref{def:ex-post-auctioneer} for details).}

Our analysis starts with buyers. The first main result (Theorem~\ref{thm:ex-post-bip}) shows that the second-price auction is the unique optimally efficient auction that is ex-post buyer identity-compatible; with the optimal reserve price, it is the unique optimal auction that is ex-post buyer identity-compatible. This follows from the fact that ex-post buyer identity compatibility implies strategy-proofness (Lemma~\ref{lem:bip-sp}). It holds in both lit and dark auctions, because strategy-proofness renders the number of bidders irrelevant.

We then turn to the seller. The second main result (Theorem~\ref{thm:lit-opt}) is an impossibility theorem: no optimal (or optimally efficient) lit auction is ex-post seller identity-compatible. This result underscores a fundamental dilemma for the seller. On one hand, competition in the auction is essential to maximize revenue. On the other hand, the seller cannot credibly commit to refraining from exploiting the competition for their benefit through the use of identities. Consequently, it is impossible to simultaneously achieve both optimality and ex-post seller identity compatibility in lit auctions, reflecting the fundamental tension between revenue maximization and the creation of fake competition by the seller. For any optimal (or optimally efficient) lit auction, the seller always has an incentive to participate via identities to increase the revenue.

Given this tension, we proceed to identify the best lit auction for the seller---one that frees them from any suspicion of manipulating the auction in their favor while maximizing revenue. In particular, we take into account the possibility of collusion between the seller and the auctioneer. This analysis addresses whether shill bidding can be deterred in lit auctions by accepting a modest loss in efficiency or revenue. It turns out that the posted-price mechanism generates the highest expected revenue among all lit auctions that are ex-post auctioneer identity-compatible (Theorem~\ref{thm:posted-price}). This result highlights the cost the seller has to incur in order to maintain self-discipline, and reinforces the previous impossibility result by showing that the dilemma not only constrains the seller but does so in the strictest way that they must commit to a rigid payment rule that effectively ties their own hands.

One novel approach to resolving the tension is to conceal the number of bidders---dark auctions. We demonstrate that the dark first-price auction with (or without) the optimal reserve price is the unique optimal (or optimally efficient) dark auction that is ex-post auctioneer identity-compatible and Bayesian buyer identity-compatible (Theorem~\ref{thm:dark-opt}). In other words, we see ``light in the dark.'' The intuition behind this is that the pay-as-bid property ensures that the seller cannot manipulate the payment even if they know the type profile of buyers. Moreover, since the number of bidders is concealed in dark auctions, the seller is unable to heighten the perceived competition among buyers by employing multiple identities. The dilemma highlighted in Theorem~\ref{thm:lit-opt} is addressed by converting the competition essential for achieving optimality into a fixed bid through the concealment of the number of bidders. The key thing to notice here is that competition in the auction can be analyzed from two perspectives: intensive competition, which pertains to a fixed number of bidders, and extensive competition, which arises from variations in the number of bidders. The intensive competition for a fixed number of bidders can be integrated into a fixed bid, as in the first-price auction, ensuring it cannot be manipulated by other participants. The problem for lit auctions is that the extensive competition can always be exploited by the seller, e.g., the equilibrium bid in the first-price auction increases with the number of bidders. The dark first-price auction resolves this problem by incorporating the extensive competition into a fixed bid across different numbers of bidders, thereby preventing such exploitation.

Together, these results imply an auction dilemma: an optimal auction cannot simultaneously achieve ex-post identity compatibility for both the seller and buyers. This dilemma can be resolved by relaxing the requirement from ``ex-post'' to ``Bayesian'' for either party.

Finally, we discuss how these results extend to dynamic auctions, the finite type space and partitional disclosure policies (Theorem~\ref{thm:partitional}). In particular, we extend Theorem~\ref{thm:ex-post-bip} to the finite type space (Theorem~\ref{thm:ex-post-bip-finite}), even though the strategy-proofness-based characterization of the second-price auction does not directly generalize to the finite type space.\footnote{See \citet{greenCharacterizationSatisfactoryMechanisms1977}, \citet{holmstromGrovesSchemeRestricted1979}, \cite{harrisAllocationMechanismsDesign1981}, \citet{lovejoyOptimalMechanismsFinite2006}, \citet{elkindDesigningLearningOptimal2007}, and \citet{jeongFirstPricePrincipleMaximizing2023}.} This result provides a novel characterization of the second-price auction that is valid across both the finite and continuous type spaces.

By examining identity compatibility in auctions, this paper characterizes the first-price auction, the second-price auction, and the posted-price mechanism. The comparison between lit and dark auctions underscores the significance of concealing the number of bidders in the presence of shill bidding, as it allows for the implementation of a broader range of outcome rules.

It is well known in practice that information disclosure in auctions can facilitate collusion by providing a mechanism for signaling and punishment, highlighting that transparency is not inherently beneficial \citep{cramtonCollusiveBiddingLessons2000, klempererWhatReallyMatters2002, klempererUsingAbusingEconomic2003}.\footnote{In the Hong Kong 3G auction, transparency was abused by strong bidders to lobby for a ``small'' change of design, which made entry harder and collusion easier in the end. Interestingly, they ridiculed the original design by calling it the ``dark auction'' (see Footnote~48 in \citet{klempererUsingAbusingEconomic2003}). \label{fn:transparency}} This paper reinforces that point from a straightforward perspective, showing that when the disclosed information---specifically, the number of bidders---can be falsified by fake identities to mislead participants, it is better to conceal it.

\subsection{Related Literature}

This paper is far from the first to consider shill bidding in auctions. The previous literature focuses on shill bidding separately by the seller and buyers. Since our analysis addresses both, we first review the literature on buyers and then on the seller.

\citet*{sakuraiLimitationGeneralizedVickrey1999} and \citet*{yokooEffectFalsenameBids2004} refer to shill bids from buyers as \emph{false name} bids. They focus on \emph{false-name-proof} combinatorial auctions and emphasize that the failure of the substitutes condition makes the Vickrey-Clarke-Groves auction vulnerable to false name bids \citep{vickreyCounterspeculationAuctionsCompetitive1961,clarkeMultipartPricingPublic1971,grovesIncentivesTeams1973}, which is in the same spirit as the collusion problem examined by \citeauthor{ausubelAscendingAuctionsPackage2002} \citeyearpar{ausubelAscendingAuctionsPackage2002,ausubelLovelyLonelyVickrey2005}\@. \citet{sherOptimalShillBidding2012} studies the optimal shill bidding strategies in the VCG auction for heterogenous objects. For the single-unit setting, \citet*{arnostiAdverseSelectionAuction2016} combine false-name-proofness with adverse selection in advertising auctions. Different from our analysis, they directly focus on strategy-proof auctions.

There is a richer literature on shill bidding by the seller. Most of them restrict attention to the case that the seller uses only one shill and study how the seller makes best use of it\@. \citet*{grahamPhantomBiddingHeterogeneous1990} study the English auction with heterogenous buyers, and show that the seller may benefit from shill bidding in expectation by using a non-constant reserve price when the seller does not know the distributional origin of any specific bid. Relatedly, \citet{izmalkovShillBiddingOptimal2004} shows that the seller cannot do so in the English auction with personalized optimal reserve prices\@.\footnote{\citet{lamyUppingAnteHow2013} studies how shill bidding affects the second-price auction with costly entry.} \citet{chakrabortyAuctionsShillBidding2004} consider the English auction with common values, and show that the seller can get worse off in expectation by the possibility of shill bidding if buyers take that into account when bidding. With interdependent values, \citet{lamyShillBiddingEffect2009} shows that the linkage effect \citep{milgromTheoryAuctionsCompetitive1982} is reduced by the shill-bidding effect that buyers fear that they are bidding against the seller in the second-price auction\@. \citet{levinMisbehaviorCommonValue2023} characterize the optimal shill bidding strategy for the seller in the English auction with common values. Compared to the previous literature, this paper shows that both how the seller places bids through a shill and how many shills the seller uses in the auction matter. In particular, the seller can manipulate the perceived competition among buyers by using multiple shills, which has received little attention in the literature.

In a parallel work, \citet{shinozakiShillProofRulesObject2025} characterizes the posted-price rule as the unique \emph{shill-proof}, strategy-proof, and revenue-undominated deterministic mechanism in the context of object allocation problems with money.\footnote{Shill-proofness in \citet{shinozakiShillProofRulesObject2025} is similar to ex-post auctioneer identity compatibility (Definition~\ref{def:ex-post-auctioneer}) in this paper.} Theorem~\ref{thm:posted-price} of this paper is closely related, but we focus on revenue maximization in auctions, allow for stochastic mechanisms and do not impose strategy-proofness. In a contemporary work, \citet*{komoShillProofAuctions2024} establish the Dutch auction (with reserve) as the unique optimal public auction that is \emph{strongly shill-proof}, where they fix the number of bidders and exclude bids of zero from being classified as shill bids.\footnote{Strong shill-proofness is similar to ex-post auctioneer identity compatibility, but the seller has to dynamically adjust the number of shills to keep the number of bidders fixed when different numbers of buyers realize, no matter what the seller's bidding strategies are.} In contrast, we consider the presence of shills in the auction to be shill bidding, regardless of the bids placed through them. Our impossibility result (Theorem~\ref{thm:lit-opt}) is not restricted to the environment with a fixed number of bidders. Consequently, we argue the importance of concealing the number of bidders in the presence of shill bidding (Theorem~\ref{thm:dark-opt}).

Identity compatibility is related to the literature on the auctioneer's incentive to report truthfully in auctions \citep{besterImperfectCommitmentRevelation2000,besterContractingImperfectCommitment2001,akbarpourCredibleAuctionsTrilemma2020,woodwardSelfAuditableAuctions2020}\@. \citet{besterImperfectCommitmentRevelation2000,besterContractingImperfectCommitment2001} focus on the applicability of the revelation principle when the auctioneer cannot commit to the outcome rule. \emph{Credibility} \citep{akbarpourCredibleAuctionsTrilemma2020} and \emph{auditability} \citep{woodwardSelfAuditableAuctions2020} focus on how the auctioneer misreports the auction history via private communication without alerting buyers. In contrast, shill bidding itself does not involve direct deviations from either the outcome rule or the auction history; rather, it is a misreporting of identities. With shills, the seller---who is not necessarily the auctioneer---can only communicate ``publicly'' with buyers by participating in the auction as one or more bidders. In particular, all those previous papers do not consider the deviation of the auctioneer by misreporting the number of buyers in the auction. As a result, the first-price auction is credible (and zero-auditable) but not Bayesian seller identity-compatible, because the seller can boost the perceived competition by ``making the auction room crowded.''\footnote{The first-price auction is zero-auditable when the seller cannot adjust the supply \citep{woodwardSelfAuditableAuctions2020}, for example, in a single-item auction.}

In addition to the papers discussed above, this paper also contributes to the literature on characterizations of auction formats. For the first-price (or Dutch) auction, see \citet{akbarpourCredibleAuctionsTrilemma2020}, \citet{pyciaNonBossinessFirstPriceAuctions2021}, \citet{jeongFirstPricePrincipleMaximizing2023}, and \citet*{hafnerMechanismDesignInformation2024}. For the second-price (or English) auction, see \citet{greenCharacterizationSatisfactoryMechanisms1977}, \citet{holmstromGrovesSchemeRestricted1979}, \citet{liObviouslyStrategyProofMechanisms2017}, \citet{akbarpourCredibleAuctionsTrilemma2020}, and \citet{pyciaTheorySimplicityGames2023}. For the posted-price mechanism, \citet{hagertyRobustTradingMechanisms1987}, \citet{copicOptimalRobustBilateral2016}, and \citet*{andreyanovOptimalRobustDivisible2018} focus on bilateral trade, with the exception of \citet{pyciaTheorySimplicityGames2023}. Previous results work with various concepts for different auction formats, like (obvious) strategy-proofness, robustness, credibility, simplicity, non-bossiness, and leakage-proofness. This paper complements the literature with characterization results solely on a novel concept---identity compatibility. 

Finally, the way of modeling the mechanism as a collection of standard auctions connects to the literature on uncertain number of buyers in auctions \citep*{mcafeeAuctionsStochasticNumber1987,matthewsComparingAuctionsRisk1987,harstadEquilibriumBidFunctions1990,lauermannBiddingCommonValueAuctions2023}. In those papers, the number of bidders is exogenous, while in this paper, the number of bidders is endogenously determined because of shill bidding.

\subsection{Outline of the Paper}
In Section~\ref{sec:model}, we present the model and incorporate identities into auctions. Section~\ref{sec:lit} defines identity compatibility and explores its implications in lit auctions. In Section~\ref{sec:dark}, we introduce dark auctions and revisit the concept of identity compatibility. Section~\ref{sec:discussion} discusses further extensions. Section~\ref{sec:conclusion} concludes. Omitted proofs can be found in Appendix~\ref{sec:proofs} and Online Appendix~\ref{sec:online-appendix}.

\section{Model}\label{sec:model}

There is agent 0 and a finite set of agents \(\widetilde{B}\), which is randomly drawn from the set of potential agents $\mathbb{B} = \left\{1,2,\dots\right\}$. We have $\sum_{B \in\mathcal{B}}\mathrm{Pr}[\widetilde{B} = B]=1$, where $\mathcal{B}$ denotes the collection of all finite subsets of $\mathbb{B}$. Given any realized set of agents \(B\), agents' types are denoted by \(\theta_{i} \in \Theta_{i} = \Theta\) for all \(i  \in B\), which are independent and identically distributed. Agent 0's type is denoted by \(\theta_0 \in \Theta_0\).\footnote{Agent 0's type distribution may differ from that of agent \(i \in \mathbb{\mathbb{B}}\). In the auction setting, it is usually assumed that \(\Theta_0 = \left\{0\right\}\).} There is a set of outcomes \(X\). Each agent \(i \in B \cup \left\{0\right\}\) has utility $u_{i}:X \times \Theta_i\rightarrow\mathbb{R}$. In the auction setting, agent 0 is the seller, and agents \(i \in B\) are buyers.

The payoffs of agents are determined through a game. Agents participate in the game through identities, not directly. Each agent can use multiple identities and fully controls the behavior of each of them. No one knows who controls which identities except for their own. In particular, when agent \(i \in \widetilde{B}\) uses more than one identity, or when agent 0 participates in the game via identities, the designer is unable to detect these behaviors. The designer will randomly assign labels to identities showing up in a game to distinguish them from one another. In such an setting, it may not be feasible for the designer to discriminate among agents who control those identities, so we focus on \emph{anonymous} games that treat all identities symmetrically.

\subsection{Collection of Games}\label{subsec:lit-games}

The goal of this paper is to design a mechanism such that agent \(i \in \widetilde{B}\) will not use more than one identity, and/or agent 0 abstains from participation. Given that the set of agents is random, any mechanism that satisfies our design objective should consist of at least a collection of anonymous games. Each game is designed for a different finite number of identities, with each identity controlled by a different agent \(i \neq 0\) in equilibrium.\footnote{Anonymity implies that the structure of the game is influenced solely by the number of identities in the game.} These games elicit only information about agents' types. We denote this collection of games by $\Gamma^{\mathbb{N}} = \left(\Gamma^{n}\right)_{n\in\mathbb{N}}$, where \(\mathbb{N}\) denotes the set of natural numbers. Consider the game \(\Gamma^{n}\). For any realized set of agents $B \in \mathcal{B}$ with the cardinality \(\left|B\right| = n\), they play the game \(\Gamma^{n}\) when each of them uses only one identity. Let \(\Sigma^{n}\) denote the set of strategies for each agent in the game \(\Gamma^{n}\). We focus on symmetric equilibria.\footnote{Restricting attention to symmetric equilibria follows the usual practice in the literature, given the symmetric environment under consideration\@. \citet[p.~337]{myersonPopulationUncertaintyPoisson1998} argues that ``It cannot be commonly perceived that two different individuals of the same type would behave differently because, in our model with population uncertainty, two players of the same type have no commonly known attributes by which others can distinguish them.''\label{fn:symmetric}} Let \(S^n\) denote the symmetric type-strategy for each agent in the game \(\Gamma^{n}\) that maps from types to strategies, i.e., \(S^n: \Theta \to \Sigma^n\). When agents play according to $S^{n}\left(\theta_B\right) = \left(S^{n}\left(\theta_i\right)\right)_{i \in B}$, the resulting outcome is $g^{\Gamma^{n}}\left(S^{n}\left(\theta_B\right)\right)$. Let \(\left(\Gamma^{\mathbb{N}}, S^{\mathbb{N}}\right) = \left(\Gamma^{n},S^{n}\right)_{n\in\mathbb{N}}\) denote the collection of games and strategy profiles.

\begin{definition}\label{def:lit-eqm-anonymous}
$\left(\Gamma^{\mathbb{N}}, S^{\mathbb{N}}\right)$ is \emph{lit Bayesian incentive-compatible} if, for all \(n \in \mathbb{N}\), all $i\in \mathbb{B}$, and all $\theta_{i}\in\Theta$,
\[
S^{n}\left(\theta_{i}\right)\in\arg\max_{\sigma^n \in \Sigma^n}\mathbb{E}_{\theta_{-i} \in \Theta^{n-1}}\left[u_{i}\left(g^{\Gamma^{n}}\left(\sigma^n,S^{n}\left(\theta_{-i}\right)\right),\theta_{i}\right)\right].
\]
\end{definition}

We call $\left(\Gamma^{\mathbb{N}}, S^{\mathbb{N}}\right)$ a \emph{lit mechanism} if it is lit Bayesian incentive-compatible. The term ``lit'' highlights the implicit assumption that each agent is aware of the number of agents participating in the game. As this notion is standard in the literature, we will occasionally omit the qualifier and refer to lit mechanisms simply as mechanisms when the context is clear. To avoid confusion early on, we postpone the definition of \emph{dark} mechanisms (Definition~\ref{def:dark-eqm-anonymous}) to Section~\ref{subsec:dark-games}.

Note that above we restrict attention to mechanisms that elicit only information about agents' types. In general, the designer might ask identities to report whom they represent in a game and design a mechanism that accounts for this information.\footnote{The designer might ask identities to report their beliefs about the number of agents in a game. Those beliefs bear no information if we assume agents have symmetric priors (Footnote~\ref{fn:common-prior}).} However, given our objective of designing mechanisms that ensure each agent \(i \in \widetilde{B}\) uses only a single identity and/or agent 0 refrains from participation, any game contingent on reports indicating that an agent \(i \in \widetilde{B}\) uses multiple identities or that agent 0 participates via any identity is never played in equilibrium. Consequently, only games contingent on the absence of such reports are relevant---these are precisely the games that elicit only information about agents' types.\footnote{The absence of such reports does not necessarily imply that each agent \(i \in \widetilde{B}\) uses only a single identity or that agent 0 refrains from participation. Agents may instruct identities under their control to submit reports claiming to represent different agents, and the designer will be unable to detect this.} In other words, if any mechanism satisfies our design objective, by focusing solely on the games played by agents in the absence of such reports, we obtain a mechanism that satisfies our design objective and elicits only information about agents' types.\footnote{This observation is similar to the pruning principle in extensive-form games \citep{liObviouslyStrategyProofMechanisms2017}.} Therefore, it is without loss of generality to restrict our search for mechanisms that satisfy our design objective to those that elicit only information about agents' types.

We define an equivalence relation between mechanisms. Two mechanisms $\left(\Gamma^{\mathbb{N}}, S^{\mathbb{N}}\right)$ and $\left(\hat{\Gamma}^{\mathbb{N}}, \hat{S}^{\mathbb{N}}\right)$ are \emph{equivalent} if, for all \(n \in \mathbb{N}\) and all type profiles $\theta_{B}\in\Theta^{n}$, the distribution over outcomes from the strategy profile $S^{n}\left(\theta_{B}\right)$ in the game $\Gamma^{n}$ is the same as that from the strategy profile $\hat{S}^{n}\left(\theta_{B}\right)$ in the game $\hat{\Gamma}^{n}$.

\begin{definition}\label{def:equivalent}
    Two mechanisms $\left(\Gamma^{\mathbb{N}}, S^{\mathbb{N}}\right)$ and $\left(\hat{\Gamma}^{\mathbb{N}}, \hat{S}^{\mathbb{N}}\right)$ are \emph{equivalent} if, for all $n\in\mathbb{N}$, we have \(g^{\Gamma^{n}}\left(S^{n}\left(\cdot\right)\right) = g^{\hat{\Gamma}_{n}}\left(\hat{S}^{n}\left(\cdot\right)\right)\).
\end{definition}

This equivalence definition is purely outcome-based. Equivalent mechanisms implement the same outcome rule, i.e., the same mapping from type profiles to outcomes. In this paper, we do not differentiate between equivalent mechanisms and focus on direct mechanisms. As shown in Section~\ref{subsec:extensive-forms}, this restriction entails no loss of generality. All of our results extend to indirect (or dynamic) mechanisms as well.

\subsection{Auctions with an Unknown Number of Buyers}\label{subsec:auctions}

There is only one item for sale. We refer to agents as buyers. Then \(\mathbb{B}\) is the set of potential buyers, and $\mathcal{B}$ is the collection of all finite subsets of potential buyers \(\mathbb{B}\). The finite set of buyers, \(\widetilde{B}\), is randomly drawn from \(\mathbb{B}\), and \(B \in \mathcal{B}\) represents one realization of this finite set. We assume that the expected number of buyers is finite, i.e., \(\sum_{n \in \mathbb{N}} n\mathrm{Pr}[|\widetilde{B}| = n] < \infty\). We assume that $\mathrm{Pr}[|\widetilde{B}|=1]>0$.\footnote{This assumption implies that whenever there is more than one bidder in the auction, there is a risk of shill bidding.} We assume that buyers have symmetric priors about the number of buyers, i.e., for all $n\in\mathbb{N}$, there exists $p(n)$ such that $p\left(n\right) = p_i\left(n\right) =\mathrm{Pr}[|\widetilde{B}|=n|i\in\widetilde{B}]$ for all $i \in \mathbb{B}$.\footnote{The exact process of how symmetric priors are generated is not crucial. To give an example, let $\mathrm{Pr}[i \in \widetilde{B}] = 0$ for \(i > k\). The set of buyers $\widetilde{B}$ is generated by selecting each buyer in the set $\left\{1, 2, \dots, k\right\}$ independently with the same probability. Then, for all buyers in the set $\widetilde{B}$, they share the same prior about the number of buyers \(|\widetilde{B}|\). Notice that buyers' symmetric priors differ from the auctioneer's prior, as the former is conditional on participation in the auction. Poisson games \citep{myersonPopulationUncertaintyPoisson1998} assume these priors are identical, which is not needed in this paper. We only need the symmetric prior assumption in the analysis of dark auctions; in the case of lit auctions, this assumption can be fully relaxed.\label{fn:common-prior}} In other words, buyers are ex-ante symmetric. Each buyer's type $\theta_{i} \in \Theta_{i}$ corresponds to \(i\)'s valuation for the item to be sold. Given any realized set of buyers \(B\), assume that $\Theta_i= \Theta = \left[0,1\right]$ and that $\theta_{i}$ is independently identically distributed according to a continuous full-support density $f:\left[0,1\right] \to \mathbb{R}$ for all \(i \in B\). The cumulative distribution function is denoted by $F\left(\theta_{i}\right)=\int_{0}^{\theta_{i}}f\left(x\right)dx$. The seller is denoted by 0.

Both the seller and buyers can participate in the auction under multiple identities, each of which is treated as a distinct bidder. We will use the terms ``identity'' and ``bidder'' interchangeably. All bidders are ex-ante identical, and the designer treat them symmetrically. To distinguish among them, the designer randomly assigns a unique label to each bidder. For simplicity, we assume that these labels are natural numbers. Let \( N_i \subset \mathbb{N} \) denote the set of bidders controlled by buyer \( i \in B \). Without loss of generality, we assume that \( i \in N_{i} \).\footnote{\(N_i\) is the set of labels assigned by the designer to identify bidders, while \(i\) indexes the buyer in our analysis. Since the designer treats bidders symmetrically, it is without loss to assume that \(i \in N_{i}\) for convenience. For any two distinct buyers \(i \neq j\), the sets of bidders controlled by them are disjoint, i.e., \(N_{i} \cap N_{j} = \emptyset\). Likewise, the sets of bidders controlled by buyers and by the seller are disjoint, i.e., \(N_{B} \cap S = \emptyset\).} The collection of all bidders controlled by buyers is \( N_{B} = \cup_{i \in B} N_{i} \). When buyers do not engage in shill bidding, \( B = N_{B} \). Similarly, let \( S \subset \mathbb{N} \) denote the set of bidders controlled by the seller. When the seller does not shill bid, \( S = \emptyset \). The complete set of bidders is therefore \( N = N_B \cup S \).

For any fixed set of bidders $N = N_{B}\cup S$, the outcome $x^{N}= (q^{N}, t^{N})$ consists of the allocation $q^{N}=\left(q^{N}_{i}\right)_{i\in N}$ and the payment $t^{N}=\left(t^{N}_{i}\right)_{i\in N}$, where $\sum_{i\in N}q^{N}_{i}\leq1$, $q^{N}_{i}\in\left[0,1\right]$, and $t^{N}_{i}\in\mathbb{R}$. The seller's value for the item is normalized to zero. They desire revenue. The utility function is 
\[
u_{0}=\sum_{i\in N_{B}}t^{N}_{i},
\]
which sums up the payments from all bidders controlled by buyers. Each buyer $i\in B$ has quasi-linear utilities, 
\[
u_{i}=\left(\sum_{j\in N_{i}}q^{N}_{j}\right)\theta_{i}-\left(\sum_{j\in N_{i}}t^{N}_{j}\right).
\]
The probability of buyer \(i\) obtaining the item is equal to the sum of the winning probabilities of all bidders controlled by buyer \(i\). Similarly, the total payment for buyer \(i\) is the sum of the payments associated with those bidders.

The designer proposes a mechanism \(\left(\Gamma^{\mathbb{N}}, S^{\mathbb{N}}\right)\), which we will refer to as an \emph{auction} for simplicity. For any finite set of bidders $N$ with \(\left|N\right| = n\in\mathbb{N}\), this auction induces an outcome rule which consists of an allocation rule $q^{\left(\Gamma^{n},S^{n}\right)}\left(\cdot\right)$ and a payment rule $t^{\left(\Gamma^{n},S^{n}\right)}\left(\cdot\right)$:
\[
    \left(q^{\left(\Gamma^{n},S^{n}\right)}\left(\theta_{N}\right),t^{\left(\Gamma^{n},S^{n}\right)}\left(\theta_{N}\right)\right) = x^{\left(\Gamma^{n},S^{n}\right)}\left(\theta_{N}\right) = g^{\Gamma^{n}}\left(S^{n}\left(\theta_{N}\right)\right) \quad \forall \theta_N \in \Theta^{n}.
\]
We denote the collection of induced outcome rules \(\left(q^{\left(\Gamma^{n},S^{n}\right)}, t^{\left(\Gamma^{n},S^{n}\right)}\right)_{n \in \mathbb{N}}\) simply by $\left(q,t\right)$ to ease the notation. Specifically, for any finite set of bidders $N$, we have \(q\left(\theta_N\right) = q^{\left(\Gamma^{n},S^{n}\right)}\left(\theta_N\right)\) and \(t\left(\theta_N\right) = t^{\left(\Gamma^{n},S^{n}\right)}\left(\theta_N\right)\) for all \(\theta_N \in \Theta^n\), where \(n = \left|N\right| \in \mathbb{N}\).

Throughout the remainder of the text, we will use \((q, t)\) to refer to the auction and will not distinguish between auctions that are equivalent (Definition~\ref{def:equivalent}). We focus on standard direct auctions. For example, we reinterpret the classic selling mechanisms in our setting as follows.
\begin{enumerate}
    \item The \emph{first-price} auction: Run the first-price (or Dutch) auction for all $ \left|N\right| \in \mathbb{N}$.
    \item The \emph{second-price} auction: Run the second-price (or English) auction for all $\left|N\right| \in \mathbb{N}$.
    \item The \emph{posted-price} mechanism: Run the same posted-price mechanism for all $\left|N\right| \in \mathbb{N}$.\footnote{The posted-price mechanism is a take-it-or-leave-it offer. Ties are broken uniformly at random among bidders who accept the posted price.}    
\end{enumerate}

Our definition of an auction differs from the standard one, because it is actually a collection of standard auctions, each corresponding to a different finite number of bidders. For example, the first-price auction in our setting is actually a collection of first-price auctions in the standard setting, each corresponding to a different finite number of bidders $\left|N\right|$. In general, we allow the outcome rule to vary with $\left|N\right|$. For instance, we can run the first-price auction when there are no more than five bidders, and switch to the second-price auction otherwise. We also allow for auctions with a cap on the number of bidders, by explicitly taking into account the preselection stage (Example~\ref{ex:collusion}).\footnote{In auctions that impose a cap on the number of participants, a preselection stage becomes necessary whenever the number of interested buyers exceeds the specified threshold. This is consistent with our framework, where for any finite number of bidders, the designer initiates the preselection stage, if necessary, prior to conducting the main auction. An extensive-form game can explicitly integrate the preselection stage into the auction's game structure. Such dynamic auctions are considered in this paper, though we restrict attention to their induced outcome rules. We show in Section~\ref{subsec:extensive-forms} that this restriction is without loss, and provide further details on the implementation via extensive forms.}

Now that the auction has been defined, we can outline the complete game, taking into account shill bidding.
\begin{enumerate}
\item The auctioneer publicly announces an auction $\left(q,t\right)$.
\item A set of buyers $B\in\mathcal{B}$ is drawn, which is not revealed
to any buyer or the seller.
\item Buyers learn their types privately and independently.
\item Each buyer $i\in B$ and the seller simultaneously decide how many identities to
employ in the auction, i.e., $N_{i}$ and $S$.
\item A standard auction starts with a fixed set of bidders $N=N_{B}\cup S=\left(\cup_{i\in B}N_{i}\right)\cup S$.
\end{enumerate}

At the final step, we do not specify whether participants know how many bidders are in the auction. The number of bidders in the auction is assumed common knowledge in the standard auction theory \citep{vickreyCounterspeculationAuctionsCompetitive1961,myersonOptimalAuctionDesign1981,milgromTheoryAuctionsCompetitive1982}. In our setting, whether the designer discloses this information becomes a design choice. In principle, bidders can participate in the auction without the knowledge of the number of bidders, because the auction format is publicly announced for any given number of bidders. In fact, this is how many real-world auctions operate. At auction houses and online auction platforms, no one really knows the exact number of bidders, and the same auction format is simply fixed for all finite numbers of bidders.\footnote{\label{fn:dark}At auction houses like Christie's or Sotheby's, due to the decline in physical attendance, bidders do not know exactly how many others are participating online or by phone (Footnote~\ref{fn:empty})\@. In eBay auctions, bidders do not know how many ``snipers'' are watching the auction closely behind the computer's screen \citep{rothLastMinuteBiddingRules2002}. In Google's advertising auctions, advertisers are unaware of how many others are competing in the auction.}

\begin{definition}\label{def:dark}
A \emph{lit} auction is an auction where the number of bidders $\left|N\right|$ is common knowledge. A \emph{dark} auction is an auction where the number of bidders $\left|N\right|$ is concealed.
\end{definition}

Throughout the paper, we do not consider how buyers update their beliefs about the number of buyers or whether the seller shill bids in the auction. This is justified by our focus on auction formats that discourage shill bidding. In such auctions, shill bidding does not arise in equilibrium, and buyers rationally believe every other bidder is a distinct buyer. Because shill bidding is undetectable as assumed in this paper, when we examine the incentives for shill bidding in auctions designed to deter such behavior, buyers should maintain the belief that all other bidders are distinct buyers. Consequently, in lit auctions, where the number of bidders is common knowledge, buyers take it for granted that the number of buyers equals the number of bidders.\footnote{For lit auctions, if $\mathrm{Pr}[|\widetilde{B}|=k]=0$ for some $k\in\mathbb{N}$, then buyers do not interpret the presence of $k$ bidders as evidence of shill bidding. Those are zero-probability events and are treated as off-path information sets under weak perfect Bayesian equilibrium. In such cases, we assume buyers still hold the belief that every other bidder is a distinct buyer, which is the most favorable scenario for shill bidding.} In dark auctions, where the number of bidders is concealed, they act under the belief that the unobserved number of bidders always equals the number of buyers.

In the following, we focus on auctions that are \emph{ex-post individually rational}.

\begin{definition}
An auction is \emph{ex-post individually rational} if, for all finite sets of bidders \(N\), all $i\in N$, and all $\theta_{N}\in\Theta^{\left|N\right|}$,
\[
\theta_{i}q_{i}\left(\theta_{N}\right)-t_{i}\left(\theta_{N}\right)\geq0\text{.}
\]
\end{definition}

The revenue of the auction is the sum of the payments from all buyers.

\begin{definition}\label{def:opt}
An auction is \emph{optimal} if it maximizes expected revenue subject to ex-post individual rationality.
\end{definition}

A lit auction is optimal if it is optimal for any finite number of bidders, whereas an optimal dark auction need not satisfy this property due to the concealment of the number of bidders (Section~\ref{subsec:dark-auctions})\@. \citet{myersonOptimalAuctionDesign1981} characterizes the optimal (lit) auction's allocation rule in terms of the virtual valuation \(v(\theta_i) = \theta_i - \frac{1-F\left(\theta_i\right)}{f\left(\theta_i\right)}\).

\begin{definition}
The type distribution is \emph{regular}, if the virtual valuation $v\left(\cdot\right)$ is strictly increasing.\label{def:regular}
\end{definition}

We assume the type distribution is regular. The optimal reserve price \(\rho^*\) is determined by
\[
\rho^{*}=\min\left\{ \left.\theta_i \in \left[0,1\right] \right|  v\left(\theta_i\right)\geq0\right\}.
\]

\begin{definition}
An auction is \emph{efficient} if it is efficient for all finite sets of bidders \(N\), i.e., \(\sum_{i \in W\left(\theta_N\right)} q_i(\theta_N) = 1\) for all $\theta_{N}\in\Theta^{\left|N\right|}$, where \(W\left(\theta_N\right) = \left\{\left. i \in N\right|\theta_i = \max\left\{\theta_N\right\}\right\}\).
\end{definition}

Note that efficiency does not pin down the payoffs of buyers of the lowest type. Instead of assuming that buyers of the lowest type obtain zero payoffs, we focus on efficient auctions that maximize expected revenue, which is referred to as \emph{optimal efficiency}.

\begin{definition}\label{def:opt-eff}
An auction is \emph{optimally efficient} if it maximizes expected revenue subject to ex-post individual rationality and efficiency.
\end{definition}

In the following section, we explore identity compatibility in lit auctions, where the number of bidders is common knowledge. When the context is clear, we simply refer to a lit auction as an auction.  We will turn to dark auctions in Section~\ref{sec:dark}.

\section{Identity Compatibility in Lit Auctions}\label{sec:lit}

Identity compatibility requires that the seller should have no incentive to participate in the auction under the disguise of one or more bidders, and buyers should find it optimal to use only their true identities, implying $B=N_{B}=N$. To conduct a detailed analysis, we begin by examining the incentives of the seller and buyers separately. Subsequently, in Section~\ref{subsec:identity compatibility-both}, we consider identity compatibility for both parties jointly. Throughout, when referring to an auction as \(\left(q,t\right)\), we implicitly restrict attention to direct auctions. As discussed in Section~\ref{subsec:extensive-forms}, this restriction entails no loss of generality.

\subsection{Buyer Identity Compatibility}

In this section, we first define the concept of Bayesian buyer identity compatibility and illustrate how shill bidding influences others' bidding strategies while expanding the strategy space for the deviating buyer. We then introduce a natural strengthening---ex-post buyer identity compatibility---which leads to our first main result: a characterization of the second-price auction in terms of buyer identity compatibility.

\begin{definition}\label{def:Bayesian-buyer}
A lit auction $\left(q,t\right)$ is \emph{Bayesian buyer identity-compatible}
if, for all $i\in\mathbb{B}$, all $\theta_{i}\in\Theta$, all $\left|N_{i}\right|\in\mathbb{N}$,
and all $\hat{\theta}^1_{N_{i}}, \hat{\theta}^2_{N_{i}},\dots\in\Theta^{\left|N_{i}\right|}$,
\begin{alignat*}{1}
 & \mathbb{E}_{B \in \mathcal{B}}\left[\left.\mathbb{E}_{\theta_{-i} \in \Theta^{\left|B\right|-1}}\left[\theta_{i}q_{i}\left(\theta_{i},\theta_{-i}\right)-t_{i}\left(\theta_{i},\theta_{-i}\right)\right]\right|i\in B\right]\\
\geq & \mathbb{E}_{B \in \mathcal{B}}\left[\left.\mathbb{E}_{\theta_{-i} \in \Theta^{\left|B\right|-1}}\left[\sum_{j\in N_{i}}\left[\theta_{i}q_{j}\left(\hat{\theta}^{\left|B\right|}_{N_i}, \theta_{-i}\right)-t_{j}\left(\hat{\theta}^{\left|B\right|}_{N_i}, \theta_{-i}\right)\right]\right]\right|i\in B\right].
\end{alignat*}
\end{definition}

Bayesian buyer identity compatibility ensures that no buyer $i\in B$ can gain an advantage by unilaterally using multiple identities $j\in N_{i}$ in expectation. The use of iterated expectations arises because, at the time buyer $i$ decides how many identities to employ in the auction, \(i\) does not yet know the total number of buyers. In others words, buyer $i$ must commit to a specific number of identities, for each realization of the set of buyers. Given this commitment, their bidding strategies can still depend on the number of buyers \(\left|B\right|\).

Notice that when $\left|N_{i}\right|=1$, i.e., $N_{i}=\left\{ i\right\}$, the above inequality is guaranteed by lit Bayesian incentive compatibility. However, when $\left|N_{i}\right|>1$, buyer $i$ controls multiple bidders, creating opportunities for auction manipulation through shill bidding. This manipulation operates through two channels. First, for each realized set of buyers $B$, every buyer $j\in B\backslash\left\{i\right\}$ believes there are $\left|N\right|$ buyers in the auction, while only buyer $i$ knows the actual number of buyers in the auction is $\left|B\right|$, which is less than \(\left|N\right|\). Consequently, buyer $j$'s bidding strategies are affected by the presence of shill bidders controlled by \(i\). Second, the strategy space for buyer \(i\) is expanded by leveraging multiple identities. Buyer \(i\) can formulate bidding strategies represented by the type profile $\hat{\theta}^{\left|B\right|}_{N_{i}}\in\Theta^{\left|N_{i}\right|}$.  The bidders controlled by buyer \(i\) can collude, effectively acting as a bidding ring or cartel.\footnote{In general, cartel members differ from identities. A buyer has full control over identities, whereas cartel members may cheat on each other. Moreover, cartels consist of distinct buyers, while identities are merely duplicates of the same buyer. For the literature on collusion in auctions, see \citet{robinsonCollusionChoiceAuction1985}, \citet{grahamCollusiveBidderBehavior1987}, \citet{mailathCollusionSecondPrice1991}, \citet{mcafeeBiddingRings1992}, and \citet{chassangCollusionAuctionsConstrained2019}.\label{fn:collusion}} Truthful reporting of types is guaranteed under lit Bayesian incentive compatibility in the absence of shill bidding. However, in the presence of shill bidding, there are double deviations, where buyers misreport both their types and their identities. As a result, it may no longer be optimal for buyer \(i\) to bid truthfully, i.e., $\hat{\theta}^{\left|B\right|}_{j}=\theta_{i}$ for all $j\in N_{i}$.

In the following two examples, we will illustrate how shill bidding impacts the auction through the two channels discussed above. The first example focuses on how it influences others' bidding strategies, while the second highlights its role in expanding the strategy space.

\begin{example}\label{ex:first-price}
    Consider the first-price auction. It is optimal for buyer $i$ to bid as if $\hat{\theta}^{\left|B\right|}_{j}=\hat{\theta}^{\left|B\right|}_{j'}$ for all $j,j'\in N_{i}$, since submitting a losing bid provides no advantage. The only potential benefit of using multiple identities comes from the increased chance of winning in case of a tie. However, in the continuous type space, the probability of a tie is zero. Thus, expanding the strategy space through multiple identities offers no advantage in the first-price auction. Moreover, this manipulation carries a cost: it influences others' bidding strategies. Buyers other than \(i\) perceive a more competitive first-price auction with a greater number of bidders. The presence of additional bidders drives up competition, leading other buyers to bid higher (Footnote~\ref{fn:first-price}). As a result, securing a win becomes more expensive for buyer \(i\). Hence, buyers have no incentive to use multiple identities in the first-price auction, as it would only increase their winning payment without improving their chances of winning.
\end{example}

\begin{fact}\label{fact:first-price}
    The first-price auction is Bayesian buyer identity-compatible.
\end{fact}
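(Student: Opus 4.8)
The plan is to verify Definition~\ref{def:Bayesian-buyer} directly for the first-price auction by showing that any deviation to $|N_i|>1$ identities is weakly dominated by honest participation with a single identity. The argument formalizes the two observations in Example~\ref{ex:first-price}: using multiple identities cannot increase buyer $i$'s winning probability (in the continuous type space), and it weakly raises the price $i$ must pay to win, because rivals bid against a larger perceived field.

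First I would fix a buyer $i$, a type $\theta_i$, a choice $|N_i|=m$, and deviating bidding profiles $\hat\theta^{k}_{N_i}$ for each realized buyer count $k$. Since the other $k-1$ buyers each use one identity and bid according to the symmetric equilibrium bidding function $\beta_{m+k-1}$ of the $(m+k-1)$-bidder first-price auction, and since in the first-price auction only the highest bid wins and is paid, the entire deviation by $i$ is payoff-equivalent to submitting the single bid $b \coloneqq \max_{j\in N_i}\beta_{m+k-1}(\hat\theta^{k}_j)$: the extra identities only matter through ties, which occur with probability zero under the continuous full-support density $f$. So for each $k$ I reduce buyer $i$'s problem to choosing a single bid $b$ against $k-1$ rivals who play $\beta_{m+k-1}$. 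The resulting conditional interim payoff is $\big(\theta_i - \beta_{m+k-1}^{-1}\text{-related price}\big)$ times the probability that $b$ beats $k-1$ i.i.d.\ draws transformed through $\beta_{m+k-1}$; concretely, writing $b=\beta_{m+k-1}(y)$ for the "pretend type" $y$, the payoff is $\big(\theta_i q(y,k) - \text{(first-price payment at } y)\big)$ where the relevant winning probability is $F(y)^{k-1}$ and the payment mirrors the standard first-price formula with $m+k-1$ in the exponent.

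Next I would invoke the single-bidder optimality that lit Bayesian incentive compatibility already gives us. For a fixed number of bidders $n=m+k-1$, the equilibrium bidding function $\beta_n$ is a best response among single bids, so the conditionally optimal "pretend type" satisfies a pointwise maximization; the key monotonicity fact is that, holding $\theta_i$ fixed, the buyer's best-response payoff from facing $n=m+k-1$ rivals-equivalent competitors is weakly decreasing in $n$ (more competition lowers the surplus a bidder of type $\theta_i$ can extract — this is exactly the force in Footnote~\ref{fn:first-price}). Hence for each realization $k$, buyer $i$'s conditional deviation payoff with $m$ identities is at most the conditional equilibrium payoff with a single identity, i.e.\ with $n=k-1$. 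Taking the iterated expectation over $B$ conditional on $i\in B$ — which is legitimate because the inequality holds realization-by-realization in $k=|B|$ — yields precisely the inequality in Definition~\ref{def:Bayesian-buyer}, establishing Bayesian buyer identity compatibility.

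The main obstacle is the monotonicity-in-competition step: I must argue cleanly that the interim best-response value $V_n(\theta_i) \coloneqq \max_{y}\big[\theta_i F(y)^{n} - (\text{first-price payment at pretend type } y \text{ against } n \text{ rivals})\big]$ satisfies $V_n(\theta_i) \le V_{n-1}(\theta_i)$, without circularity relative to the very equilibrium $\beta$ I am using. One safe route is to bypass $\beta$ altogether: parametrize the deviating bid directly as a value $b\in[0,1]$, note the rivals' highest competing bid has CDF $G_{n}(b) = \big(F(\beta_{n}^{-1}(b))\big)^{n}$ which is the highest of $n$ i.i.d.\ draws from the equilibrium-bid distribution, and observe $G_n \le G_{n-1}$ pointwise (the max of more i.i.d.\ draws first-order stochastically dominates), so for every fixed bid $b$ the winning probability $G_n(b)$ is weakly smaller while the price paid conditional on winning is $b$ either way — giving $\sup_b (\theta_i - b)G_n(b) \le \sup_b (\theta_i - b)G_{n-1}(b)$ immediately, hence $V_n\le V_{n-1}$. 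Care is also needed at zero-probability buyer counts (Footnote on off-path beliefs) and in checking that the reduction to a single bid genuinely captures all multi-identity strategies including asymmetric ones, but both are handled by the tie-probability-zero argument and anonymity.
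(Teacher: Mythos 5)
Your overall strategy is the one the paper itself relies on (its only argument for Fact~\ref{fact:first-price} is the informal discussion in Example~\ref{ex:first-price}): collapse the multi-identity deviation to a single effective bid because ties have probability zero, note that the $k-1$ genuine rivals now bid according to $\beta_{m+k-1}$ rather than $\beta_{k}$, conclude that the deviator's best-response value can only fall, and aggregate the realization-by-realization inequality over $|B|=k$. You also correctly write the deviation payoff as $\left(\theta_i-\beta_{m+k-1}(y)\right)F(y)^{k-1}$: the winning probability keeps the exponent $k-1$, while only the payment carries the index $m+k-1$.

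The gap is in how you then formalize the comparison. Your $V_n(\theta_i)=\max_y\bigl[\theta_i F(y)^{n}-(\text{payment against }n\text{ rivals})\bigr]$ and your $G_n(b)=\bigl(F(\beta_n^{-1}(b))\bigr)^{n}$ both tie the exponent of $F$ to the same index that governs the bidding function. But a shill identity is not an extra competitor for its owner: under the deviation the exponent stays at $k-1$ and only the rivals' bidding function is inflated. Consequently the first link of your chain is reversed --- the deviation payoff $\max_y(\theta_i-\beta_{m+k-1}(y))F(y)^{k-1}$ is weakly \emph{larger} than $V_{m+k-2}(\theta_i)=\max_y(\theta_i-\beta_{m+k-1}(y))F(y)^{m+k-2}$ --- so ``deviation payoff $\le V_{m+k-2}\le\dots\le V_{k-1}$'' does not follow, and neither $V_n\le V_{n-1}$ nor $G_n\le G_{n-1}$ is the inequality you need. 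The correct comparison holds the exponent fixed at $k-1$ and varies only the bidding-function index: since $\beta_n(y)=y-\int_0^y\left(F(x)/F(y)\right)^{n-1}dx$ is increasing in $n$ (the same monotonicity the paper uses in Appendix~\ref{proof:lit-opt}), we have $\beta_{m+k-1}(y)\ge\beta_{k}(y)$ pointwise, hence $(\theta_i-\beta_{m+k-1}(y))F(y)^{k-1}\le(\theta_i-\beta_{k}(y))F(y)^{k-1}$ for every $y$, and the supremum of the right-hand side over $y$ equals the equilibrium payoff by lit Bayesian incentive compatibility of the $k$-bidder first-price auction. With that one-step replacement your proof closes; the first-order-stochastic-dominance machinery comparing different numbers of draws is unnecessary.
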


\begin{example}\label{ex:collusion}
    Consider the following two-stage auction analyzed in \citet{seibelCollusionExclusionPublic2023}.\footnote{This is a dynamic auction. As discussed in Section~\ref{subsec:extensive-forms}, all results hold under dynamic auctions, although the focus is primarily on the induced outcome rule.} In the preselection stage, bidders submit sealed bids, and a pre-announced, limited number of bidders with the highest bids advance. In the main stage, an English auction takes place, with the highest bid from the preselection stage serving as the starting price.\footnote{Because the English auction follows the first-price auction, rather than preceding it, this two-stage auction is not an Anglo-Dutch auction \citep{klempererAuctionsAlmostCommon1998}.}
    
    Notice that the auction format is fixed in the main stage as an English auction with a fixed number of slots. As a result, using multiple identities cannot influence others' bidding strategies in the main stage. Instead, the advantage of employing multiple identities lies in the expanded strategy space. The bidders controlled by a firm essentially form a cartel. This cartel can submit tying high bids to try to occupy all slots in the main stage, thereby kicking out competitors.\footnote{Competitors will place higher bids in the preselection stage due to the presence of more bidders. However, this effect is limited since the preselection stage influences the winning payment only by setting the starting price for the main stage.} Only rival firms who outbid the cartel can then proceed to the main stage. If the cartel successfully crowds out all competitors from participating in the main stage, the English auction will conclude at the reserve price, with all but one bidder dropping out immediately.\footnote{The buyer's immediate withdrawal during the main stage represents an off-path deviation, which is not profitable when using a single identity, as such a withdrawal results in losing the auction. However, by employing multiple identities simultaneously, the buyer can crowd out competitors and secure the auction by withdrawing all but one identity at once. See Section~\ref{subsec:extensive-forms} for further discussion of off-path deviations.\label{fn:off-path}} This allows the firm controlling the cartel to win at a low cost. Using multiple identities is profitable because the exclusion of competitors forestalls price competition in the main stage.
    
    \citet{seibelCollusionExclusionPublic2023} develop these theoretical predictions, and empirically validate them by using administrative data on public procurement auctions in Slovakia, along with a court-confirmed cartel case. Their findings demonstrate that this two-stage auction fails Bayesian buyer identity compatibility in both theory and practice. In general, multi-stage auctions that incorporate a preselection stage may be susceptible to shill bidding by buyers, as they can use multiple identities to crowd out competitors.

\end{example}

A natural strengthening of Bayesian buyer identity compatibility is the following concept:

\begin{definition}\label{def:ex-post-buyer}
A lit auction $\left(q,t\right)$ is \emph{ex-post buyer identity-compatible}
if, for all $i\in\mathbb{B}$, all $\theta_{i}\in\Theta$, and all
$\left|N_{i}\right|\in\mathbb{N}$,
\begin{alignat*}{1}
 & \mathbb{E}_{B \in \mathcal{B}}\left[\left.\mathbb{E}_{\theta_{-i} \in\Theta^{\left|B\right|-1}}\left[\theta_{i}q_{i}\left(\theta_{i},\theta_{-i}\right)-t_{i}\left(\theta_{i},\theta_{-i}\right)\right]\right|i\in B\right]\\
\geq & \mathbb{E}_{B \in \mathcal{B}}\left[\left.\mathbb{E}_{\theta_{-i} \in\Theta^{\left|B\right|-1}}\left[\sup_{\hat{\theta}_{N_{i}}\in\Theta^{|N_{i}|}}\sum_{j\in N_{i}}\left[\theta_{i}q_{j}\left(\hat{\theta}_{N_i}, \theta_{-i}\right)-t_{j}\left(\hat{\theta}_{N_i}, \theta_{-i}\right)\right]\right]\right|i\in B\right].
\end{alignat*}
\end{definition}

Ex-post buyer identity compatibility ensures that no buyer $i\in B$ can gain an advantage by unilaterally using multiple identities $j\in N_{i}$ in any case, although $i$ has to commit to the number of identities in advance.
When $\left|N_{i}\right|=1$, i.e., $N_{i}=\left\{ i\right\}$, the above inequality reduces to
\begin{alignat*}{1}
 & \mathbb{E}_{B \in \mathcal{B}}\left[\left.\mathbb{E}_{\theta_{-i} \in\Theta^{\left|B\right|-1}}\left[\theta_{i}q_{i}\left(\theta_{i},\theta_{-i}\right)-t_{i}\left(\theta_{i},\theta_{-i}\right)\right]\right|i\in B\right]\\
\geq & \mathbb{E}_{B \in \mathcal{B}}\left[\left.\mathbb{E}_{\theta_{-i} \in\Theta^{\left|B\right|-1}}\left[\sup_{\hat{\theta}_{i}\in\Theta}\left[\theta_{i}q_{i}\left(\hat{\theta}_{i},\theta_{-i}\right)-t_{i}\left(\hat{\theta}_{i},\theta_{-i}\right)\right]\right]\right|i\in B\right].
\end{alignat*}
Notice that for all $\left|B\right|\in\mathbb{N}$ and all $\theta_{-i}\in\Theta^{\left|B\right|-1}$, we always have
\[
\sup_{\hat{\theta}_{i}\in\Theta}\left[\theta_{i}q_{i}\left(\hat{\theta}_{i},\theta_{-i}\right)-t_{i}\left(\hat{\theta}_{i},\theta_{-i}\right)\right]\geq\theta_{i}q_{i}\left(\theta_{i},\theta_{-i}\right)-t_{i}\left(\theta_{i},\theta_{-i}\right).
\]
Hence, to achieve ex-post buyer identity compatibility when $\left|N_{i}\right|=1$, we must have 
\[
\theta_{i}q_{i}\left(\theta_{i},\theta_{-i}\right)-t_{i}\left(\theta_{i},\theta_{-i}\right)=\sup_{\hat{\theta}_{i}\in\Theta}\left[\theta_{i}q_{i}\left(\hat{\theta}_{i},\theta_{-i}\right)-t_{i}\left(\hat{\theta}_{i},\theta_{-i}\right)\right],
\]
which implies strategy-proofness.\footnote{In this paper, a strategy-proof auction is defined as a collection of standard strategy-proof auctions for all finite numbers of bidders. For direct auctions, strategy-proofness is equivalent to ex-post incentive compatibility.}

\begin{lemma}
Ex-post buyer identity compatibility implies strategy-proofness.\label{lem:bip-sp}
\end{lemma}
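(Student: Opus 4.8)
The plan is to specialize Definition~\ref{def:ex-post-buyer} to the case of a single identity, $|N_i|=1$, and to observe that the resulting inequality between expectations can hold only if the two integrands agree pointwise---and pointwise agreement here is exactly ex-post incentive compatibility. This is essentially the computation already displayed in the paragraph preceding the lemma; the proof only needs to make the passage from an inequality between expectations to a pointwise identity explicit.

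Concretely, I would fix $i\in\mathbb{B}$ and $\theta_i\in\Theta$, put $N_i=\{i\}$, and write Definition~\ref{def:ex-post-buyer} in the form
\begin{alignat*}{1}
 & \mathbb{E}_{B\in\mathcal{B}}\left[\left.\mathbb{E}_{\theta_{-i}\in\Theta^{|B|-1}}\left[\theta_i q_i(\theta_i,\theta_{-i})-t_i(\theta_i,\theta_{-i})\right]\right|i\in B\right] \\
\geq & \mathbb{E}_{B\in\mathcal{B}}\left[\left.\mathbb{E}_{\theta_{-i}\in\Theta^{|B|-1}}\left[\sup_{\hat\theta_i\in\Theta}\left(\theta_i q_i(\hat\theta_i,\theta_{-i})-t_i(\hat\theta_i,\theta_{-i})\right)\right]\right|i\in B\right].
\end{alignat*}
For every realized $B$ with $i\in B$ and every $\theta_{-i}\in\Theta^{|B|-1}$ one trivially has $\sup_{\hat\theta_i\in\Theta}\left(\theta_i q_i(\hat\theta_i,\theta_{-i})-t_i(\hat\theta_i,\theta_{-i})\right)\geq\theta_i q_i(\theta_i,\theta_{-i})-t_i(\theta_i,\theta_{-i})$, so the right-hand side of the display is at least the left-hand side. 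Combined with the displayed inequality, the two sides coincide; since every term in this nonnegatively-weighted average obeys the same one-sided inequality, each term must in fact be an equality. Hence, for every $n\in\mathbb{N}$ with $p(n)>0$ (which includes $n=1$ by the assumption $\mathrm{Pr}[|\widetilde B|=1]>0$) and almost every $\theta_{-i}\in\Theta^{n-1}$,
\[
\theta_i q_i(\theta_i,\theta_{-i})-t_i(\theta_i,\theta_{-i})=\sup_{\hat\theta_i\in\Theta}\left(\theta_i q_i(\hat\theta_i,\theta_{-i})-t_i(\hat\theta_i,\theta_{-i})\right),
\]
and in particular $\theta_i q_i(\theta_i,\theta_{-i})-t_i(\theta_i,\theta_{-i})\geq\theta_i q_i(\hat\theta_i,\theta_{-i})-t_i(\hat\theta_i,\theta_{-i})$ for all $\hat\theta_i\in\Theta$. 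As $i$ and $\theta_i$ were arbitrary, this is precisely ex-post incentive compatibility of the $n$-bidder auction; for direct auctions this coincides with strategy-proofness, which is the assertion.

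The only point needing a little care is the scope of ``for all finite numbers of bidders'': the single-identity deviation constrains the $n$-bidder outcome rule only for $n$ in the support of $|\widetilde B|$, so strategy-proofness should be read relative to the outcome-equivalence of Definition~\ref{def:equivalent}, under which $n$-bidder auctions with $p(n)=0$ are payoff-irrelevant and unconstrained. (One could instead try to reach such $n$ by letting buyer $i$ employ $|N_i|=m>1$ identities, $m-1$ of which submit ``inactive'' bids, but that requires additional structure on the auction not assumed at this level of generality.) I expect this to be the sole subtlety; the core of the argument---a one-sided inequality between expectations combined with the reverse pointwise inequality, which together force equality---is immediate.
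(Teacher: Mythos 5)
Your proposal is correct and follows essentially the same route as the paper, which proves the lemma in the paragraph preceding its statement: specialize Definition~\ref{def:ex-post-buyer} to $\left|N_{i}\right|=1$, note that the supremum pointwise dominates the truthful payoff, and conclude that the inequality between expectations forces pointwise equality, i.e., ex-post incentive compatibility. Your added remark about restricting attention to $n$ in the support of $|\widetilde{B}|$ (and reading strategy-proofness up to the outcome equivalence of Definition~\ref{def:equivalent}) is a reasonable clarification of a point the paper leaves implicit, but it does not change the argument.
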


The converse is generally not true.\footnote{As noted in Footnote \ref{fn:collusion}, using multiple identities is different from forming a cartel. Likewise, group-strategy-proofness and ex-post buyer identity compatibility are conceptually distinct, with neither being inherently stronger than the other. For instance, while the posted-price mechanism is group-strategy-proof, it is not ex-post buyer identity-compatible. Conversely, as we will see next, the second-price auction is ex-post buyer identity-compatible, but it fails to be (strongly) group-strategy-proof.} For instance, while the posted-price mechanism is strategy-proof, it is not even Bayesian buyer identity-compatible. This is because buyers can always exploit multiple identities to increase the chance of winning in case of a tie. However, such benefit does not emerge in the second-price auction, because buyers obtain zero payoffs in case of a tie. In fact, the second-price auction is not only ex-post buyer identity-compatible but also the unique auction that simultaneously guarantees optimality (or optimal efficiency).\footnote{Since we do not differentiate between equivalent auctions (Definition~\ref{def:equivalent}) and study the continuous type space, uniqueness is defined up to outcome equivalence and a set of measure zero.}

\begin{theorem}
    \hfill
    \begin{itemize}
        \item The second-price auction with the reserve price $\rho^{*}$ is the unique optimal auction that is ex-post buyer identity-compatible.
        \item The second-price auction is the unique optimally efficient auction that is ex-post buyer identity-compatible. \label{thm:ex-post-bip}
    \end{itemize}
\end{theorem}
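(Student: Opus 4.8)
The plan is to prove both bullets by pairing a \emph{sufficiency} direction (the second-price auction, with reserve $\rho^{*}$ or without, is ex-post buyer identity-compatible and optimal / optimally efficient) with a \emph{uniqueness} direction that invokes Lemma~\ref{lem:bip-sp} to collapse the problem to a Myerson-style revenue-equivalence computation.

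For sufficiency I would first note that the second-price auction with reserve $\rho^{*}$ (resp.\ without reserve) is strategy-proof, ex-post individually rational, and implements Myerson's optimal allocation (resp.\ the efficient allocation with zero rent at the bottom type), hence is optimal (resp.\ optimally efficient). The real work is ex-post buyer identity compatibility. Fix a buyer $i$ of type $\theta_{i}$ controlling $|N_{i}|=n$ identities, and condition on the realized profile $\theta_{-i}$ of the $|B|-1$ other buyers, who bid truthfully by strategy-proofness; set $M=\max\theta_{-i}$. I would show buyer $i$'s best joint report activates a single identity bidding $\theta_{i}$ with all others bidding $0$: a losing own-bid is inert; a tie among $i$'s own identities pushes the clearing price up to that common bid and leaves zero surplus; and any own-bid at or above $\max\{\rho^{*},M\}$ only weakly raises the price paid on the winning event. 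With one active identity the $n-1$ zero-bidders never move the winner or the price, so the $\Gamma^{n+|B|-1}$ outcome coincides exactly with the single-identity outcome, giving $i$ the payoff $\theta_{i}-\max\{\rho^{*},M\}$ on the event $\{\theta_{i}>M\}$ (and zero otherwise, with zero also being optimal when $\theta_{i}<\rho^{*}$). Hence the supremum on the right-hand side of Definition~\ref{def:ex-post-buyer} equals the left-hand side realization by realization; taking $\mathbb{E}_{B}\mathbb{E}_{\theta_{-i}}$ gives the required inequality with equality. This is also the step that exposes why the posted-price mechanism fails: there a tie among one's own identities pays below value and is strictly profitable.

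For uniqueness, let $(q,t)$ be any ex-post buyer identity-compatible auction. By Lemma~\ref{lem:bip-sp} it is strategy-proof, so for each number of bidders $n$ the map $q_{i}(\cdot,\theta_{-i})$ is nondecreasing and, by the envelope formula together with ex-post individual rationality, $t_{i}(\theta_{N})=\theta_{i}q_{i}(\theta_{N})-\int_{0}^{\theta_{i}}q_{i}(s,\theta_{-i})\,ds-u_{i}(0;\theta_{-i})$ with $u_{i}(0;\theta_{-i})\ge 0$. Writing the contribution of each fixed bidder count to expected revenue as $\sum_{i}\mathbb{E}[v(\theta_{i})q_{i}(\theta_{N})]-\sum_{i}\mathbb{E}_{\theta_{-i}}[u_{i}(0;\theta_{-i})]$ and using the pointwise bound $\sum_{i}v(\theta_{i})q_{i}(\theta_{N})\le\max_{i}v(\theta_{i})^{+}$, which by regularity is tight only at the allocation that hands the item to the unique highest type whenever $v>0$: if $(q,t)$ is optimal, both terms must attain their bounds, forcing $q$ to be Myerson's allocation a.e.\ and $u_{i}(0;\theta_{-i})=0$ a.e.; substituting into the payment formula yields $t_{i}(\theta_{N})=\max\{\rho^{*},\max_{j\ne i}\theta_{j}\}$ for the winner and $0$ otherwise, i.e., the second-price auction with reserve $\rho^{*}$. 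If instead $(q,t)$ is optimally efficient, efficiency already pins $q$ to the efficient allocation a.e., so maximizing revenue forces only $u_{i}(0;\theta_{-i})=0$ a.e., and the payment formula then gives $t_{i}(\theta_{N})=\max_{j\ne i}\theta_{j}$ for the winner, i.e., the second-price auction. Uniqueness is up to outcome equivalence and a null set, matching the paper's convention.

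I expect the main obstacle to be the sufficiency step: the case analysis showing that no bundle of identities can out-earn a single active identity in the second-price format. The delicate points are that a deviating buyer's extra identities live in a strictly larger-bidder-count game than the one played honestly, so the zero-bidders must be shown genuinely inert rather than argued away by cross-game revenue equivalence; the handling of the reserve-price boundary and of ties (probability zero in types, yet conceptually the crux that distinguishes the second-price auction from the posted-price mechanism); and the a.e.\ qualifications of the continuous-type model. Once sufficiency is in hand, the uniqueness direction is essentially Myerson's characterization plus revenue equivalence for strategy-proof auctions, requiring no further appeal to the duplication constraint beyond Lemma~\ref{lem:bip-sp}.
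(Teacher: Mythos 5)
Your proposal is correct and follows essentially the same route as the paper: a direct, realization-by-realization verification that in the second-price format no bundle of identities beats a single truthful identity (the paper organizes this as a three-way case split on $\theta_{i}$ versus $\max\{\theta_{-i},\rho\}$, you organize it as identifying the optimal multi-identity report, but the content is identical), followed by uniqueness via Lemma~\ref{lem:bip-sp}, the envelope formula, and ex-post individual rationality pinning down $u_{i}(0,\theta_{-i})=0$. The only nit is that a self-tie at a common bid $b<\theta_{i}$ yields surplus $\theta_{i}-b>0$ rather than zero; it is still weakly dominated since $b\geq\max\{\theta_{-i},\rho\}$ on the winning event, so your conclusion stands.
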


Theorem~\ref{thm:ex-post-bip} follows directly from Lemma~\ref{lem:bip-sp}. The proof is relegated to Appendix~\ref{proof:ex-post-bip}. In the continuous type space, the second-price auction is characterized by strategy-proofness and efficiency, as established by \citet{greenCharacterizationSatisfactoryMechanisms1977} and \citet{holmstromGrovesSchemeRestricted1979}. However, this characterization under strategy-proofness does not directly extend to the finite type space. In Section~\ref{subsec:extensive-forms}, we extend Theorem~\ref{thm:ex-post-bip} to the finite type space (Theorem~\ref{thm:ex-post-bip-finite}), demonstrating that the characterization under ex-post buyer identity compatibility is novel, as it applies to both the finite and continuous type spaces.

\subsection{Seller Identity Compatibility}

Now, we apply the analytic approach used for buyers in the previous section to examine the seller's incentives for shill bidding.

\begin{definition}\label{def:Bayesian-seller}
A lit auction $\left(q,t\right)$ is \emph{Bayesian seller identity-compatible}
if, for all $\left|S\right|\in\mathbb{N}$, and all $\theta^1_{S}, \theta^2_{S}, \dots \in\Theta^{\left|S\right|}$,
\[
\mathbb{E}_{B \in \mathcal{B}}\left[\mathbb{E}_{\theta_{B} \in\Theta^{\left|B\right|}}\left[\sum_{i\in B}t_{i}\left(\theta_{B}\right)\right]\right]\geq\mathbb{E}_{B \in \mathcal{B}}\left[\mathbb{E}_{\theta_{B} \in\Theta^{\left|B\right|}}\left[\sum_{i\in B}t_{i}\left(\theta_{B},\theta^{\left|B\right|}_{S}\right)\right]\right].
\]
\end{definition}

Bayesian seller identity compatibility ensures that the seller cannot raise expected revenue by committing to a set of identities $S$. Unlike buyers, who attempt to lower their payments by using multiple identities, the seller aims to push up the payment of the winning buyer by exploiting multiple identities. In particular, the seller avoids winning the auction, as doing so results in no sale. Given the focus on ex-post individually rational auctions, a safe strategy for the seller is to bid as if their type is zero. Therefore, any Bayesian seller identity-compatible auction must pass the following \emph{bidding-zero} test.

\begin{definition}
A lit auction $\left(q,t\right)$ passes the \emph{bidding-zero} test if, for all $\left|S\right|\in\mathbb{N}$,
\[
    \mathbb{E}_{B \in \mathcal{B}}\left[\mathbb{E}_{\theta_{B} \in\Theta^{\left|B\right|}}\left[\sum_{i\in B}t_{i}\left(\theta_{B}\right)\right]\right]\geq\mathbb{E}_{B \in \mathcal{B}}\left[\mathbb{E}_{\theta_{B} \in\Theta^{\left|B\right|}}\left[\sum_{i\in B}t_{i}\left(\theta_{B},\underbrace{0,\ldots,0}_{\left|S\right|\text{ times}}\right)\right]\right].
\]
\end{definition}

The bidding-zero test provides a quick method to determine that an auction is not Bayesian seller identity-compatible if it fails the test. For example, the first-price auction is not Bayesian seller identity-compatible because the seller can artificially inflate the number of bidders in the auction, thus manipulating the perceived competition among buyers. All else being equal, the more bidders in the auction, the higher the winning payment for buyers of any type (see Footnote~\ref{fn:first-price}).

However, bidding zero cannot strictly increase the revenue for the seller in the second-price auction, since the winning payment is determined by the second-highest bid. Therefore, the second-price auction passes the bidding-zero test, and the seller has to submit a positive bid in order to raise the winning payment. This bid effectively acts as a reserve price, since buyers with lower bids cannot win. When the reserve price has already been optimally set at $\rho^{*}$ in the second-price auction, the seller finds it unprofitable to participate, as noted by \citet{izmalkovShillBiddingOptimal2004}.

\begin{proposition}\label{prop:Bayesian-sip}
    The second-price auction with the reserve price $\rho^{*}$ is Bayesian seller identity-compatible.
\end{proposition}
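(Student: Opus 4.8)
The plan is to bound, for every deviation of the seller, the expected payment collected from buyers by the expected payment with no shilling, and in fact to do this term by term after conditioning on the realized number of buyers $|\widetilde{B}|=n$. Fix $|S|=s\ge 1$ and, for each $n\in\mathbb{N}$, a shill-bid profile $\theta_S^{n}\in\Theta^{s}$ that the seller commits to when $n$ buyers realize (in a lit auction the seller infers $n$ from $|N|$ and $|S|$). Since the second-price auction with reserve $\rho^{*}$ is strategy-proof, truthful bidding remains a dominant strategy for each buyer irrespective of the shill bids, so conditional on $|B|=n$ the realized report profile is $(\theta_B,\theta_S^{n})$ and the mechanism runs the $(n+s)$-bidder second-price auction with reserve $\rho^{*}$.

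The first step is a reduction showing that the seller's shilling is payoff-equivalent, for the seller's objective $u_0=\sum_{i\in N_B}t_i$, to raising the effective reserve faced by buyers to $\sigma_n:=\max\{\max\{\theta_S^{n}\},\rho^{*}\}\ge\rho^{*}$. In a second-price auction the sorted bid vector determines the outcome: the highest bidder wins provided their bid is at least the reserve, paying the larger of the second-highest bid and the reserve. Hence among the $s$ shill bids only the largest matters, and a shill bid below $\rho^{*}$ is irrelevant. Conditional on $|B|=n$, writing $\theta_{(1)}\ge\theta_{(2)}\ge\cdots$ for the ordered buyer types: if $\theta_{(1)}>\sigma_n$ the highest-type buyer wins and pays $\max\{\theta_{(2)},\sigma_n\}$; if $\theta_{(1)}<\sigma_n$ a shill is the high bidder and buyers pay nothing; ties occur with probability zero under the continuous distribution, so these cases exhaust the relevant possibilities. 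Because the seller's value for the item is normalized to zero, the event in which a shill wins contributes zero to the seller's revenue, so the seller's expected revenue from buyers conditional on $|B|=n$ equals exactly the revenue that a standard $n$-buyer second-price auction with reserve price $\sigma_n$ would generate.

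The second step is the classical optimal-reserve argument. By Myerson's characterization, the expected revenue of an $n$-buyer second-price auction with reserve $r$ equals $\mathbb{E}\!\left[v(\theta_{(1)})\,\mathbf{1}\{\theta_{(1)}>r\}\right]$, using that the lowest type earns zero information rent (here $\rho^{*}>0$ because $v(0)<0$, so the type-$0$ buyer never wins). Regularity makes $v$ increasing with $v(\rho^{*})=0$, so this expression is maximized at $r=\rho^{*}$: increasing $r$ beyond $\rho^{*}$ only discards terms with $v(\theta_{(1)})\ge 0$. Since $\sigma_n\ge\rho^{*}$, the seller's conditional expected revenue from buyers is at most $\mathbb{E}\!\left[v(\theta_{(1)})\,\mathbf{1}\{\theta_{(1)}>\rho^{*}\}\right]$, which is precisely the revenue with no shilling when $n$ buyers realize. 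Averaging over the realization of $\widetilde{B}$ then yields the inequality in Definition~\ref{def:Bayesian-seller}.

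The main obstacle is making the reduction in the first step precise: one must check that the seller's most general deviation—arbitrarily many shills submitting arbitrary, $n$-contingent bids—collapses to a single effective reserve $\sigma_n\ge\rho^{*}$, and in particular that the seller gains nothing from the event in which a shill outbids all buyers, so that the item is retained at value zero and buyers pay nothing. Once this is established, the second step is the textbook fact that $\rho^{*}$ is the revenue-maximizing reserve under regularity \citep{myersonOptimalAuctionDesign1981}, which one could also cite directly rather than rederive.
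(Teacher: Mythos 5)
Your proof is correct and follows essentially the same route as the paper: the reduction of seller shilling to an effective reserve $\sigma_n\ge\rho^{*}$, combined with the fact that $\rho^{*}$ is already the revenue-maximizing reserve under regularity, is exactly the logic the paper invokes (crediting \citet{izmalkovShillBiddingOptimal2004}). The only difference is presentational: the paper's formal proof is stated for the finite type space, where ties occur with positive probability and are handled via a priority order and a citation to \citet{akbarpourCredibleAuctionsTrilemma2020}, whereas you give a self-contained continuous-type derivation via Myerson's revenue formula, which is all the stated proposition requires.
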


The proof is relegated to Online Appendix~\ref{subsec:finite}, where we show that this result generalizes to the finite type space (Proposition~\ref{prop:Bayesian-sip-finite}).\footnote{In the continuous type space, there is no risk of winning the tie for the seller when participating in the auction, which needs to be taken care of in the finite type space.} Not every optimal auction is Bayesian seller identity-compatible. For instance, the first-price auction with the reserve price \(\rho^{*}\) is optimal but not Bayesian seller identity-compatible, since it fails the bidding-zero test.

Although this result is related to Theorem~4 in \citet{akbarpourCredibleAuctionsTrilemma2020}, which establishes the English auction (with reserve) as the unique auction that is credible and strategy-proof, Bayesian seller identity compatibility differs from credibility. In their paper, the auctioneer can manipulate the auction history subject to the constraint that it aligns with the auction format. Crucially, the auctioneer can misrepresent the preferences of buyers, but the auction format is fixed in terms of the number of buyers. Hence, the auctioneer cannot misrepresent the number of buyers by introducing shill bidders.

In contrast, Bayesian seller identity compatibility assumes that the auctioneer commits to a collection of standard auctions, where the auction format may vary with the number of buyers. We conceptualize the auctioneer as a third party with reputational concerns, such as auction houses or online platforms. When considering Bayesian seller identity compatibility, the auctioneer is not manipulating the auction history. Hence, there is no distinction between the English auction and the second-price auction in this paper. The seller, who is distinct from the auctioneer, can only influence the winning payment by participating in the auction. The seller neither has access to the information the auctioneer receives from buyers, nor can they misrepresent buyers' preferences. In other words, the seller's available strategies and information sets are no different from any other buyer's. Although the seller cannot manipulate the auction history as in \citet{akbarpourCredibleAuctionsTrilemma2020}, the seller can distort the auction format by employing multiple identities, which is not considered in their analysis. As a result, neither Bayesian seller identity compatibility nor credibility strictly subsumes the other; rather, the two concepts are complementary. Formally, we state the following fact:

\begin{fact}\label{fact:first-price-seller}
    The first-price auction is credible but not Bayesian seller identity-compatible. The second-price auction with the reserve price $\rho^{*}$ is not credible but Bayesian seller identity-compatible.
\end{fact}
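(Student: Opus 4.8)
The plan is to verify the four assertions separately. Two are immediate: \emph{(i)} the first-price auction is credible — this is established in \citet{akbarpourCredibleAuctionsTrilemma2020}, the point being that pay-as-bid makes the winner's payment equal to the winner's own bid, so no message the auctioneer can send about rival bids, consistent with honest play from each bidder's information set, raises revenue; and \emph{(ii)} the second-price auction with reserve $\rho^{*}$ is Bayesian seller identity-compatible — this is exactly Proposition~\ref{prop:Bayesian-sip}. It remains to show \emph{(iii)} the first-price auction is not Bayesian seller identity-compatible and \emph{(iv)} the second-price auction with reserve $\rho^{*}$ is not credible.

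For \emph{(iii)} I would invoke the bidding-zero test, which every Bayesian seller identity-compatible auction must pass, so it suffices to exhibit $|S|\in\mathbb{N}$ for which adding $|S|$ shill bidders bidding zero strictly raises expected revenue; take $|S|=1$. Let $b_{n}:\Theta\to\mathbb{R}_{\geq 0}$ be the symmetric equilibrium bid function of the $n$-bidder first-price auction, i.e.\ $b_{n}(\theta)=\theta-F(\theta)^{-(n-1)}\int_{0}^{\theta}F(x)^{n-1}\,dx$ with $b_{1}\equiv 0$, which is strictly increasing in $n$ for every $\theta\in(0,1]$ (cf.\ Footnote~\ref{fn:first-price}). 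For a realized set $B$ with $|B|=n$ and types $\theta_{B}$: honestly, the highest-type buyer wins and pays $b_{n}(\max\theta_{B})$; with one shill bidding zero, every buyer perceives $n+1$ bidders, the highest-type buyer still wins (the shill's bid $0$ lies strictly below $b_{n+1}(\theta)$ for every $\theta>0$, and ties have probability zero), and now pays $b_{n+1}(\max\theta_{B})>b_{n}(\max\theta_{B})$, the inequality being strict because $\max\theta_{B}>0$ almost surely under the full-support density. Hence expected revenue strictly increases — already on the positive-probability event $\{|\widetilde{B}|=1\}$, where it jumps from $0$ to $\mathbb{E}[b_{2}(\theta)]>0$ — so the first-price auction fails the bidding-zero test and is not Bayesian seller identity-compatible.

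For \emph{(iv)}, again following \citet{akbarpourCredibleAuctionsTrilemma2020}: in the (static) sealed-bid second-price auction with $n\geq 2$ bidders, after observing bids the auctioneer can report to the highest bidder a ``second-highest bid'' equal to any number in $\bigl(\max\{b_{(2)},\rho^{*}\},\,b_{(1)}\bigr)$, where $b_{(1)}>b_{(2)}$ are the two highest bids; from the winner's information set this report is indistinguishable from honest play, since no bidder observes rival bids. With positive probability $b_{(1)}>\max\{b_{(2)},\rho^{*}\}$ (e.g.\ $\theta_{(1)}$ near $1$, $\theta_{(2)}$ near $0$, and $\rho^{*}<1$), so this safe deviation is strictly profitable; hence the second-price auction with reserve $\rho^{*}$ is not credible. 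The arguments are short, so the main difficulty is one of care rather than depth: for \emph{(iii)} one must confirm the strict monotonicity of $b_{n}$ in $n$ at interior types (so as to rule out any offsetting loss on events with $|\widetilde{B}|\geq 2$, although the $\{|\widetilde{B}|=1\}$ event alone already suffices), and for \emph{(iv)} one must be explicit that the object tested for credibility is the static sealed-bid implementation, with the reserve price merely shrinking, but not eliminating, the set of bid profiles on which the auctioneer's safe overreport is available.
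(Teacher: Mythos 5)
Your proposal is correct and takes essentially the same route as the paper: the two credibility claims are imported from \citet{akbarpourCredibleAuctionsTrilemma2020} (pay-as-bid for the first-price auction; overreporting the second-highest bid to the winner for the second-price auction), the failure of Bayesian seller identity compatibility for the first-price auction is exactly the bidding-zero test combined with the strict monotonicity of the equilibrium bid in the number of bidders, and the last claim is Proposition~\ref{prop:Bayesian-sip}.
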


Note that the second-highest bid in the second-price auction can be easily manipulated by the seller. Setting the reserve price \(\rho^{*}\) cannot prevent the seller from participating in the auction if the seller possesses more than just distributional knowledge of buyers' types.\footnote{On eBay, sellers can strategically submit high bids and subsequently retract them, thereby exploiting automatic bidding systems to extract information about buyers' maximum willingness to pay. In the online advertising market, sellers often obtain precise estimates of buyers' valuations based on historical interactions.} To address this, we propose the following stronger notion of seller identity compatibility.

\begin{definition}
    A lit auction $\left(q,t\right)$ is \emph{ex-post seller identity-compatible}\label{def:ex-post-seller} if, for all $\left|S\right|\in\mathbb{N}$,
    \[
    \mathbb{E}_{B \in \mathcal{B}}\left[\mathbb{E}_{\theta_{B} \in\Theta^{\left|B\right|} }\left[\sum_{i\in B}t_{i}\left(\theta_{B}\right)\right]\right]\geq\mathbb{E}_{B \in \mathcal{B}}\left[\mathbb{E}_{\theta_{B} \in\Theta^{\left|B\right|}}\left[\sup_{\theta_{S}\in\Theta^{\left|S\right|}}\sum_{i\in B}t_{i}\left(\theta_{B},\theta_{S}\right)\right]\right].
    \]
\end{definition}

Ex-post seller identity compatibility ensures that even if the seller observes the buyers' type profile, the seller still cannot raise expected revenue when committing to a set of identities. In other words, it is strategy-proof for the seller not to participate in the auction by pretending to be multiple bidders. While we do not consider it realistic to assume that the seller has full knowledge of the buyers' type profile, this concept broadly captures the idea that any violation of ex-post seller identity compatibility implies the existence of some information that, if available to the seller, would create an incentive to manipulate the auction by shill bidding. Notably, this information does not need to be perfect.

The posted-price mechanism is ex-post seller identity-compatible because the winning payment is fixed. However, it suffers from both efficiency and revenue losses. In contrast, the second-price auction (with reserve) is optimal, but the seller can always increase the winner's payment when there is a gap between the highest and second-highest bids.

In general, the seller faces a dilemma. On one hand, competition among bidders is crucial for maximizing revenue. On the other hand, the seller has an incentive to exploit this competition---by leveraging identities---to further increase the winning payment. In lit auctions, optimality requires that the expected revenue is maximized for any given number of bidders. This inherent conflict of interest ultimately leads to an impossibility result.

\begin{theorem}[Impossibility in Lit Auctions]\label{thm:lit-opt}
    \hfill
    \begin{itemize}
        \item No optimal lit auction is ex-post seller identity-compatible.
        \item No optimally efficient lit auction is ex-post seller identity-compatible.
    \end{itemize}
\end{theorem}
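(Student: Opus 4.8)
The plan is to prove the contrapositive at the level of deviations: fix any optimal (resp. optimally efficient) lit auction $(q,t)$ and exhibit a strictly profitable single-shill deviation for the seller, i.e.\ violate the inequality in Definition~\ref{def:ex-post-seller} with $\left|S\right|=1$. First I would pin down the structure that optimality forces for each number of bidders $n$. By \citet{myersonOptimalAuctionDesign1981} and regularity, for every $n$ the allocation rule must, up to a null set, award the item to the highest type whenever it exceeds $\rho^{*}$ (for the optimally efficient case, simply to the highest type), the interim payment $\mathbb{E}_{\theta_{-i}}[t_{i}(\theta_{i},\theta_{-i})]$ must equal the Myerson envelope payment, and the lowest type must pay zero; ex-post individual rationality further forces each bidder's payment to be weakly below their own type, and weakly negative whenever they do not receive the item. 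In particular, when there is a single bidder ($n=1$), the winner pays exactly $\rho^{*}$ (resp.\ exactly $0$) and there is no payment otherwise.

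Next I would extract the strict gain on the single-buyer event. Condition on $|\widetilde{B}|=1$, which has positive probability by assumption, and write $\theta_{1}$ for the lone buyer's type. Without shilling the expected revenue on this event is $\rho^{*}\bigl(1-F(\rho^{*})\bigr)$ (resp.\ $0$). With one shill the seller faces the $2$-bidder component: having observed $\theta_{1}$, when $\theta_{1}>\rho^{*}$ the seller bids $\theta_{S}$ just below $\theta_{1}$ so that buyer $1$ still wins but, by an envelope/ex-post-IR squeeze (using that a losing bid yields a weakly negative payment and that $F$ is strictly increasing), buyer~$1$ now pays strictly more than $\rho^{*}$; when $\theta_{1}\le\rho^{*}$ the seller's best take from the real buyer is still $0$. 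Integrating over $\theta_{1}$, the expected revenue on the $|\widetilde{B}|=1$ event strictly increases. Note that on this event there are no other real types to worry about, so the envelope argument controls exactly the quantity the seller can exploit — this step is clean even for non-standard payment rules.

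Then I would argue that the deviation does not lose anything on the remaining events. For each realized number of real buyers $m\ge2$, after observing $\theta_{B}$ the seller chooses $\theta_{S}$ just below the current highest real type; this leaves the allocation among real buyers unchanged but (weakly) pushes the winner's price up from the runner-up type toward the highest type, so the real buyers' total payment in the $(m+1)$-bidder component is at least their total payment in the $m$-bidder component. Combining the strict gain from the single-buyer event with these weak gains contradicts ex-post seller identity compatibility, which establishes both bullets (the optimally efficient case is identical, and there one in fact obtains a strict gain on every event).

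The hard part will be the previous step — ``no loss on the $m\ge2$ events'' for an \emph{arbitrary} optimal auction. Optimality pins down the allocation and the interim payments but not the ex-post payment rule, so one cannot simply read off the winner's payment when the seller shades just below the top bid, nor assume that losing buyers pay exactly zero (they may receive transfers offset by the winner paying more). Making the weak-gain claim robust requires combining ex-post individual rationality (winner pays at most their type, loser pays at most zero) with the fact that only the \emph{full} interim payment — averaging over all competitors, the shill included — is determined, while the seller holds the other real types fixed. Concretely, the crux is to lower-bound $\sup_{\theta_{S}}\sum_{i\in B}t_{i}(\theta_{B},\theta_{S})$ by averaging $\sum_{i\in B}t_{i}(\theta_{B},\theta_{S})$ over $\theta_{S}\in[0,\max\theta_{B})$ and relating that average to a ``partial interim'' quantity; a naive average against the prior density is too weak (it scales like $\tfrac{m}{m+1}$ of the $(m{+}1)$-bidder revenue, which is eventually below the $m$-bidder revenue), so the argument must concentrate the shill's bid near the top. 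For the strategy-proof, critical-type-payment auctions (first-price, second-price, posted-price, with or without reserve) this step is immediate, and the difficulty is confined to atypical payment rules.
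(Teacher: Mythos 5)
Your overall architecture --- exhibit a single-shill deviation, lower-bound the supremum over the shill's bid by a conditional average, and use payoff equivalence to evaluate that average --- is the same as the paper's, and your treatment of the single-buyer event is sound. But the proof has a genuine gap exactly where you flag it: the claim that the seller does not lose on events with $m\ge 2$ real buyers is never established. The heuristic you offer (``bid just below the highest real type, which pushes the winner's price up from the runner-up type toward the highest type'') presupposes a second-price-like ex-post payment rule, whereas optimality pins down only interim payments; and you correctly observe that averaging the shill's bid against the prior is too weak (it yields only $\frac{m}{m+1}$ of the $(m{+}1)$-bidder revenue), but ``concentrate the shill's bid near the top'' is a description of the problem, not a solution. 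Until this step is carried out, the argument does not go through for an arbitrary optimal lit auction.

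The missing idea --- which is in fact the same conditioning-on-winning move you already use for $m=1$ --- is to restrict the supremum to $\theta_S<\theta_i$ where $\theta_i=\max\{\theta_B\}>\rho^*$ is the realized winner's type, bound it below by the average of $t_i(\theta_i,\theta_{-i},\theta_S)$ over $\theta_S$ conditional on $\theta_S<\theta_i$, and then average over $\theta_{-i}$ conditional on $\overline{\theta}_{-i}<\theta_i$ (an order-statistics computation shows this recovers the correct conditional law of the losers' types). The resulting quantity is buyer $i$'s expected payment conditional on winning in the $(m{+}1)$-bidder component, which, after signing the losing payments with ex-post individual rationality and pinning down the interim payment with payoff equivalence, is at least the first-price equilibrium bid $\beta^{m+1}(\theta_i)$. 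Since revenue equivalence gives the no-shill revenue as $\mathbb{E}\left[\beta^{m}\left(\theta^{1:m}\right)\right]$ and $\beta^{m}(\theta)<\beta^{m+1}(\theta)$ for all $\theta>\rho^*$, the seller gains \emph{strictly} on every event, not merely weakly on $m\ge2$; this uniform bound also makes your separate treatment of the single-buyer event unnecessary and is what delivers the theorem (the optimally efficient case follows by setting the reserve to zero).
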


Note that we impose no restrictions on the auction format. The optimal lit auction can be implemented in any static or dynamic format. Therefore, Theorem~\ref{thm:lit-opt} presents a broad impossibility result, highlighting the conflict between revenue maximization and the creation of fake competition through identities.

We provide a brief outline of the proof, which relies on two key observations. First, we identify the first-price auction (with reserve) as the optimal lit auction that poses the greatest challenge for the seller to manipulate via identities. Given that the seller shill bids in the first-price auction, the payment conditional on winning for each type of the buyer is fixed, eliminating further opportunities for manipulation. Second, in any optimal lit auction, the expected payment for each type of the buyer is determined for any given number of buyers. Importantly, while the expected payment decreases as the number of buyers increases---due to the lower probability of winning---the expected payment conditional on winning actually increases. When the seller participates in the auction, they never win. Consequently, the same buyer who would have won in the seller's absence still wins, but ends up paying more. This is because the expected payment conditional on winning increases with additional bidders, even if the auction format might have changed due to the seller's participation. As a result, the seller always profits from shill bidding in optimal lit auctions. The same logic extends to efficient auctions that maximize expected revenue---optimally efficient auctions---by removing the reserve price, leading to a corresponding impossibility result. The full proof of Theorem~\ref{thm:lit-opt} is provided in Appendix~\ref{proof:lit-opt}.

Our impossibility result contrasts with the findings of \citet*{komoShillProofAuctions2024}, who establish the Dutch auction as the unique optimal public auction that prevents shill bidding by the seller.\footnote{They derive this result in the finite type space. In Online Appendix~\ref{proof:lit-opt-finite}, we show that Theorem~\ref{thm:lit-opt} holds in the finite type space. In addition to what have discussed in this paragraph, our approach differs from theirs in how we handle ties, which occur with positive probability in the finite type space. For further details, see Online Appendix~\ref{subsec:finite}.} This difference stems from their distinct definition of shill bidding---where bids of zero are not considered shill bids---and their assumption of a fixed number of bidders. Consequently, their model shuts down the channel that shill bidding can influence buyers' bidding strategies by inflating the number of bidders. In particular, the seller cannot intensify the perceived competition among buyers by leveraging more identities. As we noted earlier, the first-price (or Dutch) auction is not Bayesian seller identity-compatible, as it fails the bidding-zero test.

Given the impossibility result, a natural question is whether optimality can be approximated while preventing shill bidding by the seller. To this end, we seek to identify the best lit auction for the seller---one that not only eliminates any suspicion of shill bidding but also maximizes expected revenue. In particular, we take into account the possibility of collusion between the seller and the auctioneer, which further strengthens ex-post seller identity compatibility.

\begin{definition}
    A lit auction $\left(q,t\right)$ is \emph{ex-post auctioneer identity-compatible}\label{def:ex-post-auctioneer} if,
    \[
    \mathbb{E}_{B \in \mathcal{B}} \left[\mathbb{E}_{\theta_{B} \in\Theta^{\left|B\right|}}\left[\sum_{i\in B}t_{i}\left(\theta_{B}\right)\right] \right] \geq \mathbb{E}_{B \in \mathcal{B}} \left[ \mathbb{E}_{\theta_{B} \in\Theta^{\left|B\right|}}\left[\sup_{\substack{i\in B, \left|S\right|\in\mathbb{N\cup}\left\{ 0\right\}\\
    \theta_{S}\in\Theta^{\left|S\right|}, q_{i}\left(\theta_{B},\theta_{S}\right)>0
    }
    }\frac{t_{i}\left(\theta_{B},\theta_{S}\right)}{q_{i}\left(\theta_{B},\theta_{S}\right)}\right]\right].
    \]
\end{definition}

Compared to ex-post seller identity compatibility, there are two key differences. First, the seller does not need to commit to the number of identities in advance. Instead, the auctioneer may either secretly disclose the number of buyers to the seller or, in some cases, the seller and auctioneer may be the same entity. Second, when a random allocation is involved, the auctioneer's randomization device cannot be trusted.\footnote{This concern also motivates ex-post deterministic implementation in \citet{dworczakMechanismDesignAftermarkets2020}.} In practice, verifying whether randomization is conducted properly is often challenging. If the allocation rule requires that the item remains unsold with some probability, the auctioneer can always choose to sell it to ensure payment from the buyer. Moreover, when a tie-breaking decision is needed, the auctioneer can always favor the buyer who offers the highest payment conditional on winning. These deviations remain undetectable to buyers, even if they share information after the auction.\footnote{It is similar to group-credibility in \citet{akbarpourCredibleAuctionsTrilemma2020}, where the auctioneer can only hide deviations from one group by misrepresenting the preferences of other groups. When considering ex-post auctioneer identity compatibility, there are two groups: one is the group of bidders controlled by the seller, and the other is the group of all buyers. The auctioneer can hide deviations from buyers by manipulating the preferences of bidders controlled by the seller.} In other words, ex-post auctioneer identity compatibility accounts for the safest deviations from auction rules when the seller can collude with the auctioneer.

While the posted-price mechanism guarantees ex-post auctioneer identity compatibility, it incurs losses in both efficiency and revenue. The following theorem shows that in lit auctions, such losses are inevitable if the seller aims to fully dispel any suspicion of shill bidding.

\begin{theorem}[Tying the Seller's Hands]\label{thm:posted-price}
    The posted-price mechanism maximizes expected revenue among all lit auctions that are ex-post auctioneer identity-compatible.
\end{theorem}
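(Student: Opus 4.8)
The plan is to prove a matching upper bound: every ex-post auctioneer identity-compatible lit auction has expected revenue at most $R_{\mathrm{pp}}:=\sup_{p\in[0,1]}p\cdot\mathbb{E}_{B}[1-F(p)^{|B|}]$. Since $p\mapsto p\bigl(1-\mathbb{E}_{B}[F(p)^{|B|}]\bigr)$ is continuous on $[0,1]$ and vanishes at both endpoints (using $F(1)=1$ and positivity on $(0,1)$ from full support), the supremum is attained at some $p^{*}\in(0,1)$, and the posted-price mechanism at $p^{*}$ has revenue exactly $R_{\mathrm{pp}}$. That mechanism is itself ex-post auctioneer identity-compatible: a winner pays $p^{*}$ per unit allocated no matter how many shills the seller adds, so the right-hand side of the defining inequality is at most $p^{*}\mathbb{E}_{B}[1-F(p^{*})^{|B|}]$, which equals the left-hand side. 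Thus the theorem reduces to the upper bound.

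For the upper bound, the first move is to read structure off the fact that the defining inequality binds. Using $\sum_{i}q_{i}\le1$, $q_{i}\ge0$, and ex-post individual rationality (losers pay at most $0$), one gets the pointwise inequality $\sum_{i\in N}t_{i}(\theta_{N})\le\bigl(\max_{i:\,q_{i}(\theta_{N})>0}t_{i}(\theta_{N})/q_{i}(\theta_{N})\bigr)^{+}$; and because taking $|S|=0$ is admissible on the right of the defining inequality, that quantity is dominated, on the positive-revenue profiles, by the ``collusion value'' $\sup_{i,S,\theta_{S}:\,q_{i}>0}t_{i}(\theta_{N},\theta_{S})/q_{i}(\theta_{N},\theta_{S})$. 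Comparing the outer expectations with the two sides of the identity-compatibility inequality forces these to be equalities for a.e.\ profile in the support. This yields the key consequence I need: adding shills weakly lowers the per-unit price that a colluding seller--auctioneer can extract from any fixed bidder, so in particular the per-unit winning payment of a bidder of type $\theta_{i}$ placed in a crowded profile never exceeds $t^{1}(\theta_{i})/q^{1}(\theta_{i})$, where $q^{1},t^{1}$ denote the single-bidder allocation and payment rules.

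Specializing to $(q^{1},t^{1})$: incentive compatibility makes $q^{1}$ nondecreasing, and the binding inequality for $n=1$ forces $q^{1}(\theta)=1$ wherever $t^{1}(\theta)>0$ (since there $t^{1}(\theta)$ must equal the collusion value, which is at least $t^{1}(\theta)/q^{1}(\theta)$, forcing $q^{1}(\theta)\ge1$). With monotonicity and ex-post individual rationality this produces a threshold $p_{1}\in[0,1]$ such that $q^{1}\equiv1$ on $(p_{1},1]$, $t^{1}$ is constant there and equal to $c=p_{1}-\int_{0}^{p_{1}}q^{1}(s)\,ds-U^{1}(0)\le p_{1}$, and $t^{1}\le0$ on $[0,p_{1})$; hence $\bigl(t^{1}(\theta)/q^{1}(\theta)\bigr)^{+}\le p_{1}\mathbf{1}[\theta\ge p_{1}]$ a.e. Combining with the previous paragraph: for any $n$-bidder profile $\theta_{N}$ in the support and any $i$ with $q_{i}(\theta_{N})>0$, viewing bidder $i$ (type $\theta_{i}$) as the sole buyer and the other $n-1$ coordinates as shills, anonymity gives $t_{i}(\theta_{N})/q_{i}(\theta_{N})\le\bigl(t^{1}(\theta_{i})/q^{1}(\theta_{i})\bigr)^{+}\le p_{1}\mathbf{1}[\theta_{i}\ge p_{1}]\le p_{1}\mathbf{1}[\max\theta_{N}\ge p_{1}]$. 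Therefore $\sum_{i\in N}t_{i}(\theta_{N})\le p_{1}\mathbf{1}[\max\theta_{N}\ge p_{1}]$, and taking expectations over types and then over $|\widetilde{B}|$ yields $\mathrm{Rev}\le p_{1}\,\mathbb{E}_{B}[1-F(p_{1})^{|B|}]\le R_{\mathrm{pp}}$ (in the degenerate case $\{q^{1}=1\}=\emptyset$ one sets $p_{1}=1$ and the same chain gives $\mathrm{Rev}\le0$), which completes the argument.

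The step I expect to be the main obstacle is upgrading the aggregate identity-compatibility inequality to the pointwise statements above, and in particular pinning down the single-bidder mechanism as essentially a posted price. One must handle integrability of the supremum on the right-hand side, the measure-zero tie set, and — most delicately — profiles on which the mechanism fails to sell or even pays a winner, where the crude inequality ``honest revenue $\le$ collusion value'' can fail; such profiles generate nonpositive revenue and can be discarded for the upper bound, but one must check that excising them does not disturb the derivation of ``$q^{1}=1$ wherever $t^{1}>0$'' (equivalently, one first reduces to auctions in which losers pay nothing and payments are nonnegative, which is harmless for revenue maximization, after which the difficulty disappears since every winner's per-unit price is then automatically nonnegative). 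A secondary subtlety is the anonymity argument that treats an arbitrary bidder inside a crowded profile as the lone buyer of a single-bidder instance — this is exactly where the ``shills weakly lower the price'' fact is invoked.
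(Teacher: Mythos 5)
Your proof is correct and follows essentially the same route as the paper's: both exploit $\mathrm{Pr}[|\widetilde{B}|=1]>0$ together with the auctioneer identity-compatibility constraint to show that the single-buyer payment dominates every per-unit winning payment in crowded profiles, deduce that the single-buyer mechanism must be a posted price ($t^{1}>0$ forces $q^{1}=1$, hence a constant payment bounded by the threshold), and aggregate using $\sum_{i}q_{i}\le 1$. The subtleties you flag (nonpositive-payment profiles, the empty supremum, ties) are dealt with in the paper in the same way you suggest --- via ex-post individual rationality and the continuous type space --- so no additional ideas are required.
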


This result demonstrates that concerns over the seller or auctioneer steering the auction in their favor completely tie their hands. Relative to \citet{shinozakiShillProofRulesObject2025}, who studies posted-price rules in object allocation problems with money, our analysis focuses on auctions but does not restrict attention to deterministic mechanisms, nor does it impose strategy-proofness. Notice that the role of a posted price is not the same as the role of a reserve in the standard auction theory \citep{myersonOptimalAuctionDesign1981}. The optimal posted price depends on both the type distribution and the distribution of the number of buyers. In particular, competition arising from variations in the number of buyers can be incorporated into the posted price. The proof hinges on the observation that the payment made by any winning buyer, given any type profile of the other buyers, is bounded by the payment that same buyer would make if they were the only one (see Appendix~\ref{proof:posted-price} for details).\footnote{A similar intuition appears in \citet*{jagadeesanLimitsAuctionsExAnte2025}, who characterize the posted-price mechanism in a different environment in which buyers, under collusion, behave as if they were the sole participant.} The intuition is that if competition among bidders drives up revenue, the seller is tempted to simulate this competition using fake identities. The elimination of this temptation leads to the rigidity of the payment. Theorem~\ref{thm:posted-price} provides an additional explanation for the widespread use of posted-price mechanisms in practice, beyond their simplicity \citep{pyciaTheorySimplicityGames2023}.

Theorem~\ref{thm:posted-price} highlights the revenue loss the seller must incur to maintain ex-post auctioneer identity compatibility. Combined with the impossibility result of Theorem~\ref{thm:lit-opt}, it suggests that we should look beyond lit auctions for optimal auctions that are ex-post seller (or auctioneer) identity-compatible. This motivates the study of dark auctions, where the number of bidders is concealed.

\section{Dark Auctions}\label{sec:dark}

As discussed in the model section, bidders can participate in the auction without knowing the exact number of bidders, because the specific auction format is publicly announced for any given number of bidders. In practice, many auctions operate this way. For instance, auction houses and online platforms typically do not disclose the number of interested bidders. Furthermore, with the rise of remote participation, many auctions no longer require bidders to be physically present, making it difficult for the auctioneer to know the exact number of bidders.

In the following, we begin by introducing the concept of dark mechanisms and examining their relationship to lit mechanisms. We show that dark mechanisms allow for the implementation of a broader range of outcome rules. Then, we explore the implications of this broader range for the design of optimal dark auctions in the absence of shill bidding. Finally, we revisit the notion of identity compatibility in dark auctions and illustrate how the shift to dark auctions overcomes the impossibility result that arises in lit auctions, which underscores the role of concealing the number of bidders in preventing shill bidding.

\subsection{Dark Mechanisms}\label{subsec:dark-games}

We follow the notation in Section~\ref{subsec:lit-games}. Consider a collection of games \(\Gamma^{\mathbb{N}} = \left(\Gamma^n\right)_{n \in \mathbb{N}}\). When the number of agents is concealed, agents are allowed to play contingent strategies, i.e., a collection of strategies that depend on the number of agents. Recall that \(\Sigma^n\) denotes the set of strategies for each agent in the game \(\Gamma^{n}\).  When allowing for contingent strategies, we denote the strategy space for each agent by \(\Sigma^d\), which is a subset of \(\times_{n \in \mathbb{N}} \Sigma^n\). Importantly, \(\Sigma^d\) does not have to be identical to \(\times_{n \in \mathbb{N}}\Sigma^n\). In fact, the design of dark mechanisms involves not only the design of the collection of games \(\Gamma^{\mathbb{N}}\), but also the design of the strategy space \(\Sigma^d\). We focus on symmetric equilibria (Footnote~\ref{fn:symmetric}). Let \(S^d = \left(S^n\right)_{n \in \mathbb{N}}\) denote the symmetric type-strategy for each agent in the collection of games \(\Gamma^{\mathbb{N}}\) that maps from types to strategies, i.e., \(S^d: \Theta \to \Sigma^d\).
\begin{definition}\label{def:dark-eqm-anonymous}
    \(\left(\Gamma^{\mathbb{N}}, \Sigma^d, S^d\right)\) is \emph{dark Bayesian incentive-compatible} if, for all $i\in \mathbb{B}$ and all $\theta_{i}\in\Theta$,
    \[
    S^d\left(\theta_{i}\right) \in \arg\max_{\sigma^d \in \Sigma^d} \sum_{n \in \mathbb{N}} p_i\left(n\right) \mathbb{E}_{\theta_{-i} \in\Theta^{n-1}}\left[u_{i}\left(g^{\Gamma^n}\left(\sigma^n,S^{n}\left(\theta_{-i}\right)\right),\theta_{i}\right)\right],
    \]
    where $p_i\left(n\right) = \mathrm{Pr}[|\widetilde{B}| = n| i \in \widetilde{B}]$ is agent \(i\)'s belief that the number of agents is \(n\).
\end{definition}

We call \(\left(\Gamma^{\mathbb{N}}, \Sigma^d, S^d\right)\) a \emph{dark mechanism} if it is dark Bayesian incentive-compatible. The term ``dark'' highlights the assumption that the number of agents participating in the game is concealed. Our first observation establishes the connection between dark and lit mechanisms.

\begin{proposition}\label{prop:dark-lit}
    Every lit mechanism induces an equivalent dark mechanism.
\end{proposition}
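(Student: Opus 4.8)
The plan is to exhibit, for any given lit mechanism $\left(\Gamma^{\mathbb{N}}, S^{\mathbb{N}}\right)$, a dark mechanism built on the very same collection of games. Concretely, I would keep $\Gamma^{\mathbb{N}} = \left(\Gamma^n\right)_{n\in\mathbb{N}}$ unchanged, take the contingent strategy space to be the full product $\Sigma^d := \times_{n\in\mathbb{N}}\Sigma^n$ (the model explicitly leaves $\Sigma^d$ to the designer, and here no restriction is needed), and set the symmetric type-strategy to the ``diagonal'' $S^d\left(\theta\right) := \left(S^n\left(\theta\right)\right)_{n\in\mathbb{N}}$, which indeed maps $\Theta$ into $\Sigma^d$. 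Equivalence in the sense of Definition~\ref{def:equivalent} is then immediate: for every $n$ the game and the on-path strategy profile are literally the same, so $g^{\Gamma^n}\left(S^n\left(\cdot\right)\right)$ coincides with the outcome rule of the original lit mechanism. So the only thing left to check is that this construction is dark Bayesian incentive-compatible (Definition~\ref{def:dark-eqm-anonymous}).

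For that, the key observation is that the dark objective in Definition~\ref{def:dark-eqm-anonymous} is a $p_i(\cdot)$-mixture of the per-game lit objectives, each of which depends only on the corresponding coordinate of the contingent strategy. Fix $i\in\mathbb{B}$ and $\theta_i\in\Theta$, and for $\sigma^n\in\Sigma^n$ write
\[
V^n\left(\sigma^n;\theta_i\right) := \mathbb{E}_{\theta_{-i}\in\Theta^{n-1}}\left[u_i\left(g^{\Gamma^n}\left(\sigma^n, S^n\left(\theta_{-i}\right)\right),\theta_i\right)\right].
\]
An arbitrary deviation is a contingent strategy $\sigma^d = \left(\sigma^n\right)_{n\in\mathbb{N}}\in\Sigma^d$, and its dark payoff is $\sum_{n\in\mathbb{N}} p_i(n)\, V^n\left(\sigma^n;\theta_i\right)$. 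By lit Bayesian incentive compatibility (Definition~\ref{def:lit-eqm-anonymous}), $S^n\left(\theta_i\right)\in\arg\max_{\sigma^n\in\Sigma^n}V^n\left(\sigma^n;\theta_i\right)$ for every $n$; hence $V^n\left(\sigma^n;\theta_i\right)\le V^n\left(S^n\left(\theta_i\right);\theta_i\right)$ for every $n$, and since $p_i(n)\ge 0$ with $\sum_{n}p_i(n)=1$ this inequality is preserved by the weighted sum. Therefore the deviation payoff is at most $\sum_{n}p_i(n)\,V^n\left(S^n\left(\theta_i\right);\theta_i\right)$, which is exactly the dark payoff of $S^d\left(\theta_i\right)$. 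Thus $S^d\left(\theta_i\right)$ attains the maximum over $\Sigma^d$, i.e., $\left(\Gamma^{\mathbb{N}}, \Sigma^d, S^d\right)$ is dark Bayesian incentive-compatible and hence a dark mechanism, equivalent to the original lit mechanism.

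There is no substantive obstacle here; the proposition is essentially a warm-up, whose content is precisely that allowing contingent strategies can only help the designer, because a strategy that is a best reply in each game $\Gamma^n$ separately is automatically a best reply to any mixture over those games. The only points worth spelling out are (i) that the model genuinely permits choosing $\Sigma^d$, so taking the full product $\times_{n\in\mathbb{N}}\Sigma^n$ (or, alternatively, the one-parameter family $\left\{\left(S^n\left(\theta\right)\right)_{n}:\theta\in\Theta\right\}$, for which the same term-by-term comparison applies) is legitimate; and (ii) that the dark objective decomposes additively across $n$ so the term-by-term comparison goes through, with the finiteness of the expected number of buyers and boundedness of the per-game objectives ensuring the series are well-defined.
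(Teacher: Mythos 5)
Your proposal is correct and follows essentially the same route as the paper: take $\Sigma^d = \times_{n\in\mathbb{N}}\Sigma^n$ with the diagonal type-strategy $S^d(\theta) = \left(S^n(\theta)\right)_{n\in\mathbb{N}}$, note that equivalence is immediate by construction, and obtain dark Bayesian incentive compatibility by summing the per-$n$ lit incentive constraints weighted by the beliefs $p_i(n)$. The additional remarks on the choice of $\Sigma^d$ and well-definedness of the series are fine but not needed beyond what the paper already does.
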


\begin{proof}
    Consider any lit mechanism \(\left(\Gamma^{\mathbb{N}}, S^{\mathbb{N}}\right)\). Lit Bayesian incentive compatibility (Definition~\ref{def:lit-eqm-anonymous}) implies that for all $n\in\mathbb{N}$, all $i\in \mathbb{B}$, all $\theta_{i}\in\Theta$, and all $\sigma^n \in \Sigma^n$,
\begin{alignat*}{1}
    & \mathbb{E}_{\theta_{-i} \in\Theta^{n-1}}\left[u_{i}\left(g^{\Gamma^n}\left(S^{n}\left(\theta_i\right),S^{n}\left(\theta_{-i}\right)\right),\theta_{i}\right)\right] \\
    \geq & \mathbb{E}_{\theta_{-i} \in\Theta^{n-1}}\left[u_{i}\left(g^{\Gamma^n}\left(\sigma^{n},S^{n}\left(\theta_{-i}\right)\right),\theta_{i}\right)\right].
\end{alignat*}
By summing over all $n\in\mathbb{N}$ weighted by agent \(i\)'s belief about the number of agents, we have
\begin{alignat*}{1}
    & \sum_{n \in \mathbb{N}} p_i\left(n\right) \mathbb{E}_{\theta_{-i} \in\Theta^{n-1}}\left[u_{i}\left(g^{\Gamma^n}\left(S^{n}\left(\theta_i\right),S^{n}\left(\theta_{-i}\right)\right),\theta_{i}\right)\right] \\
    \geq & \sum_{n \in \mathbb{N}} p_i\left(n\right) \mathbb{E}_{\theta_{-i} \in\Theta^{n-1}}\left[u_{i}\left(g^{\Gamma^n}\left(\sigma^{n},S^{n}\left(\theta_{-i}\right)\right),\theta_{i}\right)\right].
\end{alignat*}
Hence, \(\left(\Gamma^{\mathbb{N}}, \Sigma^d, S^d\right)\) is dark Bayesian incentive-compatible, where \(\Sigma^d = \times_{n \in \mathbb{N}}\Sigma^n\) and \(S^{d} = \left(S^n\right)_{n \in \mathbb{N}}\). By construction, the two mechanisms induce the same outcome rule, and thus they are equivalent.

\end{proof}

The proof of Proposition~\ref{prop:dark-lit} highlights the key distinction between lit and dark mechanisms. Lit Bayesian incentive compatibility constraints are satisfied ex-post, meaning they hold for each \(n \in \mathbb{N}\) individually. In contrast, dark mechanisms only demand dark Bayesian incentive compatibility constraints---weighted sums of all lit Bayesian incentive compatibility constraints---to be satisfied ex-ante. The converse of Proposition~\ref{prop:dark-lit} is generally not true. In other words, not every dark mechanism induces an equivalent lit mechanism. This discrepancy arises because the strategy space in a dark mechanism may differ from the Cartesian product of the strategy spaces in a lit mechanism, as illustrated in the following example.

\begin{example}\label{ex:dark-first-price}
    The dark first-price auction game operates as follows. Each bidder submits a single bid without knowing the number of bidders in the game. The highest bidder wins and pays their bid, with ties broken uniformly at random. In contrast to the (lit) first-price auction game---where bidders of the same type may submit different bids after observing different numbers of bidders---bidders in the dark game must submit the same bid regardless of the number of bidders. Hence, the strategy space \(\Sigma^d\) in the dark first-price auction game is different from the strategy space \(\times_{n \in \mathbb{N}} \Sigma^n\) in the (lit) first-price auction game:
    \[
    \Sigma^d = \left\{\left.\sigma^d = \left(\sigma^n\right)_{n \in \mathbb{N}} \in \times_{n \in \mathbb{N}} \Sigma^n \right| \forall k,l \in \mathbb{N}, \sigma^k = \sigma^l\right\} \subsetneq \times_{n \in \mathbb{N}} \Sigma^n.
    \]

    Assume that equilibria exist, which we will discuss later. Let \(\beta^d\) denote the equilibrium bidding strategy in the dark first-price auction, and \(\beta^n\) denote the equilibrium bidding strategy in the first-price auction with \(n\) bidders. Clearly, \(\beta^d\) fails to satisfy all Bayesian incentive compatibility constraints in the first-price auction with \(n\) bidders for all \(n \in \mathbb{N}\), unless \(\beta^k = \beta^l\) for all \(k,l \in \mathbb{N}\), which is generally not the case.
\end{example}

Example~\ref{ex:dark-first-price} shows that the converse of Proposition~\ref{prop:dark-lit} does not generally hold. It gives a hint of the complexity involved in designing dark mechanisms, arising from the design of the strategy space \(\Sigma^d\) and its interaction with the design of the collection of games \(\Gamma^{\mathbb{N}}\). This complexity is absent in lit mechanisms, opening new possibilities for the design of dark mechanisms. A key message from Proposition~\ref{prop:dark-lit} and Example~\ref{ex:dark-first-price} is that the expanded design space for dark mechanisms may enable the implementation of outcome rules that are unattainable in lit mechanisms. In the following, we first explore the implications of this message for the design of optimal dark auctions in the absence of shill bidding. In particular, we examine whether dark auctions generate higher expected revenue compared to lit auctions.

\subsection{Optimal Dark Auctions without Shill Bidding}\label{subsec:dark-auctions}
At first sight, this optimization problem seems quite challenging, as the design space is larger than that of lit auctions. However, an extension of the revelation principle \citep{myersonOptimalAuctionDesign1981} suggests we can restrict our attention, without loss of generality, to a small subset of dark auctions, called ``direct dark auctions.'' We follow the notation in Section~\ref{subsec:auctions}. The proof is relegated to Appendix~\ref{proof:dark-revelation}.

\begin{lemma}\label{lem:dark-revelation}
    Every dark auction \(\left(\Gamma^{\mathbb{N}}, \Sigma^d, S^d\right)\) induces an equivalent direct dark auction $\left(q^{d},t^{d},\tilde{S}^{d}\right)$, where $\tilde{S}^{d}\left(\theta\right)= \theta \ \forall \theta \in \Theta$, $q^{d}:\Theta^{n}\rightarrow\mathbb{R}_{+}^{n}$, and $t^{d}:\Theta^{n}\rightarrow\mathbb{R}^{n}$ for all finite number of bidders \(n \in \mathbb{N}\).
\end{lemma}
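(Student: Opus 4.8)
The plan is to adapt the standard revelation principle to the dark setting, keeping careful track of the fact that the object we construct must not only be a direct mechanism for each fixed number of bidders but also respect the restriction to \emph{contingent} strategies that is the defining feature of dark mechanisms. Fix a dark auction \(\left(\Gamma^{\mathbb{N}}, \Sigma^d, S^d\right)\) that is dark Bayesian incentive-compatible with symmetric type-strategy \(S^d = \left(S^n\right)_{n \in \mathbb{N}}\). For each \(n \in \mathbb{N}\) define the induced outcome rules \(q^d\left(\theta_N\right) = q^{\left(\Gamma^n, S^n\right)}\left(\theta_N\right)\) and \(t^d\left(\theta_N\right) = t^{\left(\Gamma^n, S^n\right)}\left(\theta_N\right)\) for all \(\theta_N \in \Theta^n\), exactly as the composition \(g^{\Gamma^n} \circ S^n\) described in Section~\ref{subsec:auctions}; set \(\tilde S^d\left(\theta\right) = \theta\). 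By construction the resulting direct dark auction induces the same map from type profiles to outcome distributions as the original, for every \(n\), so the two are equivalent in the sense of Definition~\ref{def:equivalent}; this takes care of the ``induces an equivalent'' part essentially by definition.

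The substantive step is to verify that \(\left(q^d, t^d, \tilde S^d\right)\) is itself a legitimate dark mechanism, i.e.\ that truthful reporting is a symmetric dark Bayesian equilibrium of the direct dark auction. Here is where the dark-specific care is needed. In the direct dark auction each agent reports a single type \(\hat\theta \in \Theta\) that is used simultaneously in \emph{all} the games \(q^d\) on \(\Theta^n\); this is precisely the analogue of the constraint \(\sigma^k = \sigma^l\) for all \(k,l\) that carves out \(\Sigma^d\) in Example~\ref{ex:dark-first-price}. So the deviation set available to agent \(i\) in the direct mechanism is the set of constant reports \(\{\hat\theta\}_{\hat\theta \in \Theta}\), which maps \emph{into} \(\Sigma^d\) under the original strategy map: the report \(\hat\theta\) corresponds to playing \(S^d\left(\hat\theta\right) = \left(S^n\left(\hat\theta\right)\right)_{n}\), a single element of \(\Sigma^d\). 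Since \(S^d\) is a dark Bayesian equilibrium (Definition~\ref{def:dark-eqm-anonymous}), for every \(\theta_i\) and every \(\hat\theta\),
\[
\sum_{n} p_i(n)\, \mathbb{E}_{\theta_{-i}}\!\left[u_i\!\left(g^{\Gamma^n}\!\left(S^n(\theta_i), S^n(\theta_{-i})\right),\theta_i\right)\right]
\;\ge\;
\sum_{n} p_i(n)\, \mathbb{E}_{\theta_{-i}}\!\left[u_i\!\left(g^{\Gamma^n}\!\left(S^n(\hat\theta), S^n(\theta_{-i})\right),\theta_i\right)\right],
\]
which, rewritten via \(q^d, t^d\) and the quasi-linear utility, is exactly the dark Bayesian incentive compatibility constraint for \(\left(q^d, t^d, \tilde S^d\right)\) at the report \(\hat\theta\). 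Because every deviation available in the direct dark auction is of this form, truthful reporting is optimal, so the direct dark auction is dark Bayesian incentive-compatible. (One should also note that when the opponents report truthfully they are in effect playing \(S^n(\theta_{-i})\) in game \(n\), so the right-hand side above is genuinely the payoff from the deviation inside the direct mechanism; this is the place where symmetry of the equilibrium and of the prior \(p_i\) is used.)

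The main obstacle — and the only place the argument is more delicate than the classical revelation principle — is making precise that the direct dark auction does not accidentally \emph{enlarge} the deviation space: in a lit mechanism a deviator could pick a different report for each \(n\), and if the direct dark auction allowed that it might fail incentive compatibility (this is the content of the last paragraph of Example~\ref{ex:dark-first-price}, where \(\beta^d\) violates the fixed-\(n\) constraints). The resolution is definitional — a direct \emph{dark} auction asks for one type report, period — but it must be stated cleanly, because it is exactly the feature that distinguishes Lemma~\ref{lem:dark-revelation} from a naive application of Myerson's revelation principle. I would therefore spend a sentence or two emphasizing that the contingent-strategy restriction on the original mechanism is inherited, through the map \(\hat\theta \mapsto S^d(\hat\theta)\), as the single-report restriction on the direct mechanism, and that this is what keeps the equivalence from collapsing. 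Everything else — measurability of the composed maps, the bookkeeping that \(q^d\) maps \(\Theta^n \to \mathbb{R}_+^n\) with \(\sum_i q^d_i \le 1\), and that ex-post individual rationality is preserved — is routine and can be handled in a line.
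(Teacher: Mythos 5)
Your proposal is correct and follows essentially the same route as the paper's proof: define \(q^d, t^d\) as the composition \(g^{\Gamma^n}\circ S^n\), observe equivalence by construction, and then derive truth-telling incentive compatibility of the direct mechanism from the fact that every misreport \(\hat\theta\) corresponds to the feasible contingent deviation \(S^d(\hat\theta)\in\Sigma^d\) in the original dark mechanism. Your added emphasis that the single-report structure of the direct dark auction is what prevents the deviation space from enlarging is exactly the observation the paper makes (implicitly) via ``\(S^d(\theta_i')\in\Sigma^d\) for all \(\theta_i'\),'' so nothing is missing.
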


Now we focus on direct dark auctions $\left(q^{d},t^{d}\right)$, where we omit the truth-telling equilibrium strategies. Let $Q_{i}^{d}\left(\theta_{i}\right)$ be the probability of buyer $i$ of type $\theta_{i}$ winning the dark auction, and $Q_{i}^{n}\left(\theta_{i}\right)$ be the probability of buyer $i$ of type $\theta_{i}$ winning the dark auction when the number of buyers is $n\in\mathbb{N}$. Then, we have
\[
Q_{i}^{d}\left(\theta_{i}\right)=\sum_{n=1}^{\infty}p_{i}\left(n\right)\mathbb{E}_{\theta_{-i}\in\Theta^{n-1}}\left[q_{i}^{d}\left(\theta_{i},\theta_{-i}\right)\right]=\sum_{n=1}^{\infty}p_{i}\left(n\right)Q_{i}^{n}\left(\theta_{i}\right),
\]
where $p_{i}\left(n\right)=\mathrm{Pr}[|\widetilde{B}|=n|i\in\widetilde{B}]$ is buyer \(i\)'s belief that the number of buyers is \(n\). Notice that this expression resembles the one for lit auctions, with the key difference being the inclusion of uncertainty regarding the number of buyers. Let $T_{i}^{d}\left(\theta_{i}\right)$ be the expected payment of buyer $i$ of type $\theta_{i}$ in the dark auction, and $T_{i}^{n}\left(\theta_{i}\right)$ be the expected payment of buyer $i$ of type $\theta_{i}$ in the dark auction when the number of buyers is $n$. Then, we have
\[
T_{i}^{d}\left(\theta_{i}\right)=\sum_{n=1}^{\infty}p_{i}\left(n\right)\mathbb{E}_{\theta_{-i}\in\Theta^{n-1}}\left[t_{i}^{d}\left(\theta_{i},\theta_{-i}\right)\right]=\sum_{n=1}^{\infty}p_{i}\left(n\right)T_{i}^{n}\left(\theta_{i}\right).
\]

Let $U_{i}^{d}\left(\theta_{i}\right)$ be the expected utility of buyer $i$ of type $\theta_{i}$ in the dark auction, and $U_{i}^{n}\left(\theta_{i}\right)$ be the expected utility of buyer $i$ of type $\theta_{i}$ in the dark auction when the number of buyers is $n$. Dark Bayesian incentive compatibility implies that, for all $i \in \mathbb{B}$, and all $\theta_{i},\theta'_{i}\in\Theta$,
\begin{alignat*}{4}
    U_{i}^{d}\left(\theta_{i}\right) & = \theta_{i}Q_{i}^{d}\left(\theta_{i}\right)-T_{i}^{d}\left(\theta_{i}\right) &&= \sum_{n=1}^{\infty}p_i\left(n\right)\left(\theta_{i}Q_{i}^{n}\left(\theta_{i}\right)-T_{i}^{n}\left(\theta_{i}\right)\right) \\
    & \geq \theta_{i}Q_{i}^{d}\left(\theta'_{i}\right)-T_{i}^{d}\left(\theta'_{i}\right) &&= \sum_{n=1}^{\infty}p_i\left(n\right)\left(\theta_{i}Q_{i}^{n}\left(\theta'_{i}\right)-T_{i}^{n}\left(\theta'_{i}\right)\right).
\end{alignat*}
In contrast, lit Bayesian incentive compatibility implies that, for all $i\in\mathbb{B}$, and all  $\theta_{i},\theta'_{i}\in\Theta$,
\[
U_{i}^{n}\left(\theta_{i}\right)=\theta_{i}Q_{i}^{n}\left(\theta_{i}\right)-T_{i}^{n}\left(\theta_{i}\right)\geq\theta_{i}Q_{i}^{n}\left(\theta'_{i}\right)-T_{i}^{n}\left(\theta'_{i}\right) \qquad \forall n\in\mathbb{N}.
\]

As discussed in Proposition~\ref{prop:dark-lit} and Example~\ref{ex:dark-first-price}, switching from lit auctions to dark auctions leads to a relaxation of incentive compatibility constraints---these constraints hold ex-ante rather than ex-post for each \(n \in \mathbb{N}\). As a result, dark auctions implement more outcome rules than lit auctions. The primary question we must address is whether dark auctions yield higher revenue than lit auctions.

The answer is negative. As first noted by \citet{mcafeeAuctionsStochasticNumber1987}, dark auctions generate exactly the same highest expected revenue as lit auctions. To ensure clarity and completeness, we provide a proof tailored to our setting (Appendix~\ref{proof:dark-rev-equiv}). We begin by presenting the payoff equivalence lemma for dark auctions.

\begin{lemma}\label{lem:dark-payoff-equiv}
Consider a direct dark auction $\left(q^{d},t^{d}\right)$. Then for all $i\in\mathbb{B}$ and all $\theta_{i}\in\Theta$, we have
\begin{alignat*}{2}
    U_{i}^{d}\left(\theta_{i}\right) & = U_{i}^{d}\left(0\right)+\int_{0}^{\theta_{i}}Q_{i}^{d}\left(x\right)dx,\\
    T_{i}^{d}\left(\theta_{i}\right) & = T_{i}^{d}\left(0\right)+\theta_{i}Q_{i}^{d}\left(\theta_{i}\right)-\int_{0}^{\theta_{i}}Q_{i}^{d}\left(x\right)dx.
\end{alignat*}
\end{lemma}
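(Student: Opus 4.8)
The plan is to run the standard Myersonian envelope argument, but applied to the dark reduced-form objects $\left(Q_i^d, T_i^d, U_i^d\right)$ rather than to a single $\left(Q_i^n, T_i^n\right)$. The only input I would use is dark Bayesian incentive compatibility, which, rewritten in terms of $U_i^d$, says that for all $\theta_i,\theta_i'\in\Theta$ we have $U_i^d(\theta_i)\ge \theta_i Q_i^d(\theta_i')-T_i^d(\theta_i') = U_i^d(\theta_i') + (\theta_i-\theta_i')Q_i^d(\theta_i')$. This is formally identical to the incentive constraint of a one-dimensional screening problem with ``allocation'' $Q_i^d$ and ``transfer'' $T_i^d$, so the classical machinery applies verbatim; the only conceptual novelty is that the relevant allocation object is the belief-weighted average $Q_i^d=\sum_n p_i(n)Q_i^n$.

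First I would record two structural facts about $Q_i^d$. Writing the displayed inequality for the ordered pair $(\theta_i,\theta_i')$ and again for $(\theta_i',\theta_i)$ and adding yields $(\theta_i-\theta_i')\bigl(Q_i^d(\theta_i)-Q_i^d(\theta_i')\bigr)\ge 0$, so $Q_i^d$ is nondecreasing on $[0,1]$; and since $Q_i^d(\theta_i)=\sum_n p_i(n)Q_i^n(\theta_i)$ with each $Q_i^n(\theta_i)\in[0,1]$ and $\sum_n p_i(n)=1$, it is bounded in $[0,1]$. Being monotone and bounded, $Q_i^d$ is Riemann integrable and has at most countably many discontinuities. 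Next I would derive the envelope formula: for $\theta_i>\theta_i'$ the same two inequalities give the sandwich $Q_i^d(\theta_i')(\theta_i-\theta_i')\le U_i^d(\theta_i)-U_i^d(\theta_i')\le Q_i^d(\theta_i)(\theta_i-\theta_i')$. Boundedness of $Q_i^d$ makes $U_i^d$ Lipschitz, hence absolutely continuous on $[0,\theta_i]$, so $U_i^d(\theta_i)=U_i^d(0)+\int_0^{\theta_i}(U_i^d)'(x)\,dx$; and at each continuity point of $Q_i^d$ the sandwich forces $(U_i^d)'(x)=Q_i^d(x)$, which is therefore valid almost everywhere. This gives the first displayed identity, $U_i^d(\theta_i)=U_i^d(0)+\int_0^{\theta_i}Q_i^d(x)\,dx$.

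The payment identity is then immediate: by definition $T_i^d(\theta_i)=\theta_i Q_i^d(\theta_i)-U_i^d(\theta_i)$, and setting $\theta_i=0$ gives $T_i^d(0)=-U_i^d(0)$, so substituting the envelope formula yields $T_i^d(\theta_i)=T_i^d(0)+\theta_i Q_i^d(\theta_i)-\int_0^{\theta_i}Q_i^d(x)\,dx$. I do not expect a genuine obstacle here; the single point deserving care is the passage from ``$U_i^d$ absolutely continuous'' to the pointwise identification $(U_i^d)'=Q_i^d$ a.e., which is exactly what the monotonicity-based countability of discontinuities of $Q_i^d$ delivers. Everything else is the textbook revenue-equivalence computation transplanted to the dark setting, so the lemma should follow cleanly.
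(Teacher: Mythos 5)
Your proof is correct and is essentially the paper's own argument: the paper's proof of this lemma is simply a citation to \citet{myersonOptimalAuctionDesign1981}, relying on the observation that dark Bayesian incentive compatibility has exactly the same one-dimensional screening form as standard Bayesian incentive compatibility once stated in terms of the belief-weighted objects $Q_i^d$ and $T_i^d$. You have spelled out that standard envelope/monotonicity argument correctly, including the technical step identifying $(U_i^d)'$ with $Q_i^d$ almost everywhere.
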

\begin{proof}
    See \cite{myersonOptimalAuctionDesign1981}.
\end{proof}

Next, we show that the revenue equivalence theorem \citep{myersonOptimalAuctionDesign1981,rileyOptimalAuctions1981} extends to dark auctions. The expected payment from potential buyer $i \in \mathbb{B}$ is
\begin{alignat*}{1}
    \mathbb{E}_{\theta_{i} \in\Theta}\left[T_{i}^{d}\left(\theta_{i}\right)\right] &=T_{i}^{d}\left(0\right) + \mathbb{E}_{\theta_{i} \in\Theta}\left[\theta_{i}Q_{i}^{d}\left(\theta_{i}\right)-\int_{0}^{\theta_{i}}Q_{i}^{d}\left(x\right)dx\right] \\
    &= T_{i}^{d}\left(0\right) + \int_{0}^{1} \left[\theta_{i}Q_{i}^{d}\left(\theta_{i}\right)f\left(\theta_{i}\right) - Q_{i}^{d}\left(\theta_{i}\right)\left(1-F\left(\theta_{i}\right)\right)\right]d\theta_{i} \\
    &=T_{i}^{d}\left(0\right) + \mathbb{E}_{\theta_{i} \in\Theta}\left[Q_{i}^{d}\left(\theta_{i}\right)v\left(\theta_{i}\right)\right],
\end{alignat*}
where $v\left(\theta_{i}\right)=\theta_{i}-\frac{1-F\left(\theta_{i}\right)}{f\left(\theta_{i}\right)}$. However, to calculate the total expected revenue for the seller, denoted by $\pi^{d}$, we cannot simply sum this expression over all potential buyers $i\in\mathbb{B}$, because the set of buyers $\widetilde{B}$ is random. Potential buyer \(i\) may belong to the set of buyers ($i\in\widetilde{B}$) in some cases, but not in others. When doing the summation, we have to account for that. It turns out that we must weight the expected payment from potential buyer \(i\) exactly by the respective probability that $i$ is included in the set of buyers.

\begin{proposition}[\citet{mcafeeAuctionsStochasticNumber1987}]\label{prop:dark-rev-equiv}The expected revenue of a direct dark auction $\left(q^{d},t^{d}\right)$ can be expressed as
\begin{alignat*}{1}
\pi^{d} & =  \sum_{i\in\mathbb{B}}\mathrm{Pr}\left[i\in\widetilde{B}\right]\mathbb{E}_{\theta_{i} \in\Theta}\left[T_{i}^{d}\left(\theta_{i}\right)\right]\\
& =  \sum_{B\in\mathcal{B}}\mathrm{Pr}\left[\widetilde{B}=B\right]\mathbb{E}_{\theta_{B} \in\Theta^{\left|B\right|}}\left[\sum_{i\in B}q_{i}^{d}\left(\theta_{B}\right)v\left(\theta_{i}\right)\right]\\
 & \phantom{=} +\sum_{i\in\mathbb{B}}\mathrm{Pr}\left[i\in\widetilde{B}\right]T_{i}^{d}\left(0\right).
\end{alignat*}
\end{proposition}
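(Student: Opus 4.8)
The plan is to reduce the second expression to the first, and to obtain the first by writing out the definition of $\pi^{d}$ and regrouping the sum of per-bidder payments by \emph{potential} buyer rather than by realized set of buyers. Concretely, start from $\pi^{d}=\sum_{B\in\mathcal{B}}\mathrm{Pr}[\widetilde{B}=B]\,\mathbb{E}_{\theta_{B}\in\Theta^{|B|}}[\sum_{i\in B}t^{d}_{i}(\theta_{B})]$, which is just the definition of expected revenue; interchange the two (absolutely convergent, under $\mathbb{E}[|\widetilde{B}|]<\infty$ and the standing integrability of the payment rule) countable sums to get $\pi^{d}=\sum_{i\in\mathbb{B}}\sum_{B\ni i}\mathrm{Pr}[\widetilde{B}=B]\,\mathbb{E}_{\theta_{B}}[t^{d}_{i}(\theta_{B})]$. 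The crux is then to recognize that $\sum_{B\ni i}\mathrm{Pr}[\widetilde{B}=B]\,\mathbb{E}_{\theta_{B}}[t^{d}_{i}(\theta_{B})]=\mathrm{Pr}[i\in\widetilde{B}]\,\mathbb{E}_{\theta_{i}}[T^{d}_{i}(\theta_{i})]$: since the direct dark auction is anonymous and types are i.i.d., the number $\mathbb{E}_{\theta_{B}}[t^{d}_{i}(\theta_{B})]$ depends on $B$ only through $|B|$, and equals $\mathbb{E}_{\theta_{i}}[T^{n}_{i}(\theta_{i})]$ when $|B|=n$; grouping the sets $B\ni i$ by size and using $\sum_{B\ni i,\,|B|=n}\mathrm{Pr}[\widetilde{B}=B]=\mathrm{Pr}[i\in\widetilde{B},\,|\widetilde{B}|=n]=\mathrm{Pr}[i\in\widetilde{B}]\,p_{i}(n)$ reassembles the weighted sum $\mathrm{Pr}[i\in\widetilde{B}]\sum_{n}p_{i}(n)\,\mathbb{E}_{\theta_{i}}[T^{n}_{i}(\theta_{i})]=\mathrm{Pr}[i\in\widetilde{B}]\,\mathbb{E}_{\theta_{i}}[T^{d}_{i}(\theta_{i})]$. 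This establishes the first equality.

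For the second equality, substitute into it the revenue-equivalence identity already derived in the text immediately above the proposition, namely $\mathbb{E}_{\theta_{i}}[T^{d}_{i}(\theta_{i})]=T^{d}_{i}(0)+\mathbb{E}_{\theta_{i}}[Q^{d}_{i}(\theta_{i})v(\theta_{i})]$, a consequence of Lemma~\ref{lem:dark-payoff-equiv} and integration by parts. This gives $\pi^{d}=\sum_{i\in\mathbb{B}}\mathrm{Pr}[i\in\widetilde{B}]\,(T^{d}_{i}(0)+\mathbb{E}_{\theta_{i}}[Q^{d}_{i}(\theta_{i})v(\theta_{i})])$, and it only remains to rewrite $\sum_{i}\mathrm{Pr}[i\in\widetilde{B}]\,\mathbb{E}_{\theta_{i}}[Q^{d}_{i}(\theta_{i})v(\theta_{i})]$ as $\sum_{B}\mathrm{Pr}[\widetilde{B}=B]\,\mathbb{E}_{\theta_{B}}[\sum_{i\in B}q^{d}_{i}(\theta_{B})v(\theta_{i})]$. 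This is the same regrouping run in reverse: expand $Q^{d}_{i}(\theta_{i})=\sum_{n}p_{i}(n)\,\mathbb{E}_{\theta_{-i}}[q^{d}_{i}(\theta_{i},\theta_{-i})]$, use $\mathrm{Pr}[i\in\widetilde{B}]\,p_{i}(n)=\mathrm{Pr}[i\in\widetilde{B},\,|\widetilde{B}|=n]$ together with anonymity (so that $\mathbb{E}_{\theta_{B}}[q^{d}_{i}(\theta_{B})v(\theta_{i})]$ is common across all $i\in B$ with $|B|$ fixed), and then collapse the double sum over $(i,n)$ into a sum over realized sets via the elementary double-counting identity $\sum_{i\in\mathbb{B}}\mathbf{1}(i\in B)=|B|$, equivalently $\sum_{i}\mathrm{Pr}[i\in\widetilde{B},\,|\widetilde{B}|=n]=n\,\mathrm{Pr}[|\widetilde{B}|=n]$. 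The term $\sum_{i}\mathrm{Pr}[i\in\widetilde{B}]T^{d}_{i}(0)$ is carried through unchanged, which yields the stated expression.

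The substance here is bookkeeping rather than a new idea, so the main obstacle is organizing the randomness of $\widetilde{B}$ cleanly: one must verify that the two ways of summing per-bidder quantities---weighting each potential buyer $i$ by its inclusion probability $\mathrm{Pr}[i\in\widetilde{B}]$ versus summing over realizations $B$---coincide. The only genuinely substantive input is anonymity, which is what guarantees that conditional expectations of $t^{d}_{i}$ (and of $q^{d}_{i}v(\theta_{i})$) over i.i.d.\ types do not depend on \emph{which} labels are present, only on how many; combined with the fact that summing membership indicators over potential buyers returns the cardinality, this is precisely McAfee's observation that each potential buyer's contribution must be weighted by the probability of being in the auction. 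The remaining care is the routine check that the interchanges of summation and integration are licensed by absolute convergence, which holds under the maintained assumption $\sum_{n}n\,\mathrm{Pr}[|\widetilde{B}|=n]<\infty$ and the integrability of the outcome rule.
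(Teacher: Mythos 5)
Your proposal is correct and follows essentially the same route as the paper's proof: both start from the revenue summed over realized sets $B$, interchange the (absolutely convergent) sums, group the sets containing $i$ by cardinality via $\mathrm{Pr}[|\widetilde{B}|=n,\,i\in\widetilde{B}]=p_i(n)\,\mathrm{Pr}[i\in\widetilde{B}]$ to obtain the per-potential-buyer form, substitute the payoff-equivalence identity, and then run the same regrouping in reverse for the virtual-value term. The only cosmetic difference is that you make the role of anonymity explicit where the paper leaves it implicit in the definition of $T_i^n$.
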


Ex-post individual rationality requires that \(T^d_i(0) \leq 0\). The expected revenue $\pi^{d}$ is maximized when $\sum_{i\in B}q_{i}^{d}\left(\theta_{B}\right)v\left(\theta_{i}\right)$ is maximized for all $B\in\mathcal{B}$, meaning that the dark auction always allocates the item to the buyer of the highest type conditional on being above the reserve price $\rho^{*}$.\footnote{When \(\mathrm{Pr}[\widetilde{B} = B] = 0\), we assume that the optimal allocation rule is still applied.} Notice that the dark second-price auction is equivalent to the (lit) second-price auction, because the number of buyers is irrelevant. Optimality (Definition~\ref{def:opt}) in dark auctions can be achieved by running the dark second-price auction with the reserve price $\rho^{*}$, which also attains optimality in lit auctions. Thus, we arrive at the following result.

\begin{corollary}\label{cor:optimal}
Optimal dark auctions generate the same expected revenue as optimal lit auctions.
\end{corollary}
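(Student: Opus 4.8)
The plan is to show that the optimal dark revenue and the optimal lit revenue are both equal to the same Myerson-type quantity, $\sum_{B\in\mathcal{B}}\mathrm{Pr}[\widetilde{B}=B]\,R^{*}_{|B|}$, where $R^{*}_{n}$ denotes the optimal expected revenue of an $n$-buyer single-item auction. The upper bound for dark auctions will come from Proposition~\ref{prop:dark-rev-equiv} together with ex-post individual rationality, and the matching lower bound will come essentially for free from Proposition~\ref{prop:dark-lit}.

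First I would derive the upper bound. For any direct dark auction $(q^{d},t^{d})$, ex-post individual rationality gives $T_{i}^{d}(0)\leq 0$, so in the expression of Proposition~\ref{prop:dark-rev-equiv} the term $\sum_{i\in\mathbb{B}}\mathrm{Pr}[i\in\widetilde{B}]\,T_{i}^{d}(0)$ is nonpositive. Since $q_{i}^{d}\geq 0$ and $\sum_{i\in B}q_{i}^{d}(\theta_{B})\leq 1$, for each $B\in\mathcal{B}$ and each $\theta_{B}$ we have $\sum_{i\in B}q_{i}^{d}(\theta_{B})v(\theta_{i})\leq \max\{0,\max_{i\in B}v(\theta_{i})\}$; taking expectations and summing over $B$ yields
\[
\pi^{d}\ \leq\ \sum_{B\in\mathcal{B}}\mathrm{Pr}[\widetilde{B}=B]\ \mathbb{E}_{\theta_{B}\in\Theta^{|B|}}\Bigl[\max\bigl\{0,\ \max_{i\in B}v(\theta_{i})\bigr\}\Bigr].
\]
By regularity and \citet{myersonOptimalAuctionDesign1981}, the inner expectation equals $R^{*}_{|B|}$, attained by awarding the item to the highest type whenever that type is at least $\rho^{*}$ (equivalently, by the second-price auction with reserve $\rho^{*}$).

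Then I would argue that the optimal lit revenue equals the same quantity. Since both the objective and the ex-post individual rationality constraints of the lit problem are separable across the number of bidders $n$, an optimal lit auction maximizes revenue $n$-by-$n$, so its expected revenue is $\sum_{B}\mathrm{Pr}[\widetilde{B}=B]\,R^{*}_{|B|}$, again realized by the second-price auction with reserve $\rho^{*}$. Finally, Proposition~\ref{prop:dark-lit} applied to this optimal lit mechanism produces an equivalent dark auction with identical revenue, so that optimal dark revenue $\geq$ optimal lit revenue $=\sum_{B}\mathrm{Pr}[\widetilde{B}=B]\,R^{*}_{|B|}\geq\pi^{d}$ for every dark auction by the bound just derived; hence optimal dark revenue equals optimal lit revenue. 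The dark second-price auction with reserve $\rho^{*}$ serves as an explicit witness on both sides, consistent with the remark that it is equivalent to its lit counterpart because the number of buyers is irrelevant under dominant-strategy bidding.

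I do not anticipate a real obstacle; the proof is essentially bookkeeping layered on Propositions~\ref{prop:dark-rev-equiv} and~\ref{prop:dark-lit}. The only subtlety worth flagging is that the pointwise upper bound on $\sum_{i\in B}q_{i}^{d}(\theta_{B})v(\theta_{i})$ must be attainable by a single feasible and dark-incentive-compatible allocation rule simultaneously for all $B$, which holds because the second-price auction with reserve $\rho^{*}$ does exactly this, and ties---being probability-zero events under the continuous full-support density---do not affect the revenue comparison.
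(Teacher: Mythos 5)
Your proposal is correct and follows essentially the same route as the paper: both bound the dark revenue via Proposition~\ref{prop:dark-rev-equiv} together with ex-post individual rationality ($T_i^d(0)\leq 0$), maximize the virtual surplus pointwise for each realized $B$, and observe that the dark second-price auction with reserve $\rho^*$ attains this bound while being equivalent to the optimal lit auction. The only cosmetic difference is that you invoke Proposition~\ref{prop:dark-lit} explicitly for the lower bound, whereas the paper simply notes that the dark and lit second-price auctions coincide because the number of buyers is irrelevant under that format.
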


Although dark auctions cannot yield higher revenue than lit auctions, the relaxation of incentive compatibility constraints allows for a broader range of outcome rules to achieve optimality---rules that are not implementable in lit auctions. To illustrate this, let us revisit the dark first-price auction in Example~\ref{ex:dark-first-price}.

Now we derive and verify the symmetric equilibrium bidding function \(\beta^d:\Theta \to \{0\} \cup [\rho^*, +\infty)\) in the dark first-price auction with the reserve price $\rho^{*}$.\footnote{Without loss of generality, a buyer whose type is below the reserve price \(\rho^*\) is assumed to bid zero.} We follow the reasoning of \citet{milgromPuttingAuctionTheory2004}. In a symmetric equilibrium, placing higher bids strictly increases the probability of winning. Dark Bayesian incentive compatibility implies that the probability of winning \(Q^d\) is increasing in buyers' types. Therefore, \(\beta^d\) is increasing in buyers' types. Moreover, any symmetric equilibrium bidding function must be strictly increasing on the subdomain of types where \(\theta \geq \rho^*\). Otherwise, the auction would end in a tie with probability \(\varepsilon > 0\) with several buyers bidding the same amount \(b \geq \rho^*\) and each strictly prefers to win at the price \(b\). In this case, a buyer who bids \(b\) could increase the expected payoff by bidding slightly more, say \(b'>b\). This will increase the probability of winning by at least \(\varepsilon\) at the cost of \(b'-b\), which can be made arbitrarily small, proving that the candidate bidding function is not an equilibrium.

With a strictly increasing symmetric equilibrium bidding function, the dark first-price auction with the reserve price $\rho^{*}$ induces the same allocation rule as the dark second-price auction with the reserve price $\rho^{*}$. Because buyers of type zero obtain payoffs of zero in both auctions, Proposition~\ref{prop:dark-rev-equiv} implies that both auctions are optimal and Lemma~\ref{lem:dark-payoff-equiv} ensures that the expected payoffs of all types must be identical in both auctions. Therefore, the only possible symmetric equilibrium bidding function is the one that equates the expected payments in both auctions for all types: for all $\theta_{i}\geq\rho^{*}$,
\[
\beta^{d}\left(\theta_{i}\right)\times\left[\sum_{n=1}^{\infty}p(n)F^{n-1}\left(\theta_{i}\right)\right]=\sum_{n=1}^{\infty}p(n)\mathbb{E}_{\theta_{-i}\in\Theta^{n-1}}\left[\mathbf{1}_{\theta_{i}\geq\max\left\{ \theta_{-i},\rho^{*}\right\} } \times \max\left\{ \theta_{-i},\rho^{*}\right\} \right],
\]
where $F^{0}\left(\cdot\right)=1$ and \(p\left(n\right)\) is buyers' symmetric prior about the number of buyers being \(n\). The left-hand side is the equilibrium bid times the probability of winning in the dark first-price auction, while the right-hand side is the expected payment in the dark second-price auction. After performing the calculation, we obtain
\[
\beta^{d}\left(\theta\right)=\theta-\frac{\int_{\rho^{*}}^{\theta}\left[\sum_{n=1}^{\infty}p(n)F^{n-1}\left(x\right)\right]dx}{\sum_{n=1}^{\infty}p(n)F^{n-1}\left(\theta\right)}.
\]

We have identified \(\beta^d\) as the unique candidate for the symmetric equilibrium bidding function. As a basic check, observe that the equilibrium bidding function coincides with the conventional form when the distribution of the number of buyers becomes degenerate. It is straightforward to verify that the bidding function \(\beta^d\) constitutes an equilibrium, as it satisfies Lemma~\ref{lem:dark-payoff-equiv} (see \cite{myersonOptimalAuctionDesign1981} for details).

\begin{definition}
    The dark first-price auction game with the reserve price $\rho^{*}$ admits a unique symmetric equilibrium bidding function \(\beta^d\). We refer to this game with the strategy \(\beta^d\) as the dark first-price auction with the reserve price $\rho^{*}$.
\end{definition}

This auction generates the same expected revenue as the (lit) first-price auction with the reserve price $\rho^{*}$, as both are optimal. However, they generate different expected revenue for the same realized set of buyers. In other words, they implement different outcome rules to achieve optimality. In particular, the outcome rule implemented by the dark first-price auction cannot be implemented in lit auctions. This is a key advantage of dark auctions. We will see in the next section that this advantage can be leveraged to achieve identity compatibility and optimality at the same time.

\subsection{Identity Compatibility in Dark Auctions}

First, observe that in (direct) dark auctions, the deviation strategy under multiple identities cannot depend on the number of bidders, because the deviator no longer observes it.\footnote{This is because we focus on direct auctions. In indirect dark auctions, the deviation strategy under multiple identities could possibly depend on the number of bidders. However, restricting attention to direct auctions---whether lit or dark---entails no loss of generality. See Appendix~\ref{subsec:extensive-forms} for more details.} Accordingly, we impose that \(\hat{\theta}_{N_i}^{\left|B\right|} = \hat{\theta}_{N_i}\) and \(\hat{\theta}_{S}^{\left|B\right|} = \hat{\theta}_{S}\) for all \(\left|B\right| \in \mathbb{N}\)  in Bayesian identity compatibility for dark auctions (Definition~\ref{def:Bayesian-buyer} and~\ref{def:Bayesian-seller}). We omit the formal definitions for brevity. Apart from this adjustment, all notions of identity compatibility for lit auctions carry over directly to dark auctions.\footnote{Ex-post identity compatibility remains unchanged in dark auctions, as the type profile itself reveals the number of bidders (Definition~\ref{def:ex-post-buyer},~\ref{def:ex-post-seller}, and~\ref{def:ex-post-auctioneer}).}

\subsubsection{Buyer Identity Compatibility}
We start with buyer identity compatibility. The results from lit auctions extend to dark auctions. It is straightforward to verify that the dark first-price auction is Bayesian buyer identity-compatible, as discussed in Example~\ref{ex:first-price}. Take the optimal one for example. We have the allocation and payment rules as follows:
\begin{alignat*}{1}
q_{i}^{d-1st}\left(\theta_{i},\theta_{-i}\right) & =\mathbf{1}_{\theta_{i}>\max\left\{ \theta_{-i},\rho^{*}\right\} },\\
t_{i}^{d-1st}\left(\theta_{i},\theta_{-i}\right) & =\mathbf{1}_{\theta_{i}>\max\left\{ \theta_{-i},\rho^{*}\right\} }\times \beta^{d}\left(\theta_{i}\right).
\end{alignat*}
Bayesian buyer identity compatibility for dark auctions requires us to check that
\begin{alignat*}{1}
    & \mathbb{E}_{B \in \mathcal{B}}\left[\left.\mathbb{E}_{\theta_{-i} \in\Theta^{\left|B\right|-1}}\left[\mathbf{1}_{\theta_{i}>\max\left\{ \theta_{-i},\rho^{*}\right\} }\times\left(\theta_{i}-\beta^{d}\left(\theta_{i}\right)\right)\right]\right|i\in B\right]\\
\geq & \mathbb{E}_{B \in \mathcal{B}}\left[\left.\mathbb{E}_{\theta_{-i} \in\Theta^{\left|B\right|-1}}\left[\mathbf{1}_{\max\left\{ \hat{\theta}_{N_{i}}\right\} >\max\left\{ \theta_{-i},\rho^{*}\right\} }\times\left(\theta_{i}-\beta^{d}\left(\max\left\{ \hat{\theta}_{N_{i}}\right\} \right)\right)\right]\right|i\in B\right],
\end{alignat*}
which is guaranteed by dark Bayesian incentive compatibility of the dark first-price auction (with reserve). The equality holds when $\max\left\{ \hat{\theta}_{N_{i}}\right\} =\theta_{i}$, and the number of identities \(\left|N_i\right|\) is irrelevant.

Ex-post buyer identity compatibility still implies strategy-proofness in dark auctions (Lemma~\ref{lem:bip-sp}). Since strategy-proofness makes the number of bidders irrelevant, the proof of Theorem~\ref{thm:ex-post-bip} remains valid even when the number of bidders is concealed. Thus, the conclusion still holds in dark auctions that the dark second-price auction with (or without) the reserve price \(\rho^*\) is the unique optimal (or optimally efficient) dark auction that is ex-post buyer identity-compatible.

\subsubsection{Seller Identity Compatibility}

Notice that the optimal reserve price remains unchanged regardless of the number of buyers and is therefore unaffected by whether the number of bidders is disclosed. Consequently, Proposition~\ref{prop:Bayesian-sip} remains true that the dark second-price auction with the reserve price \(\rho^*\) is Bayesian seller identity-compatible.

Returning to our motivation for introducing dark auctions, we seek optimal auctions that are ex-post seller identity-compatible, given the impossibility result of Theorem~\ref{thm:lit-opt} for lit auctions. We now show that the dark first-price auction with the reserve price $\rho^{*}$ satisfies this criterion. In fact, we can establish a stronger result: it is ex-post auctioneer identity-compatible. It suffices to verify that
\begin{alignat*}{1}
    & \mathbb{E}_{B \in \mathcal{B}} \left[\mathbb{E}_{\theta_{B} \in\Theta^{\left|B\right|}}\left[\sum_{i\in B}\mathbf{1}_{\theta_{i}>\max\left\{ \theta_{-i},\rho^{*}\right\} }\times\beta^{d}\left(\theta_{i}\right)\right]\right] \\
    \geq & \mathbb{E}_{B \in \mathcal{B}} \left[\mathbb{E}_{\theta_{B} \in\Theta^{\left|B\right|}}\left[\sup_{\left|S\right| \in \mathbb{N} \cup \left\{0\right\},\theta_{S}\in\Theta^{\left|S\right|}} \sum_{i\in B} \mathbf{1}_{\theta_{i}>\max\left\{ \theta_{-i},\theta_{S},\rho^{*}\right\} }\times\beta^{d}\left(\theta_{i}\right)\right]\right],
\end{alignat*}
which holds because both sides simplify to $\mathbb{E}_{B \in \mathcal{B}} \left[\mathbb{E}_{\theta_{B} \in\Theta^{\left|B\right|}}\left[\mathbf{1}_{\max\left\{ \theta_{B}\right\} >\rho^{*}}\times\beta^{d}\left(\max\left\{ \theta_{B}\right\} \right)\right]\right]$.

Note that it is not unique. For instance, we can construct a fixed payment scheme, $\beta^{d}\left(\theta, n\right)$, for each realized set of buyers $B$ with $\left|B\right|=n$. Let $\beta^{d}\left(\theta, n\right)$ decrease with the number of buyers, \(n\), while maintaining dark Bayesian incentive compatibility by ensuring that its expected sum equals \(\beta^d\left(\theta\right)\), i.e., \(\sum_{n=1}^{\infty}p\left(n\right) \beta^{d}\left(\theta, n\right) = \beta^d\left(\theta\right)\). The seller is always worse off by participating in such an auction, making it ex-post auctioneer identity-compatible while remaining optimal. Although such alternative mechanisms exist, we argue that the dark first-price auction is the simplest and most commonly observed in practice. Furthermore, if these alternative auctions were implemented, buyers would find it profitable to push down the winning payment by using multiple identities. This leads to the following characterization of the dark first-price auction. The proof is relegated to Appendix~\ref{proof:dark-opt}.

\begin{theorem}[Light in the Dark]\label{thm:dark-opt}
    \hfill
    \begin{itemize}
        \item The dark first-price auction with the reserve price $\rho^{*}$ is the unique optimal dark auction that is ex-post auctioneer identity-compatible and Bayesian buyer identity-compatible.
        \item The dark first-price auction is the unique optimally efficient dark auction that is ex-post auctioneer identity-compatible and Bayesian buyer identity-compatible.
    \end{itemize}
\end{theorem}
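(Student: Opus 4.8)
The plan is to treat existence and uniqueness separately. Existence is essentially settled by the discussion preceding the theorem: the dark first-price auction with reserve $\rho^*$ was already shown to be optimal, ex-post auctioneer identity-compatible, and Bayesian buyer identity-compatible, and its no-reserve version is the optimally efficient analogue. So the work is in uniqueness, and here I would proceed as follows. By Lemma~\ref{lem:dark-revelation} it suffices to work with a direct dark auction $(q^d,t^d)$. First, optimality together with Proposition~\ref{prop:dark-rev-equiv} pins down the allocation rule — the good goes to the highest-type buyer whenever that type exceeds $\rho^*$ (always, in the optimally efficient case) — and forces $T_i^d(0)=0$; combined with ex-post individual rationality and Lemma~\ref{lem:dark-payoff-equiv}, this makes every interim payment $T_i^d(\cdot)$ coincide with that of the dark first-price auction, and pins all payments to zero on no-sale profiles and on types below $\rho^*$. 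What remains — the heart of the argument — is to show that the ex-post payment rule on sale profiles must be pay-as-bid, $t_i^d(\theta_N)=\mathbf{1}_{\theta_i>\max\{\theta_{-i},\rho^*\}}\,\beta^d(\theta_i)$.

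For that I would first extract two facts from ex-post auctioneer identity compatibility. Since the allocation is fixed, a buyer $i$ can have $q_i(\theta_B,\theta_S)>0$ only when $i$ is the (generically unique) highest-type bidder in $B\cup S$ and is above $\rho^*$; hence the right-hand side of the auctioneer condition collapses, on sale profiles, to $\mathbb{E}_{B}\mathbb{E}_{\theta_B}\big[\sup_{n\ge|B|,\ \theta_S:\ \max\theta_S<\max\theta_B} t_{i^*}(\theta_B,\theta_S)\big]$, where $i^*$ is the top buyer, while the left-hand side is the total expected revenue. Evaluating the supremum at $\theta_S=\emptyset$ gives $t_{i^*}(\theta_B)$ as a lower bound; since losers' payments are non-positive by ex-post individual rationality, the auctioneer inequality can hold only with equality throughout. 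This yields: (a) on every sale profile the losers pay zero; and (b) for a.e.\ $\theta_B$, adjoining bidders with types all below $\max\theta_B$ never raises the winner's payment — $t_{i^*}(\theta_B,\theta_S)\le t_{i^*}(\theta_B)$ — and the supremum over such adjunctions equals $t_{i^*}(\theta_B)$.

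Next I would use Bayesian buyer identity compatibility to obtain the reverse of (b). Fix a buyer of type $\theta_i$ and a constant $\hat\theta<\theta_i$, and consider the deviation of running two identities that report $(\theta_i,\hat\theta)$; the second identity never wins, and fact (a) makes it pay zero, so relative to honest play this changes the buyer's payoff by $t_{i^*}(\theta_i,\theta_{-i})-t_{i^*}(\theta_i,\hat\theta,\theta_{-i})$ on the event that $i$ would win and by a non-negative amount otherwise. Fact (b) makes this non-negative a.e., so Bayesian buyer identity compatibility forces it to vanish a.e.: adjoining a lower bidder leaves the winner's payment \emph{exactly} unchanged. Iterating from the one-bidder profile (which has positive probability by assumption) together with a routine Fubini argument then shows that $t_{i^*}(\theta_B)$ depends only on $\max\theta_B$, say $t_{i^*}(\theta_B)=g(\max\theta_B)$. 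Computing $T_i^d$ from (a) and this invariance gives $T_i^d(\theta_i)=g(\theta_i)\sum_{n}p(n)F^{n-1}(\theta_i)$, and matching against the value fixed by Lemma~\ref{lem:dark-payoff-equiv} forces $g\equiv\beta^d$ on $[\rho^*,1]$. With (a), the invariance, and the already pinned-down allocation and no-sale payments, the auction is outcome-equivalent to the dark first-price auction with reserve $\rho^*$; the optimally efficient statement is the special case $\rho^*=0$.

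The step I expect to be most delicate is the fusion of the two one-sided conditions — ex-post auctioneer identity compatibility bounding the winner's payment from above under added competition, Bayesian buyer identity compatibility bounding it from below — into the exact invariance of the winner's payment under adjoining lower bidders, and then legitimately propagating the resulting almost-everywhere statements along chains of bidder-adjunctions to conclude that $t_{i^*}$ is a function of the top type alone. The handling of ties and of the no-sale region when $\rho^*>0$ will require some bookkeeping, but no new ideas.
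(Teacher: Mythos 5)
Your proposal is correct and follows essentially the same route as the paper's proof: optimality pins down the allocation, ex-post auctioneer identity compatibility bounds the winner's payment from above under adjoined low-type bidders (and forces losers to pay zero), Bayesian buyer identity compatibility supplies the reverse bound, and the resulting invariance plus anonymity and $\mathrm{Pr}[|\widetilde{B}|=1]>0$ force the pay-as-bid rule $\beta^d$. The only difference is cosmetic---you extract the two one-sided conditions pointwise from the integrand comparison, whereas the paper states them at the interim level per buyer via Proposition~\ref{prop:dark-rev-equiv}---but the logical content is identical.
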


To understand how dark auctions circumvent the conflict between revenue maximization and the creation of fake competition through identities in lit auctions, as highlighted in Theorem~\ref{thm:lit-opt}, we must first examine the impact of shill bidding on competition in the auction. Competition in the auction can be analyzed from two perspectives: extensive competition, which arises from variations in the number of bidders, and intensive competition, which pertains to a fixed number of bidders. Different auctions exhibit different combinations of these two forms of competition. In the second-price auction, extensive competition is subsumed by intensive competition, which can be exploited by the seller. In contrast, the first-price auction integrates the intensive competition into a fixed bid, making it resistant to manipulation by other participants. However, the extensive competition remained in the first price auction can still be exploited by the seller, as the equilibrium bidding function increases with the number of bidders. The dark first-price auction addresses this by consolidating the extensive competition into a fixed bid across different numbers of bidders, thereby muting the channel of intensifying the perceived competition among buyers by leveraging more identities (bidders) in the auction and preventing such exploitation.\footnote{In a different context, without concerns about shill bidding but maintaining uncertainty about the number of buyers, \citet{mcafeeAuctionsStochasticNumber1987} and \citet{matthewsComparingAuctionsRisk1987} demonstrate that the seller can actually benefit from concealing the number of buyers when buyers exhibit constant or decreasing absolute risk aversion. In particular, they compare the first-price auction, the second-price auction, and the first-price auction concealing the number of buyers. Assuming constant or decreasing absolute risk aversion, the expected revenue is highest in the first-price auction concealing the number of buyers. For experimental evidence, see \citet*{dyerResolvingUncertaintyNumber1989} and \citet{aycinenaAuctionsEndogenousParticipation2018}. In other words, the dark first-price auction proves to be appealing both to buyers and, to some degree, to the seller.}

As a real-world example, Google switched from the second-price auction to the first-price auction for online advertising auctions in 2019.\footnote{\href{https://blog.google/products/admanager/update-first-price-auctions-google-ad-manager/}{An Update on First Price Auctions for Google Ad Manager, May 10, 2019}.} Notably, what Google refers to as the first-price auction is actually the dark first-price auction, as the number of advertisers is typically not disclosed. Google is currently facing a lawsuit over alleged shill-bidding-like practices in these auctions (Footnote~\ref{fn:google}). Some of the allegations relate specifically to the second-price auction and no longer apply to the dark first-price auction. This shift has at least eased concerns among market participants about shill bidding. On eBay, the posted-price mechanism has become the predominant selling mechanism \citep*{einavAuctionsPostedPrices2018}. From a dynamic perspective, this mechanism can be interpreted as the dark first-price (Dutch) auction, in which the seller would gradually lower the posted price over time until a buyer accepts it.

\section{Discussion}\label{sec:discussion}

\subsection{Identity Compatibility for Both Parties}\label{subsec:identity compatibility-both}
So far, we have primarily analyzed identity compatibility for the seller and buyers separately. It should come as no surprise that achieving identity compatibility for both parties simultaneously is challenging, as their incentives to engage in shill bidding are misaligned. As a result, we have the following auction dilemma:

\begin{corollary}\label{cor:impossibility}
    No optimal dark auction can be both ex-post buyer and ex-post seller identity-compatible. No optimally efficient dark auction can be both ex-post buyer and ex-post seller identity-compatible.
\end{corollary}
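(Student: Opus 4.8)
The plan is to deduce the corollary from the dark-auction version of Theorem~\ref{thm:ex-post-bip} together with a one-line check that the dark second-price auction fails ex-post seller identity compatibility. First I would invoke the fact established in Section~\ref{sec:dark} --- the extension of Theorem~\ref{thm:ex-post-bip} to concealed numbers of bidders --- that the dark second-price auction with reserve $\rho^{*}$ is the \emph{unique} optimal dark auction that is ex-post buyer identity-compatible, and that the dark second-price auction (no reserve) is the unique optimally efficient such auction. Since equivalence (Definition~\ref{def:equivalent}) is purely outcome-based and ex-post seller identity compatibility (Definition~\ref{def:ex-post-seller}) depends only on the induced outcome rule $(q,t)$, it then suffices to show that this one auction violates ex-post seller identity compatibility: any optimal (respectively optimally efficient) dark auction that is ex-post buyer identity-compatible coincides with it up to outcome equivalence, and the violation transfers.

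Second, I would exhibit a profitable ex-post deviation for the seller using a single shill, $\left|S\right|=1$. Condition on the event that the realized set of buyers is a singleton --- of positive probability by the maintained assumption $\mathrm{Pr}[|\widetilde{B}|=1]>0$ --- and, within it, on the lone buyer's type $\theta$ exceeding $\rho^{*}$, which again has positive probability because $f$ has full support and $\rho^{*}<1$. On this event the seller's equilibrium revenue in the dark second-price auction is exactly the constant $\rho^{*}$ (respectively $0$ when there is no reserve), whereas a shill bid $\theta_{S}=\theta-\varepsilon$ with $\varepsilon>0$ small leaves the allocation unchanged --- the buyer still wins --- but raises the transfer to $\theta-\varepsilon$. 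Hence $\sup_{\theta_{S}\in\Theta}\sum_{i\in B}t_{i}(\theta_{B},\theta_{S})=\theta$ on this event (the value $\theta_{S}=\theta$ need not be attained because of uniform tie-breaking, but this does not affect the supremum), so
\[
\mathbb{E}_{B \in \mathcal{B}}\mathbb{E}_{\theta_{B}}\Bigl[\sup_{\theta_{S}\in\Theta}\sum_{i\in B}t_{i}\left(\theta_{B},\theta_{S}\right)\Bigr] > \mathbb{E}_{B \in \mathcal{B}}\mathbb{E}_{\theta_{B}}\Bigl[\sum_{i\in B}t_{i}\left(\theta_{B}\right)\Bigr],
\]
which is precisely the failure of ex-post seller identity compatibility at $\left|S\right|=1$. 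Combining the two steps yields both statements of the corollary: an optimal (optimally efficient) dark auction that is ex-post buyer identity-compatible must be the dark second-price auction with (without) reserve $\rho^{*}$, and that auction is not ex-post seller identity-compatible.

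The argument is short because the uniqueness half of Theorem~\ref{thm:ex-post-bip} does the heavy lifting; the only delicate point --- and the natural place for a slip --- is the continuous-type bookkeeping in the second step. There one must confirm that $\theta_{S}\mapsto\sum_{i\in B}t_{i}(\theta_{B},\theta_{S})$ admits a well-defined (measurable) supremum matching the intent of Definition~\ref{def:ex-post-seller}, and that integrating over the positive-probability event $\{|\widetilde{B}|=1,\ \theta>\rho^{*}\}$ genuinely produces a \emph{strict} inequality, i.e.\ that the equilibrium revenue there really is the constant $\rho^{*}$ and not something larger. Both are immediate for the second-price rule but are the steps I would write out explicitly rather than assert. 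A reserve-free shortcut that sidesteps even this is to treat the optimally efficient case first: on $\{|\widetilde{B}|=1,\ \theta\in(0,1]\}$ the equilibrium transfer is literally $0$, and any shill bid strictly between $0$ and $\theta$ strictly improves on it, giving the strict inequality at once; the general optimal case then follows verbatim with $\rho^{*}$ in place of $0$.
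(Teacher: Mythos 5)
Your proof is correct and follows essentially the same route as the paper's: use the (dark-auction extension of) Theorem~\ref{thm:ex-post-bip} / Lemma~\ref{lem:bip-sp} to pin down the second-price auction with (or without) reserve as the only candidate, then observe that it fails ex-post seller identity compatibility. The only difference is that you spell out the seller's deviation (a single shill bidding $\theta-\varepsilon$ on the positive-probability event $\{|\widetilde{B}|=1,\ \theta>\rho^{*}\}$), which the paper asserts without detail; this is a welcome elaboration rather than a different argument.
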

\begin{proof}
    Ex-post buyer identity compatibility implies strategy-proofness (Lemma~\ref{lem:bip-sp}). Optimality (or optimal efficiency) and strategy-proofness pin down the second-price auction with (or without) reserve. However, the second-price auction is not ex-post seller identity-compatible. Therefore, no optimal (or optimally efficient) auction can be both ex-post buyer and ex-post seller identity-compatible.
\end{proof}

This impossibility result also applies to lit auctions, as dark auctions encompass lit auctions (Proposition~\ref{prop:dark-lit}).  The intuition is that ex-post buyer identity compatibility requires the winning payment to be independent of the winner's type. Meanwhile, ex-post seller identity compatibility requires the winning payment to be independent of the loser's type; otherwise, the seller could manipulate the payment by pretending to be a loser in the auction. As a result, the winning payment must be fixed, which contradicts optimality.

Despite that ex-post identity compatibility for both the seller and buyers is unattainable, we can still achieve a weaker form of identity compatibility for both parties simultaneously. Theorem~\ref{thm:ex-post-bip} and Proposition~\ref{prop:Bayesian-sip} show that ex-post buyer identity compatibility, Bayesian seller identity compatibility, and optimality can be attained simultaneously by the second-price auction with the reserve price \(\rho^*\). Furthermore, Theorem~\ref{thm:dark-opt} demonstrates that we can achieve Bayesian buyer identity compatibility, ex-post seller identity compatibility, and efficiency (or optimality) simultaneously by running the dark first-price auction (or with reserve). This underscores a tradeoff between the seller and buyers, where the choice of auction depends on which party is more informed when shill bidding.

\subsection{Implementation via Extensive Forms}\label{subsec:extensive-forms}

While the paper focuses on direct mechanisms, it is important to recognize that any outcome rule can, in principle, be implemented through dynamic mechanisms via extensive forms (or indirect mechanisms). In dynamic mechanisms, on-path strategies can be fully characterized by type profiles; however, off-path strategies may also emerge, which cannot be captured solely through type profiles. Hence, in any dynamic mechanism, agents can replicate the behavior of the type they would have reported in the direct mechanism. Furthermore, dynamic mechanisms enable agents to explore off-path deviations, which may become profitable when leveraging multiple identities. The presence of off-path deviations in dynamic auctions strengthens the concept of identity compatibility, as the expanded strategy space available in dynamic auctions offers participants more opportunities to manipulate auction outcomes through the use of multiple identities compared to direct auctions. As a result, if a dynamic auction satisfies a specific notion of identity compatibility as defined in this paper, the corresponding direct auction must also satisfy the same notion.

The converse, however, does not hold in general. As illustrated in Example~\ref{ex:collusion}, the fact that off-path deviations are suboptimal when a buyer uses a single identity does not imply their suboptimality when a buyer can simultaneously employ multiple identities (see Footnote~\ref{fn:off-path}). Nonetheless, when the objective is to design auctions that disincentive shill bidding, it is without loss of generality to restrict attention to direct auctions. That said, care must be taken when considering alternative implementations of a given outcome rule, as indirect implementations may introduce deviations that are not possible under direct implementations.\footnote{The observation that deviation is generally easier in dynamic (or indirect) mechanisms than in direct mechanisms is also noted by \citet{levinMisbehaviorCommonValue2023}, who refer to this phenomenon as the misbehavior principle.}

The literature often adopts the finite type space for technical convenience when analyzing dynamic auctions through finite extensive forms \citep*{akbarpourCredibleAuctionsTrilemma2020,komoShillProofAuctions2024}.\footnote{The English auction with discrete steps is not optimal in the continuous type space\@. \citet{milgromTheoryAuctionsCompetitive1982} develop a continuous-time auction model specifically to analyze the English auction. However, extending the analysis to general dynamic games requires addressing many difficulties in the continuous-time game theory \citep{simonExtensiveFormGames1989}.} This paper does not explicitly model the extensive-form game and therefore does not focus on the finite extensive-form game with the finite type space. Nevertheless, as a robustness check, we extend our analysis to the finite type space. The key findings are summarized below, with complete proofs provided in Online Appendix~\ref{subsec:finite}.

Our main insights continue to hold in the finite type space. In particular, lit auctions allow the seller to heighten the perceived competition by inflating the number of bidders, whereas dark auctions mute the channel. This channel operates independently of the choice of the type space. Consequently, the impossibility result (Theorem~\ref{thm:lit-opt}) for lit auctions remains valid because of the conflict between revenue maximization and fake competition. From a technical perspective, we need to handle ties carefully in the finite type space, since participants can submit identical bids under multiple identities to increase their chances of winning ties or affect the payments. As is well-documented in the literature, the domain of the type space can affect characterization results. For instance, strategy-proofness and (optimal) efficiency do not uniquely pin down the second-price auction \citep{holmstromGrovesSchemeRestricted1979, harrisAllocationMechanismsDesign1981,lovejoyOptimalMechanismsFinite2006,elkindDesigningLearningOptimal2007,jeongFirstPricePrincipleMaximizing2023}. In our setting, the characterization of the dark first-price auction (Theorem~\ref{thm:dark-opt}) does not extend to the finite type space.\footnote{In the finite type space, the dark first-price auction with (or without) reserve remains optimal (or optimally efficient) and ex-post auctioneer identity-compatible, because the channel through which the seller heightens the perceived competition is shut down. The characterization fails only because buyers can increase their chances of winning ties by employing multiple identities.} Interestingly, the second-price auction can still be characterized in the finite type space as follows, underscoring the distinction between ex-post buyer identity compatibility and strategy-proofness.

\begin{theorem}\label{thm:ex-post-bip-finite}
    The second-price auction maximizes expected revenue among all efficient auctions that are ex-post buyer identity-compatible.
\end{theorem}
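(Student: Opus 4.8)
The plan is to show that ex-post buyer identity compatibility, although it does not deliver the strategy-proofness characterization of the second-price auction on the finite grid, nonetheless forces each buyer type's interim utility to be at least what the second-price auction delivers, so that a revenue-equivalence identity then yields the bound. I would first collect the structural facts I need. Taking $|N_i|=1$ in Definition~\ref{def:ex-post-buyer} and running the argument of Lemma~\ref{lem:bip-sp} verbatim shows ex-post buyer identity compatibility implies strategy-proofness; this step is insensitive to the type space. Efficiency together with anonymity pins the allocation down completely---the item goes to a bidder of the highest reported type, with ties broken uniformly---so the interim winning probability $Q_i(\cdot)$ coincides with that of the second-price auction. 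Ex-post individual rationality, which holds for \emph{every} type profile (including profiles generated by fake bids), gives the two facts I will use repeatedly: a losing bidder pays at most $0$, and a bidder that wins, possibly in a tie, among bidders of common reported value $v$ pays at most $v$. Finally, the McAfee accounting identity (the first line of Proposition~\ref{prop:dark-rev-equiv}) gives expected revenue $=\sum_{i\in\mathbb B}\Pr[i\in\widetilde B]\,\mathbb E_{\theta_i}\!\left[\theta_i Q_i(\theta_i)-U_i(\theta_i)\right]$, where $U_i$ is the truthful interim utility; since $Q_i$ is fixed by efficiency, maximizing revenue is equivalent to minimizing $\sum_i\Pr[i\in\widetilde B]\,\mathbb E_{\theta_i}[U_i(\theta_i)]$.

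The heart of the argument is the inequality $U_i(\theta_i)\ge U_i^{\mathrm{sp}}(\theta_i)$ for every $i$ and every type $\theta_i$, where $U_i^{\mathrm{sp}}(\theta_i)=\sum_n p_i(n)\,\mathbb E_{\theta_{-i}\in\Theta^{n-1}}\!\left[(\theta_i-\max\theta_{-i})\,\mathbf{1}[\theta_i>\max\theta_{-i}]\right]$ is the second-price interim utility. To prove it, let buyer $i$ of type $\theta_i$ commit to $K$ identities and, after seeing the competitor profile $\theta_{-i}$ with $m:=\max\theta_{-i}$, play as follows: if $\theta_i>m$, submit all $K$ identities at value $m$; if $\theta_i\le m$, submit one identity at $\theta_i$ and the remaining $K-1$ at $0$. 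In the first case the highest reported value in the enlarged game is $m$, attained by $i$'s $K$ clones and by the $c\le|\widetilde B|-1$ genuine runner-ups, so by anonymity $i$ wins with probability $\frac{K}{K+c}$, pays at most $m$ when winning and at most $0$ when losing, hence earns at least $\frac{K}{K+c}(\theta_i-m)\ge\frac{K}{K+|\widetilde B|-1}(\theta_i-m)$. In the second case $i$ never pays more than $0$ on any identity, so earns at least $0$. Averaging over the realized set of buyers and competitor types and letting $K\to\infty$ (monotone convergence, using $\mathbb E|\widetilde B|<\infty$), the value of this deviation converges up to $U_i^{\mathrm{sp}}(\theta_i)$; since ex-post buyer identity compatibility requires $U_i(\theta_i)$ to weakly exceed the value of \emph{every} such deviation, we get $U_i(\theta_i)\ge U_i^{\mathrm{sp}}(\theta_i)$.

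Putting the pieces together, any efficient, ex-post buyer identity-compatible auction has expected revenue $\sum_i\Pr[i\in\widetilde B]\,\mathbb E_{\theta_i}[\theta_i Q_i(\theta_i)-U_i(\theta_i)]\le\sum_i\Pr[i\in\widetilde B]\,\mathbb E_{\theta_i}[\theta_i Q_i(\theta_i)-U_i^{\mathrm{sp}}(\theta_i)]$, and the right-hand side is precisely the revenue of the second-price auction; it remains only to note that the second-price auction lies in the feasible set, being efficient, ex-post individually rational, and---as in Theorem~\ref{thm:ex-post-bip}---ex-post buyer identity-compatible, since against competitors with maximal value $m$ any tuple of identity bids yields a buyer of value $\theta_i$ at most $(\theta_i-m)^+$, already achieved with a single truthful identity. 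I expect the main obstacle to be Step~2: isolating the right deviation (cloning at the runner-up level, so that efficiency awards the item to the deviator with probability tending to one while ex-post individual rationality caps the payment at the second-highest value) and reconciling it with the constraint that the number of identities is fixed before the competitor profile and the realized number of buyers are known---this is what forces the limit $K\to\infty$ and the finite-mean assumption---all while bookkeeping ties, which now occur with positive probability.
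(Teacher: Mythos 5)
Your proposal is correct and follows essentially the same route as the paper's proof: reduce revenue maximization under efficiency to minimizing buyers' interim utilities, then lower-bound each buyer's equilibrium utility by the deviation in which the buyer clones the runner-up value with $K$ identities, wins the tie with probability at least $K/(K+|\widetilde B|)$ while ex-post individual rationality caps the tying payment at the second-highest value, and let $K\to\infty$ using the finite expected number of buyers to interchange limits. The paper's Appendix argument uses exactly this deviation and the same limit, so no further comparison is needed.
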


Finally, the characterization of the posted-price mechanism (Theorem~\ref{thm:posted-price}) remains valid in the finite type space, as ties have already been considered in this case. All discussions about general mechanisms remain valid regardless of the type space---for example, Theorem~\ref{thm:partitional}.

\subsection{Disclosure Policy}\label{subsec:partitional}

This paper demonstrates the role of concealing the number of bidders in the presence of shill bidding. Notably, disclosure is not necessarily an all-or-nothing decision; instead, the designer has the option to partially reveal this information. For example, the designer can run the second-price auction when there are no more than five bidders, and switch to the dark first-price auction otherwise. This mechanism effectively signals whether the number of bidders is below or above five. More generally, we define a \emph{disclosure policy} \(\phi\) as a surjective mapping from the number of agents \(\mathbb{N}\) to a set of signals \(Y\). The above example illustrates a \emph{partitional} disclosure policy.

\begin{definition}\label{def:partitional}
    A disclosure policy \(\phi\) is \emph{partitional} if, for any \(y,y' \in Y\), \(\phi^{-1}(y) \cap \phi^{-1}(y') \neq \emptyset\) implies \(y = y'\).
\end{definition}

Lit and dark mechanisms are both examples of mechanisms under partitional disclosure policies, where \(Y^{\text{lit}}=\mathbb{N}, \phi^{\text{lit}} = \text{id}_{\mathbb{N}}\) and \(Y^{\text{dark}} = \left\{d\right\}, \phi^{\text{dark}} \equiv d\).

We follow the notation in Section~\ref{subsec:dark-games}. Consider a collection of games \(\Gamma^{\mathbb{N}} = \left(\Gamma^n\right)_{n \in \mathbb{N}}\) and a partitional disclosure policy \(\phi\). When agents observe a signal \(y \in Y\), the set of strategies for each agent under signal \(y\) is denoted by \(\Sigma^{y} \subseteq \times_{n \in \phi^{-1}(y)} \Sigma^n\), with generic element \(\sigma^y = \left(\sigma^n\right)_{n \in \phi^{-1}\left(y\right)}\). As noted earlier, \(\Sigma^{y}\) does not have to be identical to \(\times_{n \in \phi^{-1}(y)} \Sigma^n\). Under each signal \(y\), a type-strategy \(S^y = \left(S^n\right)_{n \in \phi^{-1}(y)}\) for each agent maps from types to strategies, i.e., \(S^y:\Theta \to \Sigma^y\). We characterize mechanisms under partitional disclosure policies as follows.

\begin{definition}\label{def:partitional-eqm-anonymous}
    \(\left\{\Gamma^{\mathbb{N}}, \left(\Sigma^{y}, S^{y}\right)_{y \in Y}, \phi\right\}\) is \emph{partitional Bayesian incentive-compatible} if, for all $i\in\mathbb{B}$, all $\theta_{i}\in\Theta$, and all $y \in Y$,
    \[
    S^{y}\left(\theta_{i}\right) \in\arg\max_{\sigma^y \in \Sigma^y} \sum_{n \in \phi^{-1}\left(y\right)} p_i\left(\left.n\right|y\right) \mathbb{E}_{\theta_{-i} \in\Theta^{n-1}}\left[u_{i}\left(g^{\Gamma^n}\left(\sigma^n,S^{n}\left(\theta_{-i}\right)\right),\theta_{i}\right)\right],
    \]
    where $p_i\left(\left.n\right|y\right) = \mathrm{Pr}\left[\left.|\widetilde{B}|=n\right|i \in \widetilde{B}, y\right]$ is agent \(i\)'s belief that the number of agents is \(n\) under signal \(y\).
\end{definition}

We refer to \(\left\{\Gamma^{\mathbb{N}}, \left(\Sigma^{y}, S^{y}\right)_{y \in Y}, \phi\right\}\) as a mechanism under the partitional disclosure policy \(\phi\)---or, more concisely, a \emph{partitional mechanism}---if it is partitional Bayesian incentive-compatible. The term ``partitional'' emphasizes that the information about the number of agents participating in the game is revealed according to a partitional disclosure policy. It is straightforward to verify that this definition reduces to lit Bayesian incentive compatibility (Definition~\ref{def:lit-eqm-anonymous}) under the disclosure policy \(\phi^{\text{lit}}\), and to dark Bayesian incentive compatibility (Definition~\ref{def:dark-eqm-anonymous}) under the disclosure policy \(\phi^{\text{dark}}\). Moreover, our earlier observation (Proposition~\ref{prop:dark-lit}) extends to partitional disclosure policies. The proof is relegated to Appendix~\ref{proof:partitional}.

\begin{theorem}\label{thm:partitional}
    Any partitional mechanism induces an equivalent dark mechanism.
\end{theorem}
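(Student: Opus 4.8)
The plan is to mimic the proof of Proposition~\ref{prop:dark-lit}, exploiting the fact that a partitional disclosure policy induces a partition $\{\phi^{-1}(y)\}_{y \in Y}$ of $\mathbb{N}$, so that the dark strategy space can be assembled block-by-block from the signal-contingent strategy spaces. Concretely, given a partitional mechanism $\{\Gamma^{\mathbb{N}}, (\Sigma^{y}, S^{y})_{y \in Y}, \phi\}$, I would define a candidate dark mechanism on the same collection of games $\Gamma^{\mathbb{N}}$ by setting
\[
\Sigma^{d} := \prod_{y \in Y} \Sigma^{y} \;\subseteq\; \prod_{y \in Y} \prod_{n \in \phi^{-1}(y)} \Sigma^{n} \;=\; \prod_{n \in \mathbb{N}} \Sigma^{n},
\]
where the last equality uses that $\{\phi^{-1}(y)\}_{y \in Y}$ partitions $\mathbb{N}$, and $S^{d} := (S^{n})_{n \in \mathbb{N}}$, which is exactly the flattening of the tuple $(S^{y})_{y \in Y}$ and therefore lies in $\Sigma^{d}$. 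Since both mechanisms use the same games and the same family $(S^{n})_{n}$, the induced outcome rule $g^{\Gamma^{n}}(S^{n}(\cdot))$ is literally unchanged for every $n$, so equivalence (Definition~\ref{def:equivalent}) is immediate; all the work is in verifying dark Bayesian incentive compatibility (Definition~\ref{def:dark-eqm-anonymous}).

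For that, I would first regroup the dark objective of buyer $i$ at type $\theta_{i}$ according to which signal block each $n$ belongs to. Because $\phi$ is partitional, the event $\{|\widetilde{B}| = n\}$ is contained in the event that signal $\phi(n)$ is sent, so for every $n \in \phi^{-1}(y)$ one has $p_{i}(n) = P_{i}(y)\, p_{i}(n \mid y)$ with $P_{i}(y) := \sum_{m \in \phi^{-1}(y)} p_{i}(m)$ and $p_{i}(n \mid y) = \mathrm{Pr}[\,|\widetilde{B}| = n \mid i \in \widetilde{B}, y\,]$. Hence, for any $\sigma^{d} = (\sigma^{y})_{y \in Y} \in \Sigma^{d}$,
\[
\sum_{n \in \mathbb{N}} p_{i}(n)\, \mathbb{E}_{\theta_{-i} \in \Theta^{n-1}}\bigl[u_{i}(g^{\Gamma^{n}}(\sigma^{n}, S^{n}(\theta_{-i})), \theta_{i})\bigr] = \sum_{y \in Y} P_{i}(y)\, V_{y}(\sigma^{y}),
\]
where $V_{y}(\sigma^{y}) := \sum_{n \in \phi^{-1}(y)} p_{i}(n \mid y)\, \mathbb{E}_{\theta_{-i} \in \Theta^{n-1}}[u_{i}(g^{\Gamma^{n}}(\sigma^{n}, S^{n}(\theta_{-i})), \theta_{i})]$. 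The crucial structural point is that each summand $V_{y}(\sigma^{y})$ depends only on the $y$-block $\sigma^{y}$ of $\sigma^{d}$ — this is precisely where the partition is used — while $\sigma^{d}$ ranges over the product set $\prod_{y} \Sigma^{y}$. Therefore maximizing $\sum_{y} P_{i}(y) V_{y}(\sigma^{y})$ over $\Sigma^{d}$ reduces to maximizing each $V_{y}(\cdot)$ over $\Sigma^{y}$ separately (the weights $P_{i}(y)$ being nonnegative). But $V_{y}$ is exactly the objective in the partitional Bayesian incentive compatibility constraint for signal $y$ (Definition~\ref{def:partitional-eqm-anonymous}), which by hypothesis is maximized at $S^{y}(\theta_{i})$; assembling these block-wise maximizers gives $S^{d}(\theta_{i})$, establishing dark Bayesian incentive compatibility.

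I expect no deep obstacle: the argument is a bookkeeping refinement of Proposition~\ref{prop:dark-lit}, with the partition doing the heavy lifting in the separation-of-maxima step. The only point requiring care is the treatment of signals (and numbers of buyers) arising with probability zero: for such $y$ one has $P_{i}(y) = 0$, so these blocks contribute nothing to the dark objective and the partitional constraint is vacuous there, so the corresponding $S^{y}$ can be taken to be any fixed selection while the flattening $S^{d} = (S^{n})_{n}$ remains well defined and still yields the same outcome rule for every $n$. Finally, the symmetric-prior assumption (Footnote~\ref{fn:common-prior}) makes $p_{i}(\cdot)$, and hence $P_{i}(\cdot)$ and $p_{i}(\cdot \mid \cdot)$, independent of $i$, so there is no tension across buyers' problems and the construction of $\Sigma^{d}$ and $S^{d}$ is uniform in $i$, as required.
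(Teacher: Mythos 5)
Your proposal is correct and follows essentially the same route as the paper's proof: both define $\Sigma^{d}=\times_{y\in Y}\Sigma^{y}$, use $p_{i}(n)=p_{i}(y)\,p_{i}(n\mid y)$ for $y=\phi(n)$ to relate the dark objective to the per-signal objectives, and conclude dark Bayesian incentive compatibility from the per-signal constraints, with equivalence immediate from using the same games and type-strategies. Your block-decomposition/separability phrasing is just the reverse direction of the paper's ``sum the per-signal inequalities weighted by $p_{i}(y)$'' step, and your remark on zero-probability signals is a harmless refinement the paper leaves implicit.
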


This implies that any outcome rule implementable in a mechanism under a partitional disclosure policy can be implemented in a dark mechanism. In other words, dark mechanisms are the most general class of mechanisms under partitional disclosure policies. Therefore, when we analyze identity compatibility in dark auctions, we implicitly take into account all auctions under partitional disclosure policies. Theorem~\ref{thm:dark-opt} can be restated as characterizing the dark first-price auction (with reserve) as the unique optimal auction under any partitional disclosure policy that is ex-post auctioneer identity-compatible and Bayesian buyer identity-compatible.

Under a non-partitional disclosure policies \(\hat{\phi}\), there exist two distinct signals, \(y\) and \(y'\) such that \(\hat{\phi}^{-1}(y) \cap \hat{\phi}^{-1}(y') \neq \emptyset\). Suppose \(k \in \hat{\phi}^{-1}(y) \cap \hat{\phi}^{-1}(y')\). This implies that agents must decide which strategies to play in the game \(\Gamma^k\) under both signals \(y\) and \(y'\). In particular, agents may choose different strategies for the same game \(\Gamma^k\) depending on the signal, because the set of games induced by signal \(y, y'\) are different in general, i.e., \( \left(\Gamma^n\right)_{n \in \hat{\phi}^{-1}(y)} \neq \left(\Gamma^n\right)_{n \in \hat{\phi}^{-1}(y')}\). The strategy space under signal \(y\), \(\Sigma^y \subseteq \times_{n \in \hat{\phi}^{-1}(y)} \Sigma^n\), also differs from that under signal \(y'\), \(\Sigma^{y'} \subseteq \times_{n \in \hat{\phi}^{-1}(y')} \Sigma^n\). Put differently, non-partitional disclosure policies introduce correlation among agents' strategies for the same game \(\Gamma^k\) that is absent in partitional disclosure policies. This correlation is beyond the scope of our analysis, but it is an interesting avenue for future research.\footnote{The correlation suggests the existence of outcome rules that can be implemented in a mechanism under a non-partitional disclosure policy, but not in a dark mechanism. Nevertheless, Theorem~\ref{thm:dark-opt} remains valid under any non-partitional disclosure policy, as deviations from the dark first-price auction render the outcome rule vulnerable to shill bidding. We do not discuss non-partitional disclosure policies further in this paper, as we rarely observe them in practice.}

\section{Concluding Remarks}\label{sec:conclusion}

This paper investigates the design of auctions in the presence of shill bidding. We introduce the concept of identity compatibility to formalize the incentive constraints that prevent participants from misreporting their true identities---that is, from profiting by using fake identities. Our analysis highlights the role of concealing the number of bidders as a means to deter shill bidding.

The argument proceeds in three steps. First, we show that when the number of bidders is disclosed, no (lit) auction can simultaneously achieve optimality and deter shill bidding by the seller. Second, we establish that even approximate optimality is unattainable in this setting---the best lit mechanism is the posted-price mechanism. Finally, we demonstrate that when the number of bidders is concealed, the (dark) first-price auction attains optimality while deterring shill bidding by the seller.

The revenue equivalence theorem \citep{myersonOptimalAuctionDesign1981} implies that, in theory, the choice of auction format should be irrelevant as long as the same allocation rule is implemented. Recent work has shown that the choice of format can matter once concerns outside the standard model are introduced \citep{liObviouslyStrategyProofMechanisms2017,akbarpourCredibleAuctionsTrilemma2020,pyciaTheorySimplicityGames2023}. This paper contributes to this literature by studying auctions with shill bidding.

When the number of buyers is uncertain, \citet{mcafeeAuctionsStochasticNumber1987} shows that the disclosure policy regarding the number of bidders should likewise be irrelevant. In practice, however, disclosure rules play a crucial role. For example, transparency can facilitate collusion in auctions by providing a mechanism for signaling and punishment \citep{cramtonCollusiveBiddingLessons2000,klempererWhatReallyMatters2002,klempererUsingAbusingEconomic2003}. The prevailing view is that too much transparency might hurt. For example, OECD guidelines for fighting bid rigging in public procurement state that ``Transparency requirements are indispensable for a sound procurement procedure to aid in the fight against corruption. They should be complied with in a balanced manner, in order not to facilitate collusion by disseminating information beyond legal requirements.\dots Limit as much as possible communications between bidders during the tender process. Open tenders enable communication and signalling between bidders'' \citep{oecdGuidelinesFightingBid2009}.

This paper focuses on the disclosure of the number of bidders, a topic that has received limited attention in the existing literature. The results align with the conventional wisdom---but for a clear and straightforward reason: if the disclosed information can be falsified by fake identities to mislead participants, it is better to conceal it.

\bibliographystyle{econ}
\bibliography{Identity}

@article{akbarpourCredibleAuctionsTrilemma2020,
  title = {Credible {{Auctions}}: {{A Trilemma}}},
  shorttitle = {Credible {{Auctions}}},
  author = {Akbarpour, Mohammad and Li, Shengwu},
  year = {2020},
  journal = {Econometrica},
  volume = {88},
  number = {2},
  pages = {425--467},
  issn = {1468-0262},
  doi = {10.3982/ECTA15925},
  url = {https://onlinelibrary.wiley.com/doi/abs/10.3982/ECTA15925},
  abstract = {Consider an extensive-form mechanism, run by an auctioneer who communicates sequentially and privately with bidders. Suppose the auctioneer can deviate from the rules provided that no single bidder detects the deviation. A mechanism is credible if it is incentive-compatible for the auctioneer to follow the rules. We study the optimal auctions in which only winners pay, under symmetric independent private values. The first-price auction is the unique credible static mechanism. The ascending auction is the unique credible strategy-proof mechanism.},
  copyright = {{\copyright} 2020 The Econometric Society},
  langid = {english},
  keywords = {auction,credible,Mechanism design,sealed-bid,strategy-proof}
}

@article{andreyanovOptimalRobustDivisible2018,
  title = {Optimal {{Robust Divisible Mechanisms}} for {{Public Goods}}},
  author = {Andreyanov, Pavel and {\v C}opi{\v c}, Jernej and Jeong, Byeong-Hyeong},
  year = {2018},
  journal = {Working Paper},
  url = {https://pandreyanov.github.io/files/OptimalRobustPublicGoods.pdf},
  abstract = {Consider a mechanism for robust allocation of a divisible public good among n agents, when the budget is exactly balanced. In the linear1 framework, we show that such mechanisms are equivalent to lotteries over simple2 posted prices, as long as the allocation function is upper semi-continuous. We further show that, for any prior beliefs, the maximum of expected welfare is attained at the extreme point of this family. In other words, simple posted prices are robustly optimal.},
  langid = {english}
}

@article{arnostiAdverseSelectionAuction2016,
  title = {Adverse {{Selection}} and {{Auction Design}} for {{Internet Display Advertising}}},
  author = {Arnosti, Nick and Beck, Marissa and Milgrom, Paul},
  year = {2016},
  month = oct,
  journal = {American Economic Review},
  volume = {106},
  number = {10},
  pages = {2852--2866},
  issn = {0002-8282},
  doi = {10.1257/aer.20141198},
  url = {https://pubs.aeaweb.org/doi/10.1257/aer.20141198},
  abstract = {We model an online display advertising environment in which ``performance'' advertisers can measure the value of individual impressions, whereas ``brand'' advertisers cannot. If advertiser values for ad opportunities are positively correlated, second-price auctions for impressions can be inefficient and expose brand advertisers to adverse selection. Bayesian-optimal auctions have other drawbacks: they are complex, introduce incentives for false-name bidding, and do not resolve adverse selection. We introduce ``modified second bid'' auctions as the unique auctions that overcome these disadvantages. When advertiser match values are drawn independently from heavy-tailed distributions, a modified second bid auction captures at least 94.8 percent of the first-best expected value. In that setting and similar ones, the benefits of switching from an ordinary second-price auction to the modified second bid auction may be large, and the cost of defending against shill bidding and adverse selection may be low. (JEL D44, D82, L86, M37)},
  langid = {english}
}

@article{ausubelAscendingAuctionsPackage2002,
  title = {Ascending {{Auctions}} with {{Package Bidding}}},
  author = {Ausubel, Lawrence M. and Milgrom, Paul},
  year = {2002},
  month = aug,
  journal = {The B.E. Journal of Theoretical Economics},
  volume = {1},
  number = {1},
  publisher = {De Gruyter},
  issn = {1935-1704},
  doi = {10.2202/1534-5963.1019},
  url = {https://www.degruyter.com/document/doi/10.2202/1534-5963.1019/html},
  abstract = {A family of ascending package auction models is introduced in which bidders may determine their own packages on which to bid. In the proxy auction (revelation game) versions, the outcome is a point in the core of the exchange economy for the reported preferences. When payoffs are linear in money and goods are substitutes, sincere reporting constitutes a Nash equilibrium and the outcome coincides with the Vickrey auction outcome. Even when goods are not substitutes, ascending proxy auction equilibria lie in the core with respect to the true preferences. Compared to the Vickrey auction, the proxy auctions generate higher equilibrium revenues, are less vulnerable to shill bidding and collusion, can handle budget constraints much more robustly, and may provide better ex ante investment incentives.},
  copyright = {De Gruyter expressly reserves the right to use all content for commercial text and data mining within the meaning of Section 44b of the German Copyright Act.},
  langid = {english},
  keywords = {activity rule,auction theory,bid Improvement rule,combinatorial bidding,e-commerce,electronic commerce,FCC auctions,package bidding}
}

@incollection{ausubelLovelyLonelyVickrey2005,
  title = {The {{Lovely}} but {{Lonely Vickrey Auction}}},
  booktitle = {Combinatorial {{Auctions}}},
  author = {Ausubel, Lawrence M. and Milgrom, Paul},
  editor = {Cramton, Peter and Shoham, Yoav and Steinberg, Richard},
  year = {2005},
  month = dec,
  pages = {17--40},
  publisher = {The MIT Press},
  doi = {10.7551/mitpress/9780262033428.003.0002},
  url = {https://academic.oup.com/mit-press-scholarship-online/book/29406/chapter/244776729},
  isbn = {978-0-262-03342-8},
  langid = {english}
}

@article{aycinenaAuctionsEndogenousParticipation2018,
  title = {Auctions with Endogenous Participation and an Uncertain Number of Bidders: Experimental Evidence},
  shorttitle = {Auctions with Endogenous Participation and an Uncertain Number of Bidders},
  author = {Aycinena, Diego and Rentschler, Lucas},
  year = {2018},
  month = dec,
  journal = {Experimental Economics},
  volume = {21},
  number = {4},
  pages = {924--949},
  issn = {1573-6938},
  doi = {10.1007/s10683-017-9558-8},
  url = {https://doi.org/10.1007/s10683-017-9558-8},
  abstract = {Attracting bidders to an auction is a key factor in determining revenue. We experimentally investigate entry and bidding behavior in first-price and English clock auctions to determine the revenue implications of entry. Potential bidders observe their value and then decide whether or not to incur a cost to enter. We also vary whether or not bidders are informed regarding the number of entrants prior to placing their bids. Revenue equivalence is predicted in all four environments. We find that, regardless of whether or not bidders are informed, first-price auctions generate more revenue than English clock auctions. Within a given auction format, the effect of informing bidders differs. In first-price auctions, revenue is higher when bidders are informed, while the opposite is true in English clock auctions. The optimal choice for an auction designer who wishes to maximize revenue is a first-price auction with uninformed bidders.},
  langid = {english},
  keywords = {Auctions,Bidding,D44,D80,Endogenous entry,Experiments,Revenue equivalence}
}

@article{besterContractingImperfectCommitment2001,
  title = {Contracting with {{Imperfect Commitment}} and the {{Revelation Principle}}: {{The Single Agent Case}}},
  shorttitle = {Contracting with {{Imperfect Commitment}} and the {{Revelation Principle}}},
  author = {Bester, Helmut and Strausz, Roland},
  year = {2001},
  journal = {Econometrica},
  volume = {69},
  number = {4},
  pages = {1077--1098},
  issn = {1468-0262},
  doi = {10.1111/1468-0262.00231},
  url = {https://onlinelibrary.wiley.com/doi/abs/10.1111/1468-0262.00231},
  abstract = {This paper extends the revelation principle to environments in which the mechanism designer cannot fully commit to the outcome induced by the mechanism. We show that he may optimally use a direct mechanism under which truthful revelation is an optimal strategy for the agent. In contrast with the conventional revelation principle, however, the agent may not use this strategy with probability one. Our results apply to contracting problems between a principal and a single agent. By reducing such problems to well-defined programming problems they provide a basic tool for studying imperfect commitment.},
  copyright = {The Econometric Society},
  langid = {english},
  keywords = {asymmetric information,contract theory,limited commitment,mechanism design,Revelation principle}
}

@article{besterImperfectCommitmentRevelation2000,
  title = {Imperfect Commitment and the Revelation Principle: The Multi-Agent Case},
  shorttitle = {Imperfect Commitment and the Revelation Principle},
  author = {Bester, Helmut and Strausz, Roland},
  year = {2000},
  month = nov,
  journal = {Economics Letters},
  volume = {69},
  number = {2},
  pages = {165--171},
  issn = {0165-1765},
  doi = {10.1016/S0165-1765(00)00301-3},
  url = {https://www.sciencedirect.com/science/article/pii/S0165176500003013},
  abstract = {We consider mechanism design problems with n agents when the mechanism designer cannot fully commit to an allocation function. With a single agent (n=1) optimal mechanisms can always be represented by direct mechanisms, under which each agent's message set is the set of his possible types [Bester, H., Strausz, R., 2000. Contracting with imperfect commitment and the revelation principle: the single agent case. Free University of Berlin, mimeo]. We show that this result does not hold if n{$\geq$}2. That is, in mechanism design problems with multiple agents the use of direct mechanisms may be suboptimal.},
  keywords = {Asymmetric information,Limited commitment,Mechanism design,Revelation principle}
}

@article{bulowAuctionsNegotiations1996,
  title = {Auctions {{Versus Negotiations}}},
  author = {Bulow, Jeremy and Klemperer, Paul},
  year = {1996},
  journal = {The American Economic Review},
  volume = {86},
  number = {1},
  eprint = {2118262},
  eprinttype = {jstor},
  pages = {180--194},
  publisher = {American Economic Association},
  issn = {0002-8282},
  url = {https://www.jstor.org/stable/2118262},
  abstract = {Which is the more profitable way to sell a company: an auction with no reserve price or an optimally-structured negotiation with one less bidder? We show under reasonable assumptions that the auction is always preferable when bidders' signals are independent. For affiliated signals, the result holds under certain restrictions on the seller's choice of negotiating mechanism. The result suggests that the value of negotiating skill is small relative to the value of additional competition. The paper also shows how the analogies between monopoly theory and auction theory can help derive new results in auction theory.}
}

@book{cassadyAuctionsAuctioneering1967,
  title = {Auctions and {{Auctioneering}}},
  author = {Cassady, Ralph},
  year = {1967},
  edition = {1},
  eprint = {jj.2430697},
  eprinttype = {jstor},
  publisher = {University of California Press},
  doi = {10.2307/jj.2430697},
  url = {https://www.jstor.org/stable/jj.2430697},
  abstract = {This title is part of UC Press's Voices Revived program, which commemorates University of California Press's mission to seek out and cultivate the brightest minds and give them voice, reach, and impact. Drawing on a backlist dating to 1893, Voices Revived makes high-quality, peer-reviewed scholarship accessible once again using print-on-demand technology. This title was originally published in 1967.},
  isbn = {978-0-520-32223-3}
}

@article{chakrabortyAuctionsShillBidding2004,
  title = {Auctions with Shill Bidding},
  author = {Chakraborty, Indranil and Kosmopoulou, Georgia},
  year = {2004},
  month = aug,
  journal = {Economic Theory},
  volume = {24},
  number = {2},
  pages = {271--287},
  issn = {1432-0479},
  doi = {10.1007/s00199-003-0423-y},
  url = {https://doi.org/10.1007/s00199-003-0423-y},
  abstract = {Shill bidding has increased substantially in recent years since the technology employed to conduct on-line auctions enables many sellers to disguise their identities and bid. Although their intent is to gain by misleading the bidders on the value of the object, we show that in a common value auction sellers are worse off shill bidding. In fact, any out-of-auction mechanism that makes it difficult for them to shill bid increases their revenues. In addition, shill bidding reduces the surplus of the bidders and the surplus from trade. It is only the auctioneer who could gain from this activity and in that sense he may not have an incentive from within the auction to discourage shill bidding.},
  langid = {english},
  keywords = {Auction,Seller participation,Shill bidding}
}

@article{chassangCollusionAuctionsConstrained2019,
  title = {Collusion in {{Auctions}} with {{Constrained Bids}}: {{Theory}} and {{Evidence}} from {{Public Procurement}}},
  shorttitle = {Collusion in {{Auctions}} with {{Constrained Bids}}},
  author = {Chassang, Sylvain and Ortner, Juan},
  year = {2019},
  month = oct,
  journal = {Journal of Political Economy},
  volume = {127},
  number = {5},
  pages = {2269--2300},
  issn = {0022-3808, 1537-534X},
  doi = {10.1086/701812},
  url = {https://www.journals.uchicago.edu/doi/10.1086/701812},
  langid = {english}
}

@misc{chenHowSeriousShill2020,
  type = {{{SSRN Scholarly Paper}}},
  title = {How {{Serious}} Is {{Shill Bidding}} in {{Online Auctions}}? {{Evidence}} from {{eBay Motors}}},
  shorttitle = {How {{Serious}} Is {{Shill Bidding}} in {{Online Auctions}}?},
  author = {Chen, Kong-Pin and Liang, T. P. and Chang, Ted and Liu, Yi-chun and Yin, Shou-Yung and Yu, Ya-Ting},
  year = {2020},
  month = sep,
  number = {3279292},
  eprint = {3279292},
  publisher = {Social Science Research Network},
  address = {Rochester, NY},
  doi = {10.2139/ssrn.3279292},
  url = {https://papers.ssrn.com/abstract=3279292},
  abstract = {Using data from eBay Motors, this paper empirically estimates the prevalenceof shill bidding and its effects on the outcomes of online auctions. Since mostbidders have partially concealed their IDs, we first develop a procedure to identifythe bidders. We then construct a shill-bidding index, wherein we split all auctionsinto one group which most likely contains shill bids, and another which does not.Our estimates indicate that around 9\% of bidders are shill bidders and 22\% ofall listings contain shill bids. Using the instrumental variable approach for theregression, we show that shill bidding is actually two practices in one. First, itstarts the auction with a reserve price as a ratio to the Blue Book price 0.08 lowerthan the usual auctions, thereby increases trade probability by 0.38\%. Second,phantom bids are placed to compete with other bidders, which increase the ratioof the transaction price to Blue Book price by 0.055 on average, but have no effecton transaction probability. The construction of the shill index also enables us totest, and reject, a recent theory that links shill bidding and sniping.},
  archiveprefix = {Social Science Research Network},
  langid = {english},
  keywords = {eBay,eBay motors,machine learning,online auctions,reserve price,shill bid}
}

@article{clarkeMultipartPricingPublic1971,
  title = {Multipart {{Pricing}} of {{Public Goods}}},
  author = {Clarke, Edward H.},
  year = {1971},
  journal = {Public Choice},
  volume = {11},
  eprint = {30022651},
  eprinttype = {jstor},
  pages = {17--33},
  publisher = {Springer},
  issn = {0048-5829},
  url = {https://www.jstor.org/stable/30022651}
}

@article{copicOptimalRobustBilateral2016,
  title = {Optimal Robust Bilateral Trade: {{Risk}} Neutrality},
  shorttitle = {Optimal Robust Bilateral Trade},
  author = {{\v C}opi{\v c}, Jernej and Ponsat{\'i}, Clara},
  year = {2016},
  month = may,
  journal = {Journal of Economic Theory},
  volume = {163},
  pages = {276--287},
  issn = {0022-0531},
  doi = {10.1016/j.jet.2015.11.008},
  url = {https://www.sciencedirect.com/science/article/pii/S0022053115002021},
  abstract = {A risk neutral seller and buyer with private information bargain over an indivisible item. We prove that optimal robust bilateral trade mechanisms are payoff equivalent to non-wasteful randomized posted prices.},
  keywords = {Bilateral trade,Efficiency,Incomplete information,Mechanism design,Robustness}
}

@article{cramtonCollusiveBiddingLessons2000,
  title = {Collusive {{Bidding}}: {{Lessons}} from the {{FCC Spectrum Auctions}}},
  shorttitle = {Collusive {{Bidding}}},
  author = {Cramton, Peter and Schwartz, Jesse A.},
  year = {2000},
  month = may,
  journal = {Journal of Regulatory Economics},
  volume = {17},
  number = {3},
  pages = {229--252},
  issn = {1573-0468},
  doi = {10.1023/A:1008174031940},
  url = {https://doi.org/10.1023/A:1008174031940},
  abstract = {The Federal Communications Commission (FCC) spectrum auctions use a simultaneous ascending auction design. Bidders bid on numerous communication licenses simultaneously, with bidding remaining open on all licenses until no bidder is willing to bid higher on any license. With full revelation of bidding information, simultaneous open bidding allows bidders to send messages to their rivals, telling them on which licenses to bid and which to avoid. These strategies can help bidders coordinate a division of the licenses, and enforce the proposed division by directed punishments. We examine solutions to mitigate collusive bidding in the spectrum auctions, and then apply these ideas to the design of daily electricity auctions.},
  langid = {english},
  keywords = {Auction Design,Federal Communication Commission,Full Revelation,Industrial Organization,Public Finance}
}

@article{dworczakMechanismDesignAftermarkets2020,
  title = {Mechanism {{Design With Aftermarkets}}: {{Cutoff Mechanisms}}},
  shorttitle = {Mechanism {{Design With Aftermarkets}}},
  author = {Dworczak, Piotr},
  year = {2020},
  journal = {Econometrica},
  volume = {88},
  number = {6},
  pages = {2629--2661},
  issn = {1468-0262},
  doi = {10.3982/ECTA15768},
  url = {https://onlinelibrary.wiley.com/doi/abs/10.3982/ECTA15768},
  abstract = {I study a mechanism design problem in which a designer allocates a single good to one of several agents, and the mechanism is followed by an aftermarket---a post-mechanism game played between the agent who acquired the good and third-party market participants. The designer has preferences over final outcomes, but she cannot design the aftermarket. However, she can influence its information structure by publicly disclosing information elicited from the agents by the mechanism. I introduce a class of allocation and disclosure rules, called cutoff rules, that disclose information about the buyer's type only by revealing information about the realization of a random threshold (cutoff) that she had to outbid to win the object. When there is a single agent in the mechanism, I show that the optimal cutoff mechanism offers full privacy to the agent. In contrast, when there are multiple agents, the optimal cutoff mechanism may disclose information about the winner's type; I provide sufficient conditions for optimality of simple designs. I also characterize aftermarkets for which restricting attention to cutoff mechanisms is without loss of generality in a subclass of all feasible mechanisms satisfying additional conditions.},
  copyright = {{\copyright} 2020 The Econometric Society},
  langid = {english},
  keywords = {aftermarkets,information design,Mechanism design,transparency}
}

@article{dyerResolvingUncertaintyNumber1989,
  title = {Resolving {{Uncertainty}} about the {{Number}} of {{Bidders}} in {{Independent Private- Value Auctions}}: {{An Experimental Analysis}}},
  shorttitle = {Resolving {{Uncertainty}} about the {{Number}} of {{Bidders}} in {{Independent Private- Value Auctions}}},
  author = {Dyer, Douglas and Kagel, John H. and Levin, Dan},
  year = {1989},
  journal = {The RAND Journal of Economics},
  volume = {20},
  number = {2},
  eprint = {2555693},
  eprinttype = {jstor},
  pages = {268--279},
  publisher = {[RAND Corporation, Wiley]},
  issn = {0741-6261},
  doi = {10.2307/2555693},
  url = {https://www.jstor.org/stable/2555693},
  abstract = {Results from first-price, sealed-bid auctions, in which there is uncertainty regarding the number of bidders, are reported. Consistent with recent theoretical findings, concealing information regarding the number of bidders raises more revenue for the seller than revealing information. Individual bids show that (1) narrowly interpreted, the Nash equilibrium bidding theory underlying these theoretical predictions is rejected, as less than 50\% of all bids satisfy the strict inequality requirements of the theory, but (2) a majority of the deviations from these inequality requirements favor the revenue-raising predictions of the Nash model, and, in a large number of cases, involve marginal violations of the theory.}
}

@article{einavAuctionsPostedPrices2018,
  title = {Auctions versus {{Posted Prices}} in {{Online Markets}}},
  author = {Einav, Liran and Farronato, Chiara and Levin, Jonathan and Sundaresan, Neel},
  year = {2018},
  month = feb,
  journal = {Journal of Political Economy},
  volume = {126},
  number = {1},
  pages = {178--215},
  publisher = {The University of Chicago Press},
  issn = {0022-3808},
  doi = {10.1086/695529},
  url = {https://www.journals.uchicago.edu/doi/full/10.1086/695529},
  abstract = {Auctions were very popular in the early days of internet commerce, but today online sellers mostly use posted prices. We model the choice between auctions and posted prices as a trade-off between competitive price discovery and convenience. Evidence from eBay fits the theory. We then show that the decline in auctions was not driven by compositional shifts in seller experience or items sold, but by changing seller incentives. We estimate the demand facing sellers and document falling sale probabilities and falling relative demand for auctions. Both favor posted prices; our estimates suggest the latter is more important for the auction decline. Survey evidence provides further support.}
}

@inproceedings{elkindDesigningLearningOptimal2007,
  title = {Designing and Learning Optimal Finite Support Auctions},
  booktitle = {Proceedings of the Eighteenth Annual {{ACM-SIAM}} Symposium on {{Discrete}} Algorithms},
  author = {Elkind, Edith},
  year = {2007},
  month = jan,
  series = {{{SODA}} '07},
  pages = {736--745},
  address = {New Orleans, Louisiana},
  url = {https://dl.acm.org/doi/10.5555/1283383.1283462},
  abstract = {A classical paper of Myerson [18] shows how to construct an optimal (revenue-maximizing) auction in a model where bidders' values are drawn from known continuous distributions. In this paper we show how to adapt this approach to finite support distributions that may be partially unknown. We demonstrate that a Myerson-style auction can be constructed in time polynomial in the number of bidders and the size of the support sets. Next, we consider the scenario where the mechanism designer knows the support sets, but not the probability of each value. In this situation, we show that the optimal auction may be learned in polynomial time using a weak oracle that, given two candidate auctions, returns one with a higher expected revenue. To study this problem, we introduce a new class of truthful mechanisms which we call order-based auctions. We show that the optimal mechanism is an order-based auction and use the internal structure of this class to prove the correctness of our learning algorithm as well as to bound its running time.},
  isbn = {978-0-89871-624-5}
}

@article{grahamCollusiveBidderBehavior1987,
  title = {Collusive {{Bidder Behavior}} at {{Single-Object Second-Price}} and {{English Auctions}}},
  author = {Graham, Daniel A. and Marshall, Robert C.},
  year = {1987},
  journal = {Journal of Political Economy},
  volume = {95},
  number = {6},
  eprint = {1831119},
  eprinttype = {jstor},
  pages = {1217--1239},
  publisher = {University of Chicago Press},
  issn = {0022-3808},
  url = {https://www.jstor.org/stable/1831119}
}

@article{grahamPhantomBiddingHeterogeneous1990,
  title = {Phantom Bidding against Heterogeneous Bidders},
  author = {Graham, Daniel A. and Marshall, Robert C. and Richard, Jean-Francois},
  year = {1990},
  month = jan,
  journal = {Economics Letters},
  volume = {32},
  number = {1},
  pages = {13--17},
  issn = {0165-1765},
  doi = {10.1016/0165-1765(90)90043-Z},
  url = {https://www.sciencedirect.com/science/article/pii/016517659090043Z},
  abstract = {If IPV bidders are distributionally heterogeneous then a revenue maximizing English auctioneer will, in general, find it optimal to use a non-constant reserve price that is a function of the observed bid sequence. An example is provided.}
}

@article{greenCharacterizationSatisfactoryMechanisms1977,
  title = {Characterization of {{Satisfactory Mechanisms}} for the {{Revelation}} of {{Preferences}} for {{Public Goods}}},
  author = {Green, Jerry and Laffont, Jean-Jacques},
  year = {1977},
  journal = {Econometrica},
  volume = {45},
  number = {2},
  eprint = {1911219},
  eprinttype = {jstor},
  pages = {427--438},
  publisher = {[Wiley, Econometric Society]},
  issn = {0012-9682},
  doi = {10.2307/1911219},
  url = {https://www.jstor.org/stable/1911219},
  abstract = {Social decision mechanisms that admit dominant strategies and result in Pareto optima are characterized by the class of mechanisms proposed by Groves. The concept of decision mechanisms is generalized and the characterization is shown to extend to these cases.}
}

@article{grovesIncentivesTeams1973,
  title = {Incentives in {{Teams}}},
  author = {Groves, Theodore},
  year = {1973},
  journal = {Econometrica},
  volume = {41},
  number = {4},
  eprint = {1914085},
  eprinttype = {jstor},
  pages = {617--631},
  publisher = {[Wiley, Econometric Society]},
  issn = {0012-9682},
  doi = {10.2307/1914085},
  url = {https://www.jstor.org/stable/1914085},
  abstract = {This paper analyzes the problem of inducing the members of an organization to behave as if they formed a team. Considered is a conglomerate-type organization consisting of a set of semi-autonomous subunits that are coordinated by the organization's head. The head's incentive problem is to choose a set of employee compensation rules that will induce his subunit managers to communicate accurate information and take optimal decisions. The main result exhibits a particular set of compensation rules, an optimal incentive structure, that leads to team behavior. Particular attention is directed to the informational aspects of the problem. An extended example of a resource allocation model is discussed and the optimal incentive structure is interpreted in terms of prices charged by the head for resources allocated to the subunits.}
}

@article{hafnerMechanismDesignInformation2024,
  title = {Mechanism {{Design}} with {{Information Leakage}}},
  author = {H{\"a}fner, Samuel and Pycia, Marek and Zeng, Haoyuan},
  year = {2024},
  journal = {Working Paper}
}

@article{hagertyRobustTradingMechanisms1987,
  title = {Robust Trading Mechanisms},
  author = {Hagerty, Kathleen M and Rogerson, William P},
  year = {1987},
  month = jun,
  journal = {Journal of Economic Theory},
  volume = {42},
  number = {1},
  pages = {94--107},
  issn = {0022-0531},
  doi = {10.1016/0022-0531(87)90104-9},
  url = {https://www.sciencedirect.com/science/article/pii/0022053187901049},
  abstract = {We consider the problem of designing a trading institution for a single buyer and seller when their valuation of the good is private information. It is shown that posted-price mechanisms are essentially the only mechanisms such that each trader has a dominant strategy. A posted-price mechanism is one where a price is posted in advance and trade occurs if and only if all traders agree to trade.}
}

@article{harrisAllocationMechanismsDesign1981,
  title = {Allocation {{Mechanisms}} and the {{Design}} of {{Auctions}}},
  author = {Harris, Milton and Raviv, Artur},
  year = {1981},
  journal = {Econometrica},
  volume = {49},
  number = {6},
  eprint = {1911413},
  eprinttype = {jstor},
  pages = {1477--1499},
  publisher = {[Wiley, Econometric Society]},
  issn = {0012-9682},
  doi = {10.2307/1911413},
  url = {https://www.jstor.org/stable/1911413},
  abstract = {Recent advances in optimal mechanism design are used to show that a certain type of auction, similar to the "open English auction," is an optimal mechanism in a certain class of environments.}
}

@article{harstadEquilibriumBidFunctions1990,
  title = {Equilibrium Bid Functions for Auctions with an Uncertain Number of Bidders},
  author = {Harstad, Ronald M. and Kagel, John H. and Levin, Dan},
  year = {1990},
  month = may,
  journal = {Economics Letters},
  volume = {33},
  number = {1},
  pages = {35--40},
  issn = {0165-1765},
  doi = {10.1016/0165-1765(90)90197-9},
  url = {https://www.sciencedirect.com/science/article/pii/0165176590901979},
  abstract = {The number of rivals may be unknown when a bidding strategy is formulated in an auction. In a symmetric model with risk-neutral bidders holding independent information, we obtain explicit equilibrium bidding functions for first-price and second-price auctions with uncertainty about the number of rivals. Five auctions are revenue-equivalent: first-price and second-price auctions, each with the number of bidders known or uncertain, and English auctions.}
}

@article{herreraNYCAuctionHouses2023,
  title = {{{NYC Auction Houses}} --- {{One Year After Deregulation}}},
  author = {Herrera, Carolina},
  year = {2023},
  month = nov,
  journal = {The Columbia Journal of Law \& the Arts},
  url = {https://journals.library.columbia.edu/index.php/lawandarts/announcement/view/667}
}

@article{holmstromGrovesSchemeRestricted1979,
  title = {Groves' {{Scheme}} on {{Restricted Domains}}},
  author = {Holmstr{\"o}m, Bengt},
  year = {1979},
  journal = {Econometrica},
  volume = {47},
  number = {5},
  eprint = {1911954},
  eprinttype = {jstor},
  pages = {1137--1144},
  publisher = {[Wiley, Econometric Society]},
  issn = {0012-9682},
  doi = {10.2307/1911954},
  url = {https://www.jstor.org/stable/1911954},
  abstract = {It is proved that Groves' scheme is unique on restricted domains which are smoothly connected, in particular convex domains. This generalizes earlier uniqueness results by Green and Laffont and Walker. An example shows that uniqueness may be lost if the domain is not smoothly connected.}
}

@article{holtUncertaintyBiddingIncentive1979,
  title = {Uncertainty and the {{Bidding}} for {{Incentive Contracts}}},
  author = {Holt, Charles A.},
  year = {1979},
  journal = {The American Economic Review},
  volume = {69},
  number = {4},
  eprint = {1808719},
  eprinttype = {jstor},
  pages = {697--705},
  publisher = {American Economic Association},
  issn = {0002-8282},
  url = {https://www.jstor.org/stable/1808719}
}

@incollection{hurwiczInformationallyDecentralizedSystems1972,
  title = {On Informationally Decentralized Systems},
  booktitle = {Decision and Organization: {{A}} Volume in Honor of {{Jacob Marschak}}},
  author = {Hurwicz, Leonid},
  editor = {{C. B. McGuire} and Radner, Roy},
  year = {1972},
  pages = {297--336},
  publisher = {North-Holland Pub. Co.},
  address = {Amsterdam and London},
  isbn = {978-0-444-10120-4 978-0-7204-3313-5}
}

@inproceedings{izmalkovShillBiddingOptimal2004,
  title = {Shill Bidding and Optimal Auctions},
  booktitle = {The 15th {{Stony Brook International Conference}} on {{Game Theory}}},
  author = {Izmalkov, Sergei},
  year = {2004},
  month = jul,
  address = {Stony Brook, New York},
  abstract = {A single object is for sale to N asymmetric buyers in the independent private values setting by means of an English auction. We consider implications of seller's active participation in the bidding process, or shill bidding. The main result is that there exists an equilibrium of the English auction with shill bidding that is outcome equivalent to the optimal mechanism of Myerson (1981). Thus, a detail-free mechanism that implements the optimal allocation and payment rules is found.},
  langid = {english}
}

@misc{jagadeesanLimitsAuctionsExAnte2025,
  type = {{{SSRN Scholarly Paper}}},
  title = {The {{Limits}} of {{Auctions}} under {{Ex-Ante Collusion}}},
  author = {Jagadeesan, Ravi and Segal, Ilya R. and Tordjman, Alex},
  year = {2025},
  month = apr,
  number = {5231187},
  eprint = {5231187},
  publisher = {Social Science Research Network},
  address = {Rochester, NY},
  doi = {10.2139/ssrn.5231187},
  url = {https://papers.ssrn.com/abstract=5231187},
  abstract = {We study revenue-maximizing auction design when bidders can form a bidding ring ex ante---i.e., before learning their values and before each bidder decides whether to participate in the auction.~ We show that bidders' ex-post private information about values does not preclude them from colluding efficiently, and as a result, the seller cannot achieve a higher revenue than from posting an optimal price and rationing the good.~ However, a posted-price mechanism leaves the seller vulnerable to inefficient collusion.~ We construct mechanisms that guarantee the non-collusive revenue of an optimal posted price against arbitrary collusion, even if a subset of bidders can form a bidding ring ex ante that can observe participants' values and commit to punish bidders who refuse to join.~ With symmetric bidders, this guarantee can be achieved by a mechanism that simply combines a posted price with a first-price knockout auction conducted by the seller on behalf of the bidders.},
  archiveprefix = {Social Science Research Network},
  langid = {english},
  keywords = {auctions,bidding rings,collusion,mechanism design}
}

@article{jeongFirstPricePrincipleMaximizing2023,
  title = {The {{First-Price Principle}} of {{Maximizing Economic Objectives}}},
  author = {Jeong, Byeong-hyeon and Pycia, Marek},
  year = {2023},
  month = aug,
  journal = {Working Paper},
  url = {https://sites.econ.uzh.ch/pycia/jeong-pycia-principle.pdf},
  abstract = {Allowing for any restrictions on bid spaces, we show that the first-price auction format is sufficiently flexible to achieve in equilibrium any Bayesian-constrained standard objective, including maximizing or minimizing revenue, welfare, bidder surplus, and Gini mean difference as well as linear combinations of them. This first-price principle allows us to analyze problems that are beyond the scope of Myersonian mechanism design. Our further results simplify the analysis of first price auctions with arbitrary bid spaces and establish the existence of monotonic pure-strategy equilibria in these auctions.},
  langid = {english}
}

@article{klempererAuctionsAlmostCommon1998,
  title = {Auctions with Almost Common Values: {{The}} `{{Wallet Game}}' and Its Applications},
  shorttitle = {Auctions with Almost Common Values},
  author = {Klemperer, Paul},
  year = {1998},
  month = may,
  journal = {European Economic Review},
  volume = {42},
  number = {3},
  pages = {757--769},
  issn = {0014-2921},
  doi = {10.1016/S0014-2921(97)00123-2},
  url = {https://www.sciencedirect.com/science/article/pii/S0014292197001232},
  abstract = {We use a classroom game, the `Wallet Game', to show that slight asymmetries between bidders can have very large effects on prices in standard ascending (i.e. English) auctions of common-value objects. Examples of small asymmetries are a small value advantage for one bidder or a small ownership of the object by one bidder. The effects of these asymmetries are greatly exarcabated by entry costs or bidding costs. We discuss applications to Airwaves Auctions and Takeover Battles including the Glaxo--Wellcome Merger.},
  keywords = {Airwaves auction,Auction theory,Common values,Mergers corporate acquisitions,PCS auction,Spectrum auction,Takeovers,Winner's curse}
}

@article{klempererUsingAbusingEconomic2003,
  title = {Using and {{Abusing Economic Theory}}},
  author = {Klemperer, Paul},
  year = {2003},
  month = may,
  journal = {Journal of the European Economic Association},
  volume = {1},
  number = {2-3},
  pages = {272--300},
  issn = {1542-4766},
  doi = {10.1162/154247603322390937},
  url = {https://doi.org/10.1162/154247603322390937},
  abstract = {Economic theory is often abused in practical policy-making. There is frequently excessive focus on sophisticated theory at the expense of elementary theory; too much economic knowledge can sometimes be a dangerous thing. Too little attention is paid to the wider economic context, and to the dangers posed by political pressures. Superficially trivial distinctions between policy proposals may be economically significant, while economically irrelevant distinctions may be politically important. I illustrate with some disastrous government auctions, but also show the value of economic theory.}
}

@article{klempererWhatReallyMatters2002,
  title = {What {{Really Matters}} in {{Auction Design}}},
  author = {Klemperer, Paul},
  year = {2002},
  journal = {The Journal of Economic Perspectives},
  volume = {16},
  number = {1},
  eprint = {2696581},
  eprinttype = {jstor},
  pages = {169--189},
  publisher = {American Economic Association},
  issn = {0895-3309},
  url = {https://www.jstor.org/stable/2696581}
}

@misc{komoShillProofAuctions2024,
  title = {Shill-{{Proof Auctions}}},
  author = {Komo, Andrew and Kominers, Scott Duke and Roughgarden, Tim},
  year = {2024},
  month = nov,
  number = {arXiv:2404.00475},
  eprint = {2404.00475},
  primaryclass = {econ},
  publisher = {arXiv},
  doi = {10.48550/arXiv.2404.00475},
  url = {http://arxiv.org/abs/2404.00475},
  abstract = {In an auction, a seller may masquerade as one or more bidders in order to manipulate the clearing price. We characterize single-item auction formats that are shill-proof in the sense that a profit-maximizing seller has no incentive to submit shill bids. We distinguish between strong shill-proofness, in which a seller with full knowledge of bidders' valuations can never profit from shilling, and weak shill-proofness, which requires only that the expected equilibrium profit from shilling is nonpositive. The Dutch auction (with a suitable reserve) is the unique (revenue-)optimal and strongly shill-proof auction. Moreover, the Dutch auction (with no reserve) is the unique prior-independent auction that is both efficient and weakly shill-proof. While there are multiple ex-post incentive compatible, weakly shill-proof, and optimal auctions; any optimal auction can satisfy only two properties in the set \{static, ex-post incentive compatible, weakly shill-proof\}.},
  archiveprefix = {arXiv},
  keywords = {Computer Science - Computer Science and Game Theory,Economics - Theoretical Economics}
}

@article{lamyShillBiddingEffect2009,
  title = {The {{Shill Bidding Effect}} versus the {{Linkage Principle}}},
  author = {Lamy, Laurent},
  year = {2009},
  month = jan,
  journal = {Journal of Economic Theory},
  volume = {144},
  number = {1},
  pages = {390--413},
  issn = {0022-0531},
  doi = {10.1016/j.jet.2008.06.001},
  url = {https://www.sciencedirect.com/science/article/pii/S0022053108000823},
  abstract = {The analysis of second price auctions with externalities is utterly modified if the seller is unable to commit not to participate in the mechanism. For the General Symmetric Model introduced by Milgrom and Weber [P. Milgrom, R. Weber, A theory of auctions and competitive bidding, Econometrica 50 (1982) 1089--1122] we characterize the full set of separating equilibria that are symmetric among buyers and with a strategic seller being able to bid in the same way as any buyer through a so-called shill bidding activity. The revenue ranking between first and second price auctions is different from the one arising in Milgrom and Weber: the benefits from the highlighted `Linkage Principle' are counterbalanced by the `Shill Bidding Effect.'},
  keywords = {Auctions,Externalities,Linkage Principle,Shill bidding}
}

@article{lamyUppingAnteHow2013,
  title = {``{{Upping}} the Ante'': How to Design Efficient Auctions with Entry?},
  shorttitle = {``{{Upping}} the Ante''},
  author = {Lamy, Laurent},
  year = {2013},
  journal = {The RAND Journal of Economics},
  volume = {44},
  number = {2},
  pages = {194--214},
  issn = {1756-2171},
  doi = {10.1111/1756-2171.12017},
  url = {https://onlinelibrary.wiley.com/doi/abs/10.1111/1756-2171.12017},
  abstract = {Of primary importance in auction design is the set of strategies available to the seller at the auction stage. We first formalize hold-up regarding entry costs that preys on second-price auctions when the seller may engage in a costly shill-bidding activity. We derive the optimal reserve and show how shill bidding can make posted prices outperforming auctions. Second, we advocate for a new regulation where shills would be banned but with the possibility of canceling sales ex post, which offers some valuable flexibility: the English auction with jump bids implements, then, the first best in general environments.},
  copyright = {{\copyright} 2013, RAND.},
  langid = {english}
}

@article{lauermannBiddingCommonValueAuctions2023,
  title = {Bidding in {{Common-Value Auctions With}} an {{Unknown Number}} of {{Competitors}}},
  author = {Lauermann, Stephan and Speit, Andre},
  year = {2023},
  journal = {Econometrica},
  volume = {91},
  number = {2},
  pages = {493--527},
  issn = {1468-0262},
  doi = {10.3982/ECTA17793},
  url = {https://onlinelibrary.wiley.com/doi/abs/10.3982/ECTA17793},
  abstract = {This paper studies a first-price common-value auction in which bidders do not know the number of their competitors. In contrast to the case of common-value auctions with a known number of rival bidders, the inference from winning is not monotone, and a ``winner's blessing'' emerges at low bids. As a result, bidding strategies may not be strictly increasing, but instead may contain atoms. Moreover, an equilibrium fails to exist when the expected number of competitors is large and the bid space is continuous. Therefore, we consider auctions on a grid. On a fine grid, high-signal bidders follow an essentially strictly increasing strategy, whereas low-signal bidders pool on two adjacent bids on the grid. The solutions of a ``communication extension'' based on Jackson, Simon, Swinkels, and Zame (2002) capture the equilibrium bidding behavior in the limit, as the grid becomes arbitrarily fine.},
  copyright = {{\copyright} 2023 The Authors. Econometrica published by John Wiley \& Sons Ltd on behalf of The Econometric Society},
  langid = {english},
  keywords = {Common-value auctions,endogenous tie-breaking,nonexistence,numbers uncertainty,Poisson games,random player games}
}

@article{levinMisbehaviorCommonValue2023,
  title = {Misbehavior in {{Common Value Auctions}}: {{Bidding Rings}} and {{Shills}}},
  shorttitle = {Misbehavior in {{Common Value Auctions}}},
  author = {Levin, Dan and Peck, James},
  year = {2023},
  month = feb,
  journal = {American Economic Journal: Microeconomics},
  volume = {15},
  number = {1},
  pages = {171--200},
  issn = {1945-7669},
  doi = {10.1257/mic.20200447},
  url = {https://www.aeaweb.org/articles?id=10.1257/mic.20200447},
  abstract = {We characterize the optimal misbehavior by bidding rings or an auctioneer in the ascending English auction with common values. We also show, in an extended game, that in equilibrium potential members join and truthfully reveal their signals. Under a separability assumption, behavior does not change if nonring bidders are informed about the ring's existence. In general, misbehavior in dynamic settings is more profitable than in outcome-equivalent static settings. However, under a stronger separability assumption, the ring can do no better in the dynamic English format than in the outcome-equivalent, static Sophi format.},
  langid = {english},
  keywords = {Auctions Asymmetric and Private Information,Mechanism Design}
}

@article{liObviouslyStrategyProofMechanisms2017,
  title = {Obviously {{Strategy-Proof Mechanisms}}},
  author = {Li, Shengwu},
  year = {2017},
  month = nov,
  journal = {American Economic Review},
  volume = {107},
  number = {11},
  pages = {3257--3287},
  issn = {0002-8282},
  doi = {10.1257/aer.20160425},
  url = {https://www.aeaweb.org/articles?id=10.1257/aer.20160425},
  abstract = {A strategy is obviously dominant if, for any deviation, at any information set where both strategies first diverge, the best outcome under the deviation is no better than the worst  outcome under the dominant strategy. A mechanism is obviously strategy-proof (OSP) if it has an equilibrium in obviously dominant strategies. This has a behavioral  interpretation: a strategy is obviously dominant if and only if a cognitively limited agent can recognize it as weakly dominant. It also has a classical interpretation: a choice rule is  OSP-implementable if and only if it can be carried out by a social planner under a particular regime of partial commitment.},
  langid = {english},
  keywords = {Consumer Economics: Theory Auctions Asymmetric and Private Information,Mechanism Design}
}

@article{lovejoyOptimalMechanismsFinite2006,
  title = {Optimal {{Mechanisms}} with {{Finite Agent Types}}},
  author = {Lovejoy, William S.},
  year = {2006},
  journal = {Management Science},
  volume = {52},
  number = {5},
  eprint = {20110553},
  eprinttype = {jstor},
  pages = {788--803},
  publisher = {INFORMS},
  issn = {0025-1909},
  url = {https://www.jstor.org/stable/20110553},
  abstract = {In some mechanism design problems, finite-type spaces may be natural ones to consider, yet the current literature is dominated by analyses of continuous-type spaces. This probably derives from the intellectual dominance of the early work in this area. Here we present an analysis of the finite-state case that unifies and generalizes current understanding of these problems. We analyze general quasi-linear utility functions among asymmetric agents with an arbitrary number of finite types, in the context of incentive-compatible direct revelation games. A key part of the analysis is the relationship between expected benefit functions that feature alternative forms of supermodularity that translate into relaxations or restrictions of the original problem. The required features can be attained with a range of assumptions on the model primitives, each of which can support the results. This unified approach can suggest a range of alternative assumption combinations, often more general than their counterparts in the continuous-space literature, and each of which can reduce the problem to a more tractable form. Also, mechanism design problems with finite-type spaces can require conscious attention to how the principal handles ties, which are probability-zero events, and hence innocuous in continuous spaces.}
}

@article{mailathCollusionSecondPrice1991,
  title = {Collusion in Second Price Auctions with Heterogeneous Bidders},
  author = {Mailath, George J and Zemsky, Peter},
  year = {1991},
  month = nov,
  journal = {Games and Economic Behavior},
  volume = {3},
  number = {4},
  pages = {467--486},
  issn = {0899-8256},
  doi = {10.1016/0899-8256(91)90016-8},
  url = {https://www.sciencedirect.com/science/article/pii/0899825691900168},
  abstract = {We show that efficient collusion by any subset of bidders in second price private value auctions is possible, even when the bidders are heterogeneous. An important property of efficient collusion is that a bidder's net payoff from participating in collusion is independent of her valuation. We show that the cooperative game whose characteristic function (evaluated at a coalition) is the ex ante collusive surplus (of that coalition) has a nonempty core. A feature of more technical interest is the mechanism characterization when the private information of agents does not enter in a linear (or even piecewise linear) manner.}
}

@article{matthewsComparingAuctionsRisk1987,
  title = {Comparing {{Auctions}} for {{Risk Averse Buyers}}: {{A Buyer}}'s {{Point}} of {{View}}},
  shorttitle = {Comparing {{Auctions}} for {{Risk Averse Buyers}}},
  author = {Matthews, Steven},
  year = {1987},
  journal = {Econometrica},
  volume = {55},
  number = {3},
  eprint = {1913603},
  eprinttype = {jstor},
  pages = {633--646},
  publisher = {[Wiley, Econometric Society]},
  issn = {0012-9682},
  doi = {10.2307/1913603},
  url = {https://www.jstor.org/stable/1913603},
  abstract = {The interim preferences of buyers over three forms of auction are determined. The buyers are risk averse and ex ante identical, have private values, and are unaware at the time they bid of how many others are eligible to bid. The auctions compared are a second-price auction (SPA), a first-price auction (FPA), and a first-price auction conducted under a policy of revealing to the eligible bidders their actual number (FPA-R). If their types are independently distributed, the buyers prefer the SPA to the FPA-R to the FPA if they exhibit decreasing absolute risk aversion, and they are indifferent between all three auctions if they have constant absolute risk aversion. Their preferences are biased away from the SPA if their types are affiliated: they then prefer the FPA to the SPA if they exhibit constant absolute risk aversion, and the comparison is ambiguous if they exhibit decreasing absolute risk aversion. Affiliation also biases the buyer's preferences toward, and the seller's preferences away from, the revealing policy: assuming constant or decreasing or no risk aversion, the buyers prefer the FPA-R to the FPA, and the expected price is greater in the FPA than the FPA-R.}
}

@article{mcafeeAuctionsBidding1987,
  title = {Auctions and {{Bidding}}},
  author = {McAfee, R. Preston and McMillan, John},
  year = {1987},
  journal = {Journal of Economic Literature},
  volume = {25},
  number = {2},
  eprint = {2726107},
  eprinttype = {jstor},
  pages = {699--738},
  publisher = {American Economic Association},
  issn = {0022-0515},
  url = {https://www.jstor.org/stable/2726107}
}

@article{mcafeeAuctionsStochasticNumber1987,
  title = {Auctions with a Stochastic Number of Bidders},
  author = {McAfee, R. Preston and McMillan, John},
  year = {1987},
  month = oct,
  journal = {Journal of Economic Theory},
  volume = {43},
  number = {1},
  pages = {1--19},
  issn = {0022-0531},
  doi = {10.1016/0022-0531(87)90113-X},
  url = {https://www.sciencedirect.com/science/article/pii/002205318790113X},
  abstract = {Auction theory is generalized by allowing the number of bidders to be stochastic. In a first-price sealed-bid auction with bidders having constant absolute risk aversion, the expected selling price is higher when the bidders do not know how many other bidders there are than when they do know this. Thus the seller should conceal the number of bidders if he can. Moreover, a bidder's ex ante expected utility is the same whether or not there is a policy of concealing the number of bidders: concealment therefore Pareto-dominates announcement. With risk-neutral bidders, the optimal auction is the same whether or not the bidders know who their competitors are.}
}

@article{mcafeeBiddingRings1992,
  title = {Bidding {{Rings}}},
  author = {McAfee, R. Preston and McMillan, John},
  year = {1992},
  journal = {The American Economic Review},
  volume = {82},
  number = {3},
  eprint = {2117323},
  eprinttype = {jstor},
  pages = {579--599},
  publisher = {American Economic Association},
  issn = {0002-8282},
  url = {https://www.jstor.org/stable/2117323},
  abstract = {We characterize coordinated bidding strategies in two cases: a weak cartel, in which the bidders cannot make side-payments; and a strong cartel, in which the cartel members can exclude new entrants and can make transfer payments. The weak cartel can do no better than have its members submit identical bids. The strong cartel in effect reauctions the good among the cartel members.}
}

@book{milgromDiscoveringPricesAuction2017,
  title = {Discovering {{Prices}}: {{Auction Design}} in {{Markets}} with {{Complex Constraints}}},
  shorttitle = {Discovering {{Prices}}},
  author = {Milgrom, Paul},
  year = {2017},
  month = may,
  publisher = {Columbia University Press},
  address = {New York},
  abstract = {Traditional economic theory studies idealized markets in which prices alone can guide efficient allocation, with no need for central organization. Such models build from Adam Smith's famous concept of an invisible hand, which guides markets and renders regulation or interference largely unnecessary. Yet for many markets, prices alone are not enough to guide feasible and efficient outcomes, and regulation alone is not enough, either. Consider air traffic control at major airports. While prices could encourage airlines to take off and land at less congested times, prices alone do just part of the job; an air traffic control system is still indispensable to avoid disastrous consequences. With just an air traffic controller, however, limited resources can be wasted or poorly used. What's needed in this and many other real-world cases is an auction system that can effectively reveal prices while still maintaining enough direct control to ensure that complex constraints are satisfied.In Discovering Prices, Paul Milgrom---the world's most frequently cited academic expert on auction design---describes how auctions can be used to discover prices and guide efficient resource allocations, even when resources are diverse, constraints are critical, and market-clearing prices may not even exist. Economists have long understood that externalities and market power both necessitate market organization. In this book, Milgrom introduces complex constraints as another reason for market design. Both lively and technical, Milgrom roots his new theories in real-world examples (including the ambitious U.S. incentive auction of radio frequencies, whose design he led) and provides economists with crucial new tools for dealing with the world's growing complex resource-allocation problems.},
  isbn = {978-0-231-54457-3}
}

@book{milgromPuttingAuctionTheory2004,
  title = {Putting {{Auction Theory}} to {{Work}}},
  author = {Milgrom, Paul},
  year = {2004},
  series = {Churchill {{Lectures}} in {{Economics}}},
  publisher = {Cambridge University Press},
  address = {Cambridge},
  doi = {10.1017/CBO9780511813825},
  abstract = {This book provides a comprehensive introduction to modern auction theory and its important new applications. It is written by a leading economic theorist whose suggestions guided the creation of the new spectrum auction designs. Aimed at graduate students and professionals in economics, the book gives the most up-to-date treatments of both traditional theories of 'optimal auctions' and newer theories of multi-unit auctions and package auctions, and shows by example how these theories are used. The analysis explores the limitations of prominent older designs, such as the Vickrey auction design, and evaluates the practical responses to those limitations. It explores the tension between the traditional theory of auctions with a fixed set of bidders, in which the seller seeks to squeeze as much revenue as possible from the fixed set, and the theory of auctions with endogenous entry, in which bidder profits must be respected to encourage participation.},
  isbn = {978-0-521-55184-7}
}

@article{milgromTheoryAuctionsCompetitive1982,
  title = {A {{Theory}} of {{Auctions}} and {{Competitive Bidding}}},
  author = {Milgrom, Paul R. and Weber, Robert J.},
  year = {1982},
  journal = {Econometrica},
  volume = {50},
  number = {5},
  eprint = {1911865},
  eprinttype = {jstor},
  pages = {1089--1122},
  publisher = {[Wiley, Econometric Society]},
  issn = {0012-9682},
  doi = {10.2307/1911865},
  url = {https://www.jstor.org/stable/1911865},
  abstract = {A model of competitive bidding is developed in which the winning bidder's payoff may depend upon his personal preferences, the preferences of others, and the intrinsic qualities of the object being sold. In this model, the English (ascending) auction generates higher average prices than does the second-price auction. Also, when bidders are risk-neutral, the second-price auction generates higher average prices than the Dutch and first-price auctions. In all of these auctions, the seller can raise the expected price by adopting a policy of providing expert appraisals of the quality of the objects he sells.}
}

@article{myersonOptimalAuctionDesign1981,
  title = {Optimal {{Auction Design}}},
  author = {Myerson, Roger B.},
  year = {1981},
  journal = {Mathematics of Operations Research},
  volume = {6},
  number = {1},
  eprint = {3689266},
  eprinttype = {jstor},
  pages = {58--73},
  issn = {0364-765X},
  url = {https://www.jstor.org/stable/3689266},
  abstract = {This paper considers the problem faced by a seller who has a single object to sell to one of several possible buyers, when the seller has imperfect information about how much the buyers might be willing to pay for the object. The seller's problem is to design an auction game which has a Nash equilibrium giving him the highest possible expected utility. Optimal auctions are derived in this paper for a wide class of auction design problems.}
}

@article{myersonPopulationUncertaintyPoisson1998,
  title = {Population Uncertainty and {{Poisson}} Games},
  author = {Myerson, Roger B.},
  year = {1998},
  month = nov,
  journal = {International Journal of Game Theory},
  volume = {27},
  number = {3},
  pages = {375--392},
  issn = {1432-1270},
  doi = {10.1007/s001820050079},
  url = {https://doi.org/10.1007/s001820050079},
  abstract = {A general class of models is developed for analyzing games with population uncertainty. Within this general class, a special class of Poisson games is defined. It is shown that Poisson games are uniquely characterized by properties of independent actions and environmental equivalence. The general definition of equilibrium for games with population uncertainty is formulated, and it is shown that the equilibria of Poisson games are invariant under payoff-irrelevant type splitting. An example of a large voting game is discussed, to illustrate the advantages of using a Poisson game model for large games.},
  langid = {english},
  keywords = {Bayesian games,Game Theory,Games Studies,Key words: Population uncertainty,Philosophy of Probability,Poisson distribution,Probability Theory,Stochastic Calculus,Stochastic Systems and Control}
}

@book{oecdGuidelinesFightingBid2009,
  title = {Guidelines for {{Fighting Bid Rigging}} in {{Public Procurement}}},
  author = {{OECD}},
  year = {2009},
  month = feb,
  publisher = {OECD Publishing},
  address = {Paris},
  doi = {10.1787/8cfeafbb-en},
  url = {https://www.oecd.org/en/publications/guidelines-for-fighting-bid-rigging-in-public-procurement_8cfeafbb-en.html},
  isbn = {978-92-64-87778-8},
  langid = {english}
}

@misc{pyciaNonBossinessFirstPriceAuctions2021,
  type = {{{SSRN Scholarly Paper}}},
  title = {Non-{{Bossiness}} and {{First-Price Auctions}}},
  author = {Pycia, Marek and Raghavan, Madhav},
  year = {2021},
  month = jun,
  number = {3941784},
  eprint = {3941784},
  publisher = {Social Science Research Network},
  address = {Rochester, NY},
  doi = {10.2139/ssrn.3941784},
  url = {https://papers.ssrn.com/abstract=3941784},
  abstract = {We show that the first-price auction with no reserve price is the essentially unique mechanismthat is non-bossy, individually rational, and efficient in equilibrium. The first-price auction withoptimal reserve price is the essentially unique mechanism that is non-bossy, individually rational,and revenue maximizing.},
  archiveprefix = {Social Science Research Network},
  langid = {english},
  keywords = {Auctions,efficiency,individual rationality,non-bossiness,revenue maximization}
}

@article{pyciaTheorySimplicityGames2023,
  title = {A {{Theory}} of {{Simplicity}} in {{Games}} and {{Mechanism Design}}},
  author = {Pycia, Marek and Troyan, Peter},
  year = {2023},
  journal = {Econometrica},
  volume = {91},
  number = {4},
  pages = {1495--1526},
  issn = {1468-0262},
  doi = {10.3982/ECTA16310},
  url = {https://onlinelibrary.wiley.com/doi/abs/10.3982/ECTA16310},
  abstract = {We study extensive-form games and mechanisms allowing agents that plan for only a subset of future decisions they may be called to make (the planning horizon). Agents may update their so-called strategic plan as the game progresses and new decision points enter their planning horizon. We introduce a family of simplicity standards which require that the prescribed action leads to unambiguously better outcomes, no matter what happens outside the planning horizon. We employ these standards to explore the trade-off between simplicity and other objectives, to characterize simple mechanisms in a wide range of economic environments, and to delineate the simplicity of common mechanisms such as posted prices and ascending auctions, with the former being simpler than the latter.},
  copyright = {{\copyright} 2023 The Authors. Econometrica published by John Wiley \& Sons Ltd on behalf of The Econometric Society},
  langid = {english},
  keywords = {(strong) obvious strategy-proofness,ascending auctions,extensive-form games,limited foresight,One-step simplicity,planning horizon,price and priority mechanisms}
}

@article{rileyOptimalAuctions1981,
  title = {Optimal {{Auctions}}},
  author = {Riley, John G. and Samuelson, William F.},
  year = {1981},
  journal = {The American Economic Review},
  volume = {71},
  number = {3},
  eprint = {1802786},
  eprinttype = {jstor},
  pages = {381--392},
  issn = {0002-8282},
  url = {https://www.jstor.org/stable/1802786}
}

@article{robinsonCollusionChoiceAuction1985,
  title = {Collusion and the {{Choice}} of {{Auction}}},
  author = {Robinson, Marc S.},
  year = {1985},
  journal = {The RAND Journal of Economics},
  volume = {16},
  number = {1},
  eprint = {2555595},
  eprinttype = {jstor},
  pages = {141--145},
  publisher = {[RAND Corporation, Wiley]},
  issn = {0741-6261},
  doi = {10.2307/2555595},
  url = {https://www.jstor.org/stable/2555595},
  abstract = {Auctions are used in many different markets, including the leasing of a substantial fraction of the natural resources in the United States. The procedures used in the auctions conducted by the federal government have been a continuing source of controversy. One of the concerns has been the possibility of collusion among bidders. This article shows that cartels are stable if the seller uses open ascending-bid (oral) auctions, but not if he uses sealed high-bid auctions. This may help to explain the frequent use of sealed high-bid auctions.}
}

@article{rothLastMinuteBiddingRules2002,
  title = {Last-{{Minute Bidding}} and the {{Rules}} for {{Ending Second-Price Auctions}}: {{Evidence}} from {{eBay}} and {{Amazon Auctions}} on the {{Internet}}},
  shorttitle = {Last-{{Minute Bidding}} and the {{Rules}} for {{Ending Second-Price Auctions}}},
  author = {Roth, Alvin E. and Ockenfels, Axel},
  year = {2002},
  month = sep,
  journal = {American Economic Review},
  volume = {92},
  number = {4},
  pages = {1093--1103},
  issn = {0002-8282},
  doi = {10.1257/00028280260344632},
  url = {https://www.aeaweb.org/articles?id=10.1257/00028280260344632},
  langid = {english},
  keywords = {Auctions}
}

@inproceedings{sakuraiLimitationGeneralizedVickrey1999,
  title = {A {{Limitation}} of the {{Generalized Vickrey Auction}} in {{Electronic Commerce}}: {{Robustness}} against {{False-Name Bids}}},
  booktitle = {Proceedings of the {{AAAI Conference}} on {{Artificial Intelligence}}},
  author = {Sakurai, Yuko and Yokoo, Makoto and Matsubara, Shigeo},
  year = {1999},
  month = jul,
  pages = {86--92},
  address = {Orlando, Florida},
  url = {https://aaai.org/proceeding/aaai-16-1999/},
  abstract = {Electronic Commerce(EC) has rapidly grown with the expansion of the Internet. Amontghese activities, auctions haverecently achievedhugepopularity, and have becomea promisingfield for applyingagent and Artificial Intelligence technologies.Althoughthe Internet provides an infrastructure for muchcheaper auctioning with manymoresellers and buyers, we mustconsider the possibility of a newtype of cheating, i.e., an agenttries to get someprofit by submitting several bids underfictitious names(false-namebids). Althoughfalse-namebids are easier to execute than formingcollusion, the vulnerability of auction protocols to false-namebids has not beendiscussedbefore. In this paper, weexaminethe robustness of the generalized Vickreyauction (G.V.A.)against false-name bids. TheG.V.A.has the best theoretical background amongvarious auction mechanismsi,.e., it has proved to be incentive compatibleand be able to achieve a Pareto efficient allocation. Weshowthat false-name bids maybe effective, i.e., the G.V.A.loses incentive compatibilityunderthe possibility of false-namebids, whenthe marginalutility of an item increases or goods are complementaryM. oreover,weprove that there exists no single-roundsealed-bid auction protocol that simultaneouslysatisfies individualrationality, Pareto efficiency, and incentive compatibilityin all cases if agents can submitfalse-namebids.},
  langid = {english}
}

@article{seibelCollusionExclusionPublic2023,
  title = {Collusion by {{Exclusion}} in {{Public Procurement}}},
  author = {Seibel, Regina and {\v S}koda, Samuel},
  year = {2023},
  month = jul,
  journal = {Working Paper},
  url = {https://reginaseibel.github.io/publication/collusion/collusion.pdf},
  abstract = {This paper studies bid-rigging in auctions with bidder preselection. We develop a theoretical model to analyze the optimal behavior of a partial bid-rigging cartel and show how commonly used two-stage auction formats, in which the first stage is used to preselect bidders, may be exploited. Bidder preselection based on opening bids allows cartels to exclude competitive rivals and thereby increase procurement costs above what would be possible without preselection. To test our predictions, we use administrative data on public procurement auctions in Slovakia. We develop a collusion marker reflecting the optimal cartel strategy and identify bidders suspected of collusion. In auctions where collusive bidders participate, savings, defined as the difference between reserve price and winning bid, are lower by 3.7 percentage points. After a selective auction procedure was abandoned, collusive bidders adjusted their strategy, and the difference in savings between collusive and competitive auctions decreases to 1.5 percentage points, closing almost 60\% of the gap.}
}

@article{sherOptimalShillBidding2012,
  title = {Optimal Shill Bidding in the {{VCG}} Mechanism},
  author = {Sher, Itai},
  year = {2012},
  month = jun,
  journal = {Economic Theory},
  volume = {50},
  number = {2},
  pages = {341--387},
  issn = {1432-0479},
  doi = {10.1007/s00199-010-0566-6},
  url = {https://doi.org/10.1007/s00199-010-0566-6},
  abstract = {This paper studies shill bidding in the Vickrey--Clarke--Groves (VCG) mechanism applied to combinatorial auctions. Shill bidding is a strategy whereby a single decision-maker enters the auction under the guise of multiple identities (Yokoo et~al. Games Econ Behav, 46~pp. 174--188, 2004). I formulate the problem of optimal shill bidding for a bidder who knows the aggregate bid of her opponents. A key to the analysis is a subproblem---the cost minimization problem (CMP)---which searches for the cheapest way to win a given package using shills. An analysis of the CMP leads to several fundamental results about shill bidding: (i) I provide an exact characterization of the aggregate bids b such that some bidder would have an incentive to shill bid against b in terms of a new property Submodularity at the Top; (ii) the problem of optimally sponsoring shills is equivalent to the winner determination problem (for single minded bidders)---the problem of finding an efficient allocation in a combinatorial auction; (iii) shill bidding can occur in equilibrium; and (iv) the problem of shill bidding has an inverse, namely the collusive problem that a coalition of bidders may have an incentive to merge (even after competition among coalition members has been suppressed). I show that only when valuations are additive can the incentives to shill and merge simultaneously disappear.},
  langid = {english},
  keywords = {C72,Collusion,Combinatorial auctions,D44,Shill bidding,VCG mechanism,Winner determination problem}
}

@misc{shinozakiShillProofRulesObject2025,
  type = {{{SSRN Scholarly Paper}}},
  title = {Shill-{{Proof Rules}} in {{Object Allocation Problems}} with {{Money}}},
  author = {Shinozaki, Hiroki},
  year = {2025},
  month = may,
  number = {5254807},
  eprint = {5254807},
  publisher = {Social Science Research Network},
  address = {Rochester, NY},
  doi = {10.2139/ssrn.5254807},
  url = {https://papers.ssrn.com/abstract=5254807},
  abstract = {We consider the object allocation problem with money. The seller owns multiple units of an object and is solely interested in her revenue from an allocation. Each buyer can receive at most one unit of the object and has preferences over pairs consisting of consumption of the object and payments, which are not necessarily quasi-linear. We study the seller's incentives to increase her revenue by introducing false-name buyers, a practice commonly referred to as shill bidding. An (allocation) rule is shill-proof if the seller cannot increase the revenue by introducing false-name buyers. A rule is a price-based weak priority rule if there is a profile of prices such that buyers who are willing to receive the object at their own prices do so in descending order of price. We establish that a rule is revenue undominated among the class of rules satisfying shill-proofness, strategy-proofness, and non-imposition if and only if it is a price-based weak priority rule. Our result provides justification for the use of posted-price rules---widely adopted by online marketplaces, and internet auctions as ``buy-it-now'' auctions---over auction-type rules in terms of shill-proofness.},
  archiveprefix = {Social Science Research Network},
  langid = {english},
  keywords = {Multi-unit auctions,Priority rules,Shill bidding,Shill-proofness,Strategy-proofness,Weak priority rules}
}

@article{simonExtensiveFormGames1989,
  title = {Extensive {{Form Games}} in {{Continuous Time}}: {{Pure Strategies}}},
  shorttitle = {Extensive {{Form Games}} in {{Continuous Time}}},
  author = {Simon, Leo K. and Stinchcombe, Maxwell B.},
  year = {1989},
  journal = {Econometrica},
  volume = {57},
  number = {5},
  eprint = {1913627},
  eprinttype = {jstor},
  pages = {1171--1214},
  publisher = {[Wiley, Econometric Society]},
  issn = {0012-9682},
  doi = {10.2307/1913627},
  url = {https://www.jstor.org/stable/1913627},
  abstract = {We propose a new framework for games in continuous time that conforms as closely as possible to the conventional discrete-time framework. In this paper, we take the view that continuous time can be viewed as "discrete time, but with a grid that is infinitely fine." Specifically, we define a class of continuous-time strategies with the following property: when restricted to an arbitrary, increasingly fine sequence of discrete-time grids, any profile of strategies drawn from this class generates a convergent sequence of outcomes, whose limit is independent of the sequence of grids. We then define the continuous-time outcome to be this limit. Because our continuous-time model conforms so closely to the conventional discrete-time model, we can readily compare the predictions of the two frameworks. Specifically, we ask two questions. First, is discrete-time with a very find grid a good proxy for continuous time? Second, does every subgame perfect equilibrium in our model have a discrete-time analog? Our answer to the first question is the following "upper hemi-continuity" result: Suppose a sequence of discrete-time {$\varepsilon$}-subgame-perfect equilibria increasingly closely approximates (in a special sense) a given continuous-time profile, with {$\varepsilon$} converging to zero along as the period length shrinks. Then the continuous-time profile will be an exact equilibrium for the corresponding continuous-time game.Our second answer is a lower hemi-continuity result that holds under weak conditions. Fix a perfect equilibrium for a continuous-time game and a positive {$\varepsilon$}. Then for any sufficiently fine grid, there will exist an {$\varepsilon$}-subgame perfect equilibrium for the corresponding game played on that grid which "is within {$\varepsilon$} of" the continuous-time equilibrium. Our model yields sharp predictions in a variety of industrial organization applications. We first consider several variants of a familiar preemption model. Next, we analyze a stylized model of a patent race. Finally, we obtain a striking uniqueness result for a class of "repeated" games.}
}

@article{vickreyCounterspeculationAuctionsCompetitive1961,
  title = {Counterspeculation, {{Auctions}}, and {{Competitive Sealed Tenders}}},
  author = {Vickrey, William},
  year = {1961},
  journal = {The Journal of Finance},
  volume = {16},
  number = {1},
  eprint = {2977633},
  eprinttype = {jstor},
  pages = {8--37},
  issn = {0022-1082},
  doi = {10.2307/2977633},
  url = {https://www.jstor.org/stable/2977633}
}

@article{woodwardSelfAuditableAuctions2020,
  title = {Self-{{Auditable Auctions}}},
  author = {Woodward, Kyle},
  year = {2020},
  journal = {Working Paper}
}

@article{yokooEffectFalsenameBids2004,
  title = {The Effect of False-Name Bids in Combinatorial Auctions: New Fraud in Internet Auctions},
  shorttitle = {The Effect of False-Name Bids in Combinatorial Auctions},
  author = {Yokoo, Makoto and Sakurai, Yuko and Matsubara, Shigeo},
  year = {2004},
  month = jan,
  journal = {Games and Economic Behavior},
  volume = {46},
  number = {1},
  pages = {174--188},
  issn = {0899-8256},
  doi = {10.1016/S0899-8256(03)00045-9},
  url = {https://www.sciencedirect.com/science/article/pii/S0899825603000459},
  abstract = {We examine the effect of false-name bids on combinatorial auction protocols. False-name bids are bids submitted by a single bidder using multiple identifiers such as multiple e-mail addresses. The obtained results are summarized as follows: (1)~the Vickrey--Clarke--Groves (VCG) mechanism, which is strategy-proof and Pareto efficient when there exists no false-name bid, is not false-name-proof; (2)~there exists no false-name-proof combinatorial auction protocol that satisfies Pareto efficiency; (3)~one sufficient condition where the VCG mechanism is false-name-proof is identified, i.e., the concavity of a surplus function over bidders.},
  keywords = {Auction,Mechanism design,Strategy-proof}
}

\appendix

\section{Proofs Omitted from the Main Text}\label{sec:proofs}

\subsection{Proof of Theorem~\ref{thm:ex-post-bip}}\label{proof:ex-post-bip}

Note that the second-price auction is efficient and becomes optimal when the reserve price \(\rho^*\) is applied \citep{myersonOptimalAuctionDesign1981}. We first verify that the second-price auction with any reserve price \(\rho\) is ex-post buyer identity-compatible. In particular, we show that no buyer $i\in B$ can benefit from using multiple identities for all $B\in\mathbb{\mathcal{B}}$, even if buyer $i$ does not commit to the number of identities in advance. There are three cases:
\begin{enumerate}
    \item $\theta_{i}>\max\left\{ \theta_{-i}, \rho\right\} $: Buyer $i$ wins the auction with the payment $\max\left\{ \theta_{-i}, \rho\right\} $ when bidding with only one identity, and obtains a positive payoff $\theta_{i}-\max\left\{ \theta_{-i}, \rho\right\} > 0$. We now show that $i$ cannot improve by using multiple identities. Following the notation in Definition~\ref{def:ex-post-buyer}, for any $\hat{\theta}_{N_{i}}\in\Theta^{\left|N_{i}\right|}$, if $\max\left\{ \hat{\theta}_{N_{i}}\right\} \geq\max\left\{ \theta_{-i}, \rho\right\} $, buyer $i$ can at best\footnote{When $\max\left\{ \hat{\theta}_{N_{i}}\right\} =\max\left\{ \theta_{-i}\right\} $, buyer $i$ will not win the auction with probability one.} win the auction with the same payment as before; if $\max\left\{ \hat{\theta}_{N_{i}}\right\} <\max\left\{ \theta_{-i}, \rho\right\} $, buyer $i$ loses the auction with a zero payoff. Hence, buyer $i$ cannot benefit from using multiple identities in any case when $\theta_{i}>\max\left\{ \theta_{-i}, \rho\right\} $.
    \item $\theta_{i}=\max\left\{ \theta_{-i}, \rho\right\} $: Conditional on winning, buyer $i$ pays $\theta_{i}$ when bidding with only one identity, and obtains a zero payoff. For any $\hat{\theta}_{N_{i}}\in\Theta^{\left|N_{i}\right|}$, if $\max\left\{ \hat{\theta}_{N_{i}}\right\} \geq\max\left\{ \theta_{-i}, \rho\right\} $, buyer $i$ can only win the auction with the same payment and obtains a zero payoff; if $\max\left\{ \hat{\theta}_{N_{i}}\right\} <\max\left\{ \theta_{-i}, \rho\right\} $, buyer $i$ loses the auction with a zero payoff. Hence, buyer $i$ cannot benefit from using multiple identities in any case when $\theta_{i}=\max\left\{ \theta_{-i}, \rho\right\} $.
    \item $\theta_{i}<\max\left\{ \theta_{-i}, \rho\right\} $: Buyer $i$ loses the auction when bidding with only one identity, and obtains a zero payoff. For any $\hat{\theta}_{N_{i}}\in\Theta^{\left|N_{i}\right|}$, if $\max\left\{ \hat{\theta}_{N_{i}}\right\} \geq\max\left\{ \theta_{-i}, \rho\right\} $, buyer $i$ can only win the auction with the payment $\max\left\{ \theta_{-i}, \rho\right\} $ and obtains a negative payoff $\theta_{i}-\max\left\{ \theta_{-i}, \rho\right\} <0$; if $\max\left\{ \hat{\theta}_{N_{i}}\right\} <\max\left\{ \theta_{-i}, \rho\right\} $, buyer $i$ loses the auction with a zero payoff. Hence, buyer $i$ cannot benefit from using multiple identities in any case when $\theta_{i}<\max\left\{ \theta_{-i}, \rho\right\}$.
\end{enumerate}
Combining the above three cases establishes ex-post buyer identity compatibility for the second-price auction with any reserve price.

To show uniqueness, notice that ex-post buyer identity compatibility implies strategy-proofness (Lemma~\ref{lem:bip-sp}). It is enough to show that the second-price auction is the unique optimally efficient auction that is strategy-proof, and it becomes the unique optimal auction that is strategy-proof when the reserve price \(\rho^*\) is applied.

The allocation rules are pinned down by optimality and efficiency respectively, i.e., for all \(B \in \mathcal{B}\) and all \(\theta_{B} \in \Theta^{\left|B\right|}\),\footnote{We can ignore ties in the continuous type space.}
\begin{alignat*}{1}
    q_{i}^{\text{opt}}\left(\theta_{B}\right) & =  \mathbf{1}_{\theta_{i}>\max\left\{ \theta_{-i}, \rho^{*}\right\}}\\
    q_{i}^{\text{eff}}\left(\theta_{B}\right)& = \mathbf{1}_{\theta_{i}>\max\left\{ \theta_{-i}\right\}}.
\end{alignat*}

Let \(u_i\left(\theta_i, \theta_{-i}\right)\) be buyer \(i\)'s equilibrium payoff when the type profile is \(\theta_B = \left(\theta_i, \theta_{-i}\right)\). Strategy-proofness implies that
\begin{alignat*}{1}
    u_{i}\left(\theta_i, \theta_{-i}\right) & = \max_{\theta'_i \in \Theta} q_i\left(\theta'_i, \theta_{-i}\right) \theta_i - t_i\left(\theta'_i, \theta_{-i}\right) \\
    & = q_i\left(\theta_i, \theta_{-i}\right) \theta_i - t_i\left(\theta_i, \theta_{-i}\right).
\end{alignat*}
By the envelope theorem, we have
\[
    u_{i}\left(\theta_i, \theta_{-i}\right) - u_{i}\left(0, \theta_{-i}\right) = \int_{0}^{\theta_i} q_i\left(s, \theta_{-i}\right) ds.
\]
Then,
\[
t_i(\theta_i, \theta_{-i}) = q_i\left(\theta_i, \theta_{-i}\right) \theta_i - \int_{0}^{\theta_i} q_i\left(s, \theta_{-i}\right) ds - u_{i}\left(0, \theta_{-i}\right).
\]
Ex-post individual rationality implies that \(u_{i}\left(0, \theta_{-i}\right) \geq 0\). To maximize revenue, we have \(u_{i}\left(0, \theta_{-i}\right) = 0\). Plugging the optimal and efficient allocation rules into the above equation respectively, we have
\begin{alignat*}{1}
    t_{i}^{\text{opt}}\left(\theta_{B}\right) & =  \mathbf{1}_{\theta_{i}>\max\left\{ \theta_{-i}, \rho^{*}\right\}}\times\max\left\{ \theta_{-i}, \rho^{*}\right\}\\
    t_{i}^{\text{opt-eff}}\left(\theta_{B}\right) & = \mathbf{1}_{\theta_{i}>\max\left\{ \theta_{-i}\right\}}\times\max\left\{ \theta_{-i}\right\},
\end{alignat*}
which are the second-price auction with and without the reserve price \(\rho^*\) respectively. Hence, we have shown uniqueness under strategy-proofness. \qed

\subsection{Proof of Theorem~\ref{thm:lit-opt}}\label{proof:lit-opt}

We prove by showing that, for any optimal (or optimally efficient) lit auction, and any set of buyers $B$, the seller can always strictly benefit from using a single identity. The following argument focuses on optimal auctions, but the same logic applies to optimally efficient auctions by letting the reserve price \(\rho^*\) be zero.

Consider a set of bidders $N = B \cup \left\{0\right\}$, where bidder 0 is the identity controlled by the seller and every one else is a distinct buyer. Let $\theta_{-i}=\theta_{B\backslash\left\{ i\right\} }$ and $\overline{\theta}_{-i}=\max\left\{ \theta_{B\backslash\left\{ i\right\} }\right\}$. For any buyer \(i \in B\), we have for all $\theta_{B} \in \Theta^{\left|B\right|}$, and all $\theta_{0} \in \Theta$,\footnote{$\theta_{0}$ captures buyers' view that bidder 0 is a genuine buyer, and $\theta'_{0}$ is the type that the seller pretends to be under this view.}
\[
    t_{i}^{\text{opt}}\left(\theta_{B},\theta_{0}\right) \leq \sup_{\theta'_{0} \in \Theta} t_{i}^{\text{opt}}\left(\theta_{B},\theta'_{0}\right).
\]
In particular, conditional on buyer \(i\) of type \(\theta_i \geq \rho^*\) winning the auction,\footnote{We can ignore ties in the continuous type space. In Online Appendix~\ref{proof:lit-opt-finite}, we deal with ties explicitly in the finite type space.} we have
\begin{alignat}{1}
    & \mathbb{E}_{\theta_{-i},\theta_{0}}\left[\left.t_{i}^{\text{opt}}\left(\theta_{i}, \theta_{-i},\theta_{0}\right)\right|\max\left\{ \overline{\theta}_{-i},\theta_{0}\right\} <\theta_{i}\right] \label{eq:winning-payment}\\
    \leq & \mathbb{E}_{\theta_{-i},\theta_{0}}\left[\left.\sup_{\theta'_{0} \in \Theta} t_{i}^{\text{opt}}\left(\theta_{i}, \theta_{-i},\theta'_{0}\right)\right|\max\left\{ \overline{\theta}_{-i},\theta_{0}\right\} <\theta_{i}\right] \nonumber\\
    = & \mathbb{E}_{\theta_{-i}}\left[\left.\sup_{\theta'_{0} \in \Theta} t_{i}^{\text{opt}}\left(\theta_{i}, \theta_{-i},\theta'_{0}\right)\right|\overline{\theta}_{-i}<\theta_{i}\right]. \nonumber
\end{alignat}
Notice that \eqref{eq:winning-payment} is the expected payment for buyer \(i\) of type \(\theta_i\) conditional on winning. The payoff equivalence lemma \citep{myersonOptimalAuctionDesign1981} shows that the expected payment for buyer \(i\) of type \(\theta_i\) (without conditional on winning) is pinned down by optimality given (interim) individual rationality.\footnote{The term ``interim'' implies that buyers know their own types, but only have expectations over others' types.} Under ex-post individual rationality, \eqref{eq:winning-payment} is also pinned down by optimality, because the losing payment is always zero. The equality is obtained when the payment \( t_{i}^{\text{opt}}\left(\theta_{i}, \theta_{-i},\theta_{0}\right)\) does not vary with \(\theta_0\) conditional on \(\theta_i > \max\left\{ \overline{\theta}_{-i},\theta_{0}\right\}\). Notice that the payment rule treats bidders symmetrically under anonymity. Then, \( t_{i}^{\text{opt}}\left(\theta_{i}, \theta_{-i},\theta_{0}\right)\) also does not vary with \(\theta_{-i}\) conditional on \(\theta_i > \max\left\{ \overline{\theta}_{-i},\theta_{0}\right\}\). Therefore, the payment for buyer \(i\) of type \(\theta_i\) is fixed conditional on winning, which is exactly the first-price auction (with reserve). Consequently, we have
\[
    \mathbb{E}_{\theta_{-i}}\left[\left.\sup_{\theta'_{0} \in \Theta} t_{i}^{\text{opt}}\left(\theta_{i}, \theta_{-i},\theta'_{0}\right)\right|\overline{\theta}_{-i}<\theta_{i}\right] \geq \beta^{n}\left(\theta_i\right)=\theta_i-\frac{\int_{\rho^{*}}^{\theta_i}F^{n-1}\left(x\right)dx}{F^{n-1}\left(\theta_i\right)},
\]
where \(\beta^{n}\left(\cdot\right)\) is the equilibrium bidding function in the first-price auction with the reserve price \(\rho^{*}\) when the number of bidders is \(n = \left|N\right|\).\footnote{We assume \(\beta^{n}\left(\theta\right) = 0\) when \(\theta < \rho^*\) for convenience.} Without loss, we assume \(n \geq 2\), i.e., there is at least one buyer and one bidder controlled by the seller in the auction.\footnote{By definition, \(n > 0\) because \(0 \in N\). When \(N = \left\{0\right\}\), there are no buyers and, consequently, no sale in any case.}

The first-price auction (with reserve) proves to be the most resilient against the seller's manipulation through the use of fake identities. Moreover, it provides a lower bound for the expected revenue the seller can obtain by shill bidding. Notice that we can focus on the payment from the buyer of the highest type under optimality, because only the highest type can win and only winners pay. Then,
\begin{alignat*}{1}
 & \mathbb{E}_{\theta_{B}}\left[\sup_{\theta'_{0} \in  \Theta }\sum_{i\in B}t_{i}^{\text{opt}}\left(\theta_{B},\theta'_{0}\right)\right]\\
= & \mathbb{E}_{\tilde{\theta}_{B}}\left[\sup_{\theta'_{0} \in \Theta}  t_{i}^{\text{opt}}\left(\theta_i = \tilde{\theta}^{1:n-1}, \theta_{-i} = \left(\tilde{\theta}^{2:n-1}, \dots, \tilde{\theta}^{n-1:n-1}\right),\theta'_{0}\right)\right],
\end{alignat*}
where $\tilde{\theta}^{k:n-1}$ is the \(k\)th order statistic among \(\tilde{\theta}_{B}\).\footnote{Note that \(\left|B\right| = \left|N\right| - 1 = n - 1\).} Put differently, $\tilde{\theta}^{k:n-1}$ is the \(k\)th highest of \(n-1\) independent draws from the type space \(\Theta\) with the distribution \(F\left(\cdot\right)\). The equality holds because the payment rule treats bidders symmetrically under anonymity. Since we only care about the buyer of the highest type, we can rewrite the above equation in the following way:
\begin{alignat*}{1}
    & \mathbb{E}_{\tilde{\theta}_{B}}\left[\sup_{\theta'_{0} \in \Theta}  t_{i}^{\text{opt}}\left(\theta_i = \tilde{\theta}^{1:n-1}, \theta_{-i} = \left(\tilde{\theta}^{2:n-1}, \dots, \tilde{\theta}^{n-1:n-1}\right),\theta'_{0}\right)\right] \\
    = & \mathbb{E}_{\tilde{\theta}_{B}}\left[\mathbb{E}_{\tilde{\theta}_{B}}\left[\left.\sup_{\theta'_{0} \in \Theta} t_{i}^{\text{opt}}\left(\tilde{\theta}^{1:n-1}, \tilde{\theta}^{2:n-1}, \dots, \tilde{\theta}^{n-1:n-1}, \theta'_0\right)\right|\tilde{\theta}^{1:n-1}\right]\right] \\
    = & \mathbb{E}_{\tilde{\theta}_{B}}\left[\mathbb{E}_{\tilde{\theta}_{B}, \theta_{-i}}\left[\left.\sup_{\theta'_{0}\in \Theta}t_{i}^{\text{opt}}\left(\tilde{\theta}^{1:n-1}, \theta_{-i}, \theta'_0\right)\right|\tilde{\theta}^{1:n-1}\geq \overline{\theta}_{-i}\right]\right] \\
    = & \mathbb{E}_{\tilde{\theta}_{B}}\left[\mathbb{E}_{\theta_{-i}}\left[\left.\sup_{\theta'_{0}\in \Theta}t_{i}^{\text{opt}}\left(\tilde{\theta}^{1:n-1}, \theta_{-i}, \theta'_0\right)\right|\tilde{\theta}^{1:n-1}\geq \overline{\theta}_{-i}\right]\right] \\
    \geq & \mathbb{E}_{\tilde{\theta}_{B}}\left[\beta^{n}\left(\tilde{\theta}^{1:n-1}\right)\right].
\end{alignat*}
The first equality follows from the law of iterated expectations. The second equality follows from anonymity and the fact that the conditional joint distribution of \(\tilde{\theta}^{2:n-1}, \dots, \tilde{\theta}^{n-1:n-1}\) given \(\tilde{\theta}^{1:n-1}\), is the same as the joint distribution of the order statistics obtained from \(n-2\) independent draws from the type space \(\Theta\) with the distribution truncated on the left at \(\tilde{\theta}^{1:n-1}\). To see this, we first observe that the joint density function of the order statistics is
\[
f^{\text{order}}\left(\tilde{\theta}^{1:n-1}, \tilde{\theta}^{2:n-1}, \dots, \tilde{\theta}^{n-1:n-1}\right) = \left(n-1\right)!\prod_{i=1}^{n-1}f\left(\tilde{\theta}^{i:n-1}\right).
\]
The density function of the first order statistic is
\[
f^{\text{order}}\left(\tilde{\theta}^{1:n-1}\right) = \left(F^{n-1}\left(\tilde{\theta}^{1:n-1}\right)\right)' =  \left(n-1\right)F^{n-2}\left(\tilde{\theta}^{1:n-1}\right)f\left(\tilde{\theta}^{1:n-1}\right).
\]
Then, we have
\begin{alignat*}{1}
    & f^{\text{order}}\left(\left.\tilde{\theta}^{2:n-1}, \dots, \tilde{\theta}^{n-1:n-1}\right|\tilde{\theta}^{1:n-1}\right)\\
    = & \frac{f^{\text{order}}\left(\tilde{\theta}^{1:n-1}, \tilde{\theta}^{2:n-1}, \dots, \tilde{\theta}^{n-1:n-1}\right)}{f^{\text{order}}\left(\tilde{\theta}^{1:n-1}\right)}\\
    = & \left(n-2\right)!\prod_{i=2}^{n-1}\frac{f\left(\tilde{\theta}^{i:n-1}\right)}{F\left(\tilde{\theta}^{1:n-1}\right)},
\end{alignat*}
which is exactly the joint density function of the order statistics obtained from \(n-2\) independent draws from the type space \(\Theta\) with the distribution truncated on the left at \(\tilde{\theta}^{1:n-1}\). Therefore, we can replace the random variables \(\tilde{\theta}^{2:n-1}, \dots, \tilde{\theta}^{n-1:n-1}\) conditional on \(\tilde{\theta}^{1:n-1}\) with the independent random variables \(\theta_{-i}\) conditional on \(\overline{\theta}_{-i} \leq \tilde{\theta}^{1:n-1}\).

As a recap, we have characterized the lower bound of the expected revenue for the seller by employing a fake identity in the optimal lit auction for any set of buyers \(B\), i.e.,
\[
    \mathbb{E}_{\theta_{B}}\left[\sup_{\theta_{0} \in  \Theta }\sum_{i\in B}t_{i}^{\text{opt}}\left(\theta_{B},\theta_{0}\right)\right] \geq \mathbb{E}_{\theta_{B}}\left[\beta^{n}\left(\theta^{1:n-1}\right)\right].
\]

By the revenue equivalence theorem \citep{myersonOptimalAuctionDesign1981},  the expected revenue from any optimal auction with a set of buyers \(B\) is the same as the expected revenue from the first-price auction with the reserve price \(\rho^{*}\) with the same set of buyers. Therefore, we have
\[
    \mathbb{E}_{\theta_{B}}\left[\sum_{i\in B}t_{i}^{\text{opt}}\left(\theta_{B}\right)\right] = \mathbb{E}_{\theta_{B}}\left[\beta^{n-1}\left(\theta^{1:n-1}\right)\right],
\]
where \(\beta^{n-1}\left(\cdot\right)\) is the equilibrium bidding function in the first-price auction with the reserve price \(\rho^{*}\) when the number of bidders is \(n-1 = \left|B\right| = \left|N\right| - 1\).

Notice that for all \(\theta > \rho^*\),
\[
    \beta^{n-1}\left(\theta\right) = \theta-\frac{\int_{\rho^{*}}^{\theta}F^{n-2}\left(x\right)dx}{F^{n-2}\left(\theta\right)} < \theta-\frac{\int_{\rho^{*}}^{\theta}F^{n-1}\left(x\right)dx}{F^{n-1}\left(\theta\right)} = \beta^{n}\left(\theta\right).
\]
Intuitively, buyers bid higher when there are more bidders in the first-price auction, because the auction is perceived as more competitive with more bidders in the auction.

Hence, we have
\[
    \mathbb{E}_{\theta_{B}}\left[\sum_{i\in B}t_{i}^{\text{opt}}\left(\theta_{B}\right)\right] <
    \mathbb{E}_{\theta_{B}}\left[\sup_{\theta_{0} \in  \Theta }\sum_{i\in B}t_{i}^{\text{opt}}\left(\theta_{B},\theta_{0}\right)\right].
\]
Because $B$ is any arbitrary set of buyers, we have
\[
\mathbb{E}_{B \in \mathcal{B}}\left[\mathbb{E}_{\theta_{B}}\left[\sum_{i\in B}t_{i}^{\text{opt}}\left(\theta_{B}\right)\right]\right]<\mathbb{E}_{B \in \mathcal{B}}\left[\mathbb{E}_{\theta_{B}}\left[\sup_{\theta_{0} \in \Theta}\sum_{i\in B}t_{i}^{\text{opt}}\left(\theta_{B},\theta_{0}\right)\right]\right],
\]
which violates ex-post seller identity compatibility. \qed

\subsection{Proof of Theorem~\ref{thm:posted-price}}\label{proof:posted-price}

Consider the case $\left|B\right|=1$. Incentive compatibility implies that for all $\theta_{i},\theta_{i}'\in\Theta$,
\begin{equation}
\begin{aligned}
    q_{i}\left(\theta_{i}\right)\theta_{i}-t_{i}\left(\theta_{i}\right) & \geq q_{i}\left(\theta_{i}'\right)\theta_{i}-t_{i}\left(\theta_{i}'\right)\\
    q_{i}\left(\theta_{i}\right)\theta_{i}'-t_{i}\left(\theta_{i}\right) & \leq q_{i}\left(\theta_{i}'\right)\theta_{i}'-t_{i}\left(\theta_{i}'\right).
\end{aligned}
\label{eq:ic}
\end{equation}
Take the difference of the above two inequalities, we have
\[
\left(q_{i}\left(\theta_{i}\right)-q_{i}\left(\theta_{i}'\right)\right)\times\left(\theta_{i}-\theta_{i}'\right)\geq0.
\]
Therefore, the allocation rule $q_{i}\left(\theta_{i}\right)$ is increasing in $\theta_{i}$. Consequently, the payment rule $t_{i}\left(\theta_{i}\right)$ is also increasing in $\theta_{i}$, i.e., \(\left(t_{i}\left(\theta_{i}\right)-t_{i}\left(\theta_{i}'\right)\right)\times\left(\theta_{i}-\theta_{i}'\right)\geq0\).

Note that \(\mathrm{Pr}[|B|=1]>0\). When the lit auction $\left(q,t\right)$ is ex-post auctioneer identity-compatible, we have for all $\theta_{i}\in\Theta$ almost surely,
\begin{equation}
t_{i}\left(\theta_{i}\right)\geq\sup_{\substack{n\in\mathbb{N},\theta_{-i}\in\Theta^{n-1}\\
q_{i}\left(\theta_{i},\theta_{-i}\right)>0}}\frac{t_{i}\left(\theta_{i},\theta_{-i}\right)}{q_{i}\left(\theta_{i},\theta_{-i}\right)}.\label{eq:posted-constraint}
\end{equation}
When the number of buyers is $n$, the expected payment from buyer $i$ of type $\theta_{i}$ is given as follows
\[
\mathbb{E}_{\theta_{-i}}\left[t_{i}\left(\theta_{i},\theta_{-i}\right)\right]=\mathbb{E}_{\theta_{-i}}\left[t_{i}\left(\theta_{i},\theta_{-i}\right)\mathbf{1}_{q_{i}\left(\theta_{i},\theta_{-i}\right)>0}+t_{i}\left(\theta_{i},\theta_{-i}\right)\mathbf{1}_{q_{i}\left(\theta_{i},\theta_{-i}\right)=0}\right].
\]
Ex-post individual rationality implies that
\[
\mathbb{E}_{\theta_{-i}}\left[t_{i}\left(\theta_{i},\theta_{-i}\right)\mathbf{1}_{q_{i}\left(\theta_{i},\theta_{-i}\right)=0}\right]\leq0.
\]
From the inequality~\eqref{eq:posted-constraint}, we know that
\[
\mathbb{E}_{\theta_{-i}}\left[t_{i}\left(\theta_{i},\theta_{-i}\right)\mathbf{1}_{q_{i}\left(\theta_{i},\theta_{-i}\right)>0}\right]\leq\mathbb{E}_{\theta_{-i}}\left[q_{i}\left(\theta_{i},\theta_{-i}\right)t_{i}\left(\theta_{i}\right)\mathbf{1}_{q_{i}\left(\theta_{i},\theta_{-i}\right)>0}\right].
\]
Let $\pi^{n}$ denote the expected revenue for the seller when the number of buyers is $n$.
\begin{alignat*}{1}
\pi^{n}= & \sum_{i=1}^{n}\mathbb{E}_{\theta_{i},\theta_{-i}}\left[t_{i}\left(\theta_{i},\theta_{-i}\right)\right]\\
= & \sum_{i=1}^{n}\mathbb{E}_{\theta_{i}}\left[\mathbb{E}_{\theta_{-i}}\left[t_{i}\left(\theta_{i},\theta_{-i}\right)\right]\right]\\
\leq & \sum_{i=1}^{n}\mathbb{E}_{\theta_{i}}\left[\mathbb{E}_{\theta_{-i}}\left[q_{i}\left(\theta_{i},\theta_{-i}\right)t_{i}\left(\theta_{i}\right)\mathbf{1}_{q_{i}\left(\theta_{i},\theta_{-i}\right)>0}\right]\right]\\
= & \sum_{i=1}^{n}\mathbb{E}_{\theta_{i},\theta_{-i}}\left[q_{i}\left(\theta_{i},\theta_{-i}\right)t_{i}\left(\theta_{i}\right)\mathbf{1}_{q_{i}\left(\theta_{i},\theta_{-i}\right)>0}\right]\\
= & \mathbb{E}_{\theta_{i},\theta_{-i}}\left[\sum_{i=1}^{n}q_{i}\left(\theta_{i},\theta_{-i}\right)t_{i}\left(\theta_{i}\right)\mathbf{1}_{q_{i}\left(\theta_{i},\theta_{-i}\right)>0}\right]
\end{alignat*}
Recall that $t_{i}\left(\theta_{i}\right)$ is increasing in $\theta_{i}$ and $\sum_{i=1}^{n}q_{i}\left(\theta_{i},\theta_{-i}\right)\leq1$. Then we have
\[
\sum_{i=1}^{n}q_{i}\left(\theta_{i},\theta_{-i}\right)t_{i}\left(\theta_{i}\right)\leq t_{i}\left(\max\left\{ \theta_{i},\theta_{-i}\right\} \right)
\]
and
\begin{equation}
\pi^{n}\leq\mathbb{E}_{\theta_{i},\theta_{-i}}\left[t_{i}\left(\max\left\{ \theta_{i},\theta_{-i}\right\} \right)\right].\label{eq:bounded-rev}
\end{equation}
By taking $n=1$ for the inequality~\eqref{eq:posted-constraint}, we have that conditional on $q_{i}\left(\theta_{i}\right)>0$,
\[
t_{i}\left(\theta_{i}\right)\geq\frac{t_{i}\left(\theta_{i}\right)}{q_{i}\left(\theta_{i}\right)}.
\]
Then when $t_{i}\left(\theta_{i}\right)>0$, we have $q_{i}\left(\theta_{i}\right)=1$. Let\footnote{When $t_{i}\left(\theta_{i}\right)\leq0$ for all $\theta_{i}\in\Theta$, we know from the inequality~\eqref{eq:bounded-rev} that $\pi^{n}\leq0$ for all $n\in\mathbb{N}$. Then the auction never generates positive revenue. We can safely ignore those auctions.} 
\[
\theta^{l}=\inf\left\{ \left.\theta_{i}\in\Theta\right|t_{i}\left(\theta_{i}\right)>0\right\} .
\]
Then $t_{i}\left(\theta_{i}\right)>0$ and $q_{i}\left(\theta_{i}\right)=1$ for all $\theta_{i} > \theta^{l}$ because of monotonicity. Plugging $q_{i}\left(\theta_{i}\right)=q_{i}\left(\theta'_{i}\right)=1$ into the inequality~\eqref{eq:ic}, we have $t_{i}\left(\theta_{i}\right)=t_{i}\left(\theta'_{i}\right)$ for all $\theta_{i},\theta'_{i}>\theta^{l}$. Ex-post individual rationality implies that $t_{i}\left(\theta_i\right)\leq\theta'_i$ for all $\theta_{i},\theta'_{i}>\theta^{l}$. Then, \(t_i\left(\theta_i\right) \leq \theta^l\) for all \(\theta_i > \theta^l\). Hence, given \(\theta^l\), the seller maximizes expected revenue by running the posted-price mechanism with the price $\theta^{l}$, because the inequality~\eqref{eq:bounded-rev} is obtained with equality in this case. The only remaining problem for the seller is to determine the optimal posted price $\theta^{l}$, which always exists because the choice set $\Theta\ni\theta^{l}$ is compact.\footnote{The exact optimal posted price depends on both the type distribution and the distribution of the set of buyers. The type space \(\Theta\) is compact for both the continuous and finite type spaces.} In total, we show that the posted-price mechanism with a suitable price generates the highest expected revenue among all lit auctions that are ex-post auctioneer identity-compatible. \qed

\subsection{Proof of Lemma~\ref{lem:dark-revelation}}\label{proof:dark-revelation}

Given the dark auction \(\left(\Gamma^{\mathbb{N}}, \Sigma^d, S^d\right)\), we define the direct dark auction as follows: for all finite sets of bidders \(N\) with \(\left|N\right| = n\),
\[
    \left(q^{d}\left(\theta_{N}\right),t^{d}\left(\theta_{N}\right)\right) = x^{\left(\Gamma^{n},S^{n}\right)}\left(\theta_{N}\right) = g^{\Gamma^{n}}\left(S^{n}\left(\theta_{N}\right)\right) \quad \forall \theta_N \in \Theta^{n}.
\]
Thus, this direct dark auction is equivalent to the given one, as it yields the same mapping from type profiles to allocations and payments. It remains to check that truth-telling is dark Bayesian incentive-compatible.

For the dark auction \(\left(\Gamma^{\mathbb{N}}, \Sigma^d, S^d\right)\), dark Bayesian incentive compatibility (Definition~\ref{def:dark-eqm-anonymous}) implies that, for all $i\in\mathbb{B}$, all $\theta_{i}\in\Theta$, and all \(\sigma^d = \left(\sigma^n\right)_{n \in \mathbb{N}} \in \Sigma^d\),
\begin{alignat*}{1}
    & \sum_{n \in \mathbb{N}} p_i\left(n\right) \mathbb{E}_{\theta_{-i} \in \Theta^{n-1}}\left[u_{i}\left(g^{\Gamma^n}\left(S^{n}\left(\theta_i\right),S^{n}\left(\theta_{-i}\right)\right),\theta_{i}\right)\right] \\
    \geq & \sum_{n \in \mathbb{N}} p_i\left(n\right) \mathbb{E}_{\theta_{-i} \in \Theta^{n-1}}\left[u_{i}\left(g^{\Gamma^n}\left(\sigma^{n},S^{n}\left(\theta_{-i}\right)\right),\theta_{i}\right)\right],
\end{alignat*}
where \(p_i\left(n\right)\) is agent \(i\)'s belief that the number of agents is $n$. In particular, \(S^d (\theta'_i) \in \Sigma^d\) for all \(\theta'_i \in \Theta\). Then, we have
\begin{alignat*}{1}
    & \sum_{n \in \mathbb{N}} p_i\left(n\right) \mathbb{E}_{\theta_{-i} \in \Theta^{n-1}}\left[u_{i}\left(g^{\Gamma^n}\left(S^{n}\left(\theta_i\right),S^{n}\left(\theta_{-i}\right)\right),\theta_{i}\right)\right] \\
    \geq & \sum_{n \in \mathbb{N}} p_i\left(n\right) \mathbb{E}_{\theta_{-i} \in \Theta^{n-1}}\left[u_{i}\left(g^{\Gamma^n}\left(S^{n}\left(\theta'_i\right),S^{n}\left(\theta_{-i}\right)\right),\theta_{i}\right)\right],
\end{alignat*}
for all $i\in\mathbb{B}$, and all $\theta_{i}, \theta'_{i}\in\Theta$. These inequalities translate into
\begin{alignat*}{1}
    & \sum_{n \in \mathbb{N}} p_i\left(n\right) \mathbb{E}_{\theta_{-i} \in \Theta^{n-1}}\left[\theta_{i}q_{i}^{d}\left(\theta_i,\theta_{-i}\right)-t_{i}^{d}\left(\theta_i,\theta_{-i}\right)\right] \\
    \geq & \sum_{n \in \mathbb{N}} p_i\left(n\right) \mathbb{E}_{\theta_{-i} \in \Theta^{n-1}}\left[\theta_{i}q_{i}^{d}\left(\theta'_i,\theta_{-i}\right)-t_{i}^{d}\left(\theta'_i,\theta_{-i}\right)\right],
\end{alignat*}
for all $i\in\mathbb{B}$, and all $\theta_{i}, \theta'_{i}\in\Theta_{i}$. Hence, truth-telling is dark Bayesian incentive-compatible. \qed

\subsection{\label{proof:dark-rev-equiv}Proof of Proposition~\ref{prop:dark-rev-equiv}}

We first sum the expected revenue over all sets of buyers $B \in \mathcal{B}$, and then transform it into a summation over all potential buyers $i\in\mathbb{B}$. Changing the order of summation is valid, because the sum is always finite. Recall that $p_i\left(n\right)=\mathrm{Pr}[|\widetilde{B}|=n|i\in\widetilde{B}]$.
\begin{alignat*}{1}
\pi^{d} & =\sum_{B\in\mathcal{B}}\mathrm{Pr}\left[\widetilde{B}=B\right]\sum_{i\in B}\mathbb{E}_{\theta_{B} \in \Theta^{\left|B\right|}}\left[t_{i}^{d}\left(\theta_{B}\right)\right]\\
 & =\sum_{B\in\mathcal{B}}\sum_{i\in B}\mathrm{Pr}\left[\widetilde{B}=B\right]\mathbb{E}_{\theta_{i} \in \Theta}\left[T_{i}^{\left|B\right|}\left(\theta_{i}\right)\right]\\
 & =\sum_{i\in\mathbb{B}}\sum_{\substack{B \in \mathcal{B} \\ \text{s.t. } i \in B} }\mathrm{Pr}\left[\widetilde{B}=B\right]\mathbb{E}_{\theta_{i}\in \Theta}\left[T_{i}^{\left|B\right|}\left(\theta_{i}\right)\right]\\
 & =\sum_{i\in\mathbb{B}}\sum_{B \in \mathcal{B} }\mathrm{Pr}\left[\widetilde{B}=B, i \in \widetilde{B}\right]\mathbb{E}_{\theta_{i}\in \Theta}\left[T_{i}^{\left|B\right|}\left(\theta_{i}\right)\right]\\
 & =\sum_{i\in\mathbb{B}}\sum_{n=1}^{\infty}\mathrm{Pr}\left[|\widetilde{B}|=n,i\in\widetilde{B}\right]\mathbb{E}_{\theta_{i}\in \Theta}\left[T_{i}^{n}\left(\theta_{i}\right)\right]\\
 & =\sum_{i\in\mathbb{B}}\sum_{n=1}^{\infty}p_i\left(n\right)\mathrm{Pr}\left[i\in\widetilde{B}\right]\mathbb{E}_{\theta_{i}\in \Theta}\left[T_{i}^{n}\left(\theta_{i}\right)\right]\\
 & =\sum_{i\in\mathbb{B}}\mathrm{Pr}\left[i\in\widetilde{B}\right]\mathbb{E}_{\theta_{i}\in \Theta}\left[\sum_{n=1}^{\infty}p_i\left(n\right)T_{i}^{n}\left(\theta_{i}\right)\right]\\
 & =\sum_{i\in\mathbb{B}}\mathrm{Pr}\left[i\in\widetilde{B}\right]\mathbb{E}_{\theta_{i}\in \Theta}\left[T_{i}^{d}\left(\theta_{i}\right)\right]
\end{alignat*}

The expression is intuitive: we still sum the expected payments from all potential buyers $i\in\mathbb{B}$, but now weighted by the probability that they belong to the realized set of buyers. By Lemma~\ref{lem:dark-payoff-equiv},
\begin{alignat*}{1}
\pi^{d} & =\sum_{i\in\mathbb{B}}\mathrm{Pr}\left[i\in\widetilde{B}\right]\mathbb{E}_{\theta_{i}\in \Theta}\left[T_{i}^{d}\left(\theta_{i}\right)\right]\\
 & =\sum_{i\in\mathbb{B}}\mathrm{Pr}\left[i\in\widetilde{B}\right]\left(T_{i}^{d}\left(0\right) + \mathbb{E}_{\theta_{i} \in\Theta}\left[Q_{i}^{d}\left(\theta_{i}\right)v\left(\theta_{i}\right)\right]\right)\\
 & =\sum_{i\in\mathbb{B}}\mathrm{Pr}\left[i\in\widetilde{B}\right]T_{i}^{d}\left(0\right)+\sum_{i\in\mathbb{B}}\mathrm{Pr}\left[i\in\widetilde{B}\right]\mathbb{E}_{\theta_{i} \in\Theta}\left[Q_{i}^{d}\left(\theta_{i}\right)v\left(\theta_{i}\right)\right],
\end{alignat*}
where,
\begin{alignat*}{1}
 & \sum_{i\in\mathbb{B}}\mathrm{Pr}\left[i\in\widetilde{B}\right]\mathbb{E}_{\theta_{i} \in\Theta}\left[Q_{i}^{d}\left(\theta_{i}\right)v\left(\theta_{i}\right)\right]\\
= & \sum_{i\in\mathbb{B}}\mathrm{Pr}\left[i\in\widetilde{B}\right]\mathbb{E}_{\theta_{i} \in\Theta}\left[\sum_{n=1}^{\infty}p_{i}\left(n\right)\mathbb{E}_{\theta_{-i}\in\Theta^{n-1}}\left[q_{i}^{d}\left(\theta_{i},\theta_{-i}\right)\right]v\left(\theta_{i}\right)\right]\\
= & \sum_{i\in\mathbb{B}}\mathbb{E}_{\theta_{i} \in\Theta}\left[\sum_{n=1}^{\infty}\mathrm{Pr}\left[|\widetilde{B}|=n,i\in\widetilde{B}\right]\mathbb{E}_{\theta_{-i}\in\Theta^{n-1}}\left[q_{i}^{d}\left(\theta_{i},\theta_{-i}\right)\right]v\left(\theta_{i}\right)\right]\\
= & \sum_{i\in\mathbb{B}}\sum_{n=1}^{\infty}\mathrm{Pr}\left[|\widetilde{B}|=n,i\in\widetilde{B}\right]\mathbb{E}_{\theta_{B}\in\Theta^{n}}\left[q_{i}^{d}\left(\theta_{B}\right)v\left(\theta_{i}\right)\right]\\
= & \sum_{i\in\mathbb{B}}\sum_{B \in \mathcal{B}}\mathrm{Pr}\left[\widetilde{B}=B, i \in \widetilde{B}\right]\mathbb{E}_{\theta_{B}\in\Theta^{\left|B\right|}}\left[q_{i}^{d}\left(\theta_{B}\right)v\left(\theta_{i}\right)\right]\\
= & \sum_{i\in\mathbb{B}}\sum_{\substack{B \in \mathcal{B} \\ \text{s.t. } i \in B}}\mathrm{Pr}\left[\widetilde{B}=B\right]\mathbb{E}_{\theta_{B}\in\Theta^{\left|B\right|}}\left[q_{i}^{d}\left(\theta_{B}\right)v\left(\theta_{i}\right)\right]\\
= & \sum_{B\in\mathcal{B}}\mathrm{Pr}\left[\widetilde{B}=B\right]\sum_{i\in B}\mathbb{E}_{\theta_{B}\in\Theta^{\left|B\right|}}\left[q_{i}^{d}\left(\theta_{B}\right)v\left(\theta_{i}\right)\right]\\
= & \sum_{B\in\mathcal{B}}\mathrm{Pr}\left[\widetilde{B}=B\right]\mathbb{E}_{\theta_{B}\in\Theta^{\left|B\right|}}\left[\sum_{i\in B}q_{i}^{d}\left(\theta_{B}\right)v\left(\theta_{i}\right)\right].
\end{alignat*}
Hence,
\[
\pi^{d}=\sum_{i\in\mathbb{B}}\mathrm{Pr}\left[i\in\widetilde{B}\right]T_{i}^{d}\left(0\right)+\sum_{B\in\mathcal{B}}\mathrm{Pr}\left[\widetilde{B}=B\right]\mathbb{E}_{\theta_{B}\in\Theta^{\left|B\right|}}\left[\sum_{i\in B}q_{i}^{d}\left(\theta_{B}\right)v\left(\theta_{i}\right)\right].
\]
\qed

\subsection{\label{proof:dark-opt}Proof of Theorem~\ref{thm:dark-opt}}

Given the discussion in the main text, we know that the dark first-price auction with the reserve price \(\rho^*\) is both ex-post auctioneer identity-compatible and Bayesian buyer identity-compatible. Now we show the uniqueness. Consider an optimal dark auction $\left(q^{d},t^{d}\right)$. By Lemma~\ref{prop:dark-rev-equiv}, optimality implies that $q_{i}^{d}\left(\theta_{B}\right)=\mathbf{1}_{\theta_{i}>\max\left\{ \theta_{-i},\rho^{*}\right\} }$ for all $B\in\mathcal{B}$ and all $\theta_{B}\in\Theta^{\left|B\right|}$.\footnote{We can ignore ties, because the probability of a tie is zero in the continuous type space.} Ex-post auctioneer identity compatibility implies that
\begin{alignat*}{1}
    & \mathbb{E}_{B \in \mathcal{B}} \left[\mathbb{E}_{\theta_{B} \in\Theta^{\left|B\right|}}\left[\sum_{i\in B}t_{i}^{d}\left(\theta_{B}\right)\right]\right] \\
    \geq & \mathbb{E}_{B \in \mathcal{B}} \left[\mathbb{E}_{\theta_{B} \in\Theta^{\left|B\right|}}\left[\sup_{\left|S\right|\in \mathbb{N} \cup \left\{0\right\}, \theta_{S}\in\Theta^{\left|S\right|}} \sum_{i\in B} t_{i}^{d}\left(\theta_{B}, \theta_{S}\right)\right]\right].
\end{alignat*}
From Proposition~\ref{prop:dark-rev-equiv}, this inequality implies that
\begin{alignat*}{1}
    & \sum_{i \in \mathbb{B}} \mathrm{Pr}\left[i \in \widetilde{B}\right]\mathbb{E}_{B \in \mathcal{B}} \left[\left.\mathbb{E}_{\theta_{B} \in\Theta^{\left|B\right|}}\left[t_{i}^{d}\left(\theta_{B}\right)\right]\right|i \in B\right] \\
    \geq & \sum_{i \in \mathbb{B}} \mathrm{Pr}\left[i \in \widetilde{B}\right] \mathbb{E}_{B \in \mathcal{B}} \left[\left.\mathbb{E}_{\theta_{B} \in\Theta^{\left|B\right|}}\left[\sup_{\left|S\right|\in \mathbb{N} \cup \left\{0\right\}, \theta_{S}\in\Theta^{\left|S\right|}} t_{i}^{d}\left(\theta_{B}, \theta_{S}\right)\right]\right|i \in B\right].
\end{alignat*}
Notice that the payment rule is anonymous, i.e., it treats each bidder symmetrically. Hence, conditional on $i$ is one of the buyers, i.e., \(i \in \widetilde{B}\), the seller cannot push up the payment of buyer \(i\) via fake identities. In particular, for all $\theta_{i}\in\Theta$,
\begin{alignat}{1}\label{eq:seller}
 & \mathbb{E}_{B \in \mathcal{B}} \left[\left.\mathbb{E}_{\theta_{-i} \in\Theta^{\left|B\right|-1} }\left[t_{i}^{d}\left(\theta_{i},\theta_{-i}\right)\right]\right|i \in B\right] \\ 
\geq & \mathbb{E}_{B \in \mathcal{B}} \left[\left.\mathbb{E}_{\theta_{-i} \in\Theta^{\left|B\right|-1} }\left[\sup_{\left|S\right|\in \mathbb{N} \cup \left\{0\right\}, \theta_S \in \Theta^{\left|S\right|}, \overline{\theta}_S < \theta_{i}} t_{i}^{d}\left(\theta_{i},\theta_{-i},\theta_{S}\right)\right]\right|i \in B\right], \nonumber
\end{alignat}
where \(\overline{\theta}_S = \max\left\{\theta_S\right\}\).

Now, suppose that instead of the seller, buyer \(i\) employs the set of identities \(S\) with a type profile $\theta_{S}$ such that $\overline{\theta}_S <\theta_{i}$. Bayesian buyer identity compatibility implies that
\begin{alignat*}{1} 
    & \mathbb{E}_{B \in \mathcal{B}}\left[\left.\mathbb{E}_{\theta_{-i} \in \Theta^{\left|B\right|-1}}\left[\mathbf{1}_{\theta_{i}>\max\left\{ \theta_{-i},\rho^{*}\right\} }\times\left(\theta_{i}-t_{i}^{d}\left(\theta_{i},\theta_{-i}\right)\right)\right]\right|i\in B\right]\\
    \geq & \mathbb{E}_{B \in \mathcal{B}}\left[\left.\mathbb{E}_{\theta_{-i} \in \Theta^{\left|B\right|-1}}\left[\mathbf{1}_{\theta_{i}>\max\left\{ \theta_{-i},\rho^{*}\right\} }\times\left(\theta_{i}-t_{i}^{d}\left(\theta_{i},\theta_{-i},\theta_{S}\right)\right)\right]\right|i\in B\right].
\end{alignat*}
Then, for all \(\left|S\right| \in \mathbb{N} \cup \left\{0\right\}\) and all \(\theta_S \in \Theta^{\left|S\right|}\) such that \(\overline{\theta}_S < \theta_{i}\), we have
\begin{equation}\label{eq:buyer}
    \mathbb{E}_{B \in \mathcal{B}}\left[\left.\mathbb{E}_{\theta_{-i} \in \Theta^{\left|B\right|-1}}\left[t_{i}^{d}\left(\theta_{i},\theta_{-i}\right)\right]\right|i\in B\right]
    \leq \mathbb{E}_{B \in \mathcal{B}}\left[\left.\mathbb{E}_{\theta_{-i} \in \Theta^{\left|B\right|-1}}\left[t_{i}^{d}\left(\theta_{i},\theta_{-i},\theta_{S}\right)\right]\right|i\in B\right].
\end{equation}

Combining \eqref{eq:seller} and \eqref{eq:buyer}, we have, for all $\theta_{i}\in\Theta$, all \(\left|S\right| \in \mathbb{N} \cup \left\{0\right\}\), and all \(\theta_S \in \Theta^{\left|S\right|}\) such that \(\overline{\theta}_S < \theta_{i}\),
\[
\mathbb{E}_{B \in \mathcal{B}}\left[\left.\mathbb{E}_{\theta_{-i} \in \Theta^{\left|B\right|-1}}\left[t_{i}^{d}\left(\theta_{i},\theta_{-i}\right)\right]\right|i\in B\right] = \mathbb{E}_{B \in \mathcal{B}}\left[\left.\mathbb{E}_{\theta_{-i} \in \Theta^{\left|B\right|-1}}\left[t_{i}^{d}\left(\theta_{i},\theta_{-i},\theta_{S}\right)\right]\right|i\in B\right],
\]
and
\begin{alignat*}{1}
    & \mathbb{E}_{B \in \mathcal{B}} \left[\left.\mathbb{E}_{\theta_{-i} \in\Theta^{\left|B\right|-1} }\left[t_{i}^{d}\left(\theta_{i},\theta_{-i}\right)\right]\right|i \in B\right] \\ 
    = & \mathbb{E}_{B \in \mathcal{B}} \left[\left.\mathbb{E}_{\theta_{-i} \in\Theta^{\left|B\right|-1} }\left[\sup_{\left|S\right|\in \mathbb{N} \cup \left\{0\right\}, \theta_S \in \Theta^{\left|S\right|}, \overline{\theta}_S < \theta_{i}} t_{i}^{d}\left(\theta_{i},\theta_{-i},\theta_{S}\right)\right]\right|i \in B\right].
\end{alignat*}
Hence, the payment \(t_{i}^{d}\left(\theta_{i},\theta_{-i},\theta_{S}\right)\) is independent of \(\theta_S\) conditional on \(\overline{\theta}_S < \theta_i\). Notice that the payment rule is anonymous, i.e., it treats each bidder symmetrically. It follows that the payment \(t_{i}^{d}\left(\theta_{i},\theta_{-i},\theta_{S}\right)\) is independent of \(\theta_{-i}\) conditional on \(\overline{\theta}_{-i} < \theta_i\). Hence, conditional on $i$ being the winner, the winning payment is fixed, i.e., conditional on \(\theta_i > \rho^*\), for all \(\theta_{-i} \in \Theta^{\left|B\right|-1}\) such that \(\overline{\theta}_{-i}< \theta_{i}\), all \(\left|S\right| \in \mathbb{N}\), and all \(\theta_S \in \Theta^{\left|S\right|}\) such that \(\overline{\theta}_S< \theta_{i}\),
\[
t_{i}^{d}\left(\theta_{i},\theta_{-i},\theta_{S}\right)=\mathbb{E}_{\theta_{-i} \in\Theta^{\left|B\right|-1} }\left[\left.t_{i}^{d}\left(\theta_{i},\theta_{-i}\right)\right|\max\left\{ \theta_{-i},\rho^{*}\right\} <\theta_{i}\right].
\]

When \(\left|B\right| = 1\), the above equality implies that conditional on \(\max\left\{ \theta_S,\rho^{*}\right\} <\theta_{i}\), we have
\[
    t_{i}^{d}\left(\theta_{i},\theta_{S}\right) = t_{i}^{d}\left(\theta_{i}\right) \ \forall \left|S\right| \in \mathbb{N}.
\]
In other words, the winning payment is independent of how many bidders in the auction and what losing bidders' types are.\footnote{Here we use the assumption that $\mathrm{Pr}[|\widetilde{B}|=1]>0$. Otherwise, we cannot pin down the auction format when participants are sure that there is no shill bidding. For example, if $\mathrm{Pr}[|\widetilde{B}| = 1]=0$ and $\mathrm{Pr}[|\widetilde{B}|=2]>0$, then buyers are sure that there is no shill bidding when there are two bidders in the auction. Therefore, we do not need to run a first-price auction when there are only two bidders in the auction. Still, when there are more than two bidders in the auction, the above equality pins down a fixed payment which only varies with the winner's own type. In other words, we do require a dark first-price auction when the number of bidders is more than two.} Optimality implies that $t_{i}^{d}\left(\theta_{i}\right)=\beta^{d}\left(\theta_{i}\right)$ when \(\theta_i > \rho^*\), leading to
\[
    t_{i}^{d}\left(\theta_{B}\right)=\mathbf{1}_{\theta_{i}>\max\left\{ \theta_{-i},\rho^{*}\right\} }\times\beta^{d}\left(\theta_{i}\right) \ \forall B \in \mathcal{B},\ \forall \theta_B \in \Theta^{\left|B\right|},
\]
which corresponds to the dark first-price auction with the reserve price $\rho^{*}$.

It is straightforward to verify that the dark first-price auction (without reserve) is optimally efficient (Proposition~\ref{prop:dark-rev-equiv}). Moreover, it is both ex-post auctioneer identity-compatible and Bayesian buyer identity-compatible. Its uniqueness follows directly from the above argument by letting the reserve price be zero. \qed

\subsection{Proof of Theorem~\ref{thm:partitional}}\label{proof:partitional}

Consider any partitional mechanism \(\left\{\Gamma^{\mathbb{N}}, \left(\Sigma^{y}, S^{y}\right)_{y \in Y}, \phi\right\}\). Partitional Bayesian incentive compatibility (Definition~\ref{def:partitional-eqm-anonymous}) implies that
for all $i\in\mathbb{B}$, all $\theta_{i}\in\Theta$, all $y \in Y$, and all \(\sigma^y = \left(\sigma^n\right)_{n \in \phi^{-1}\left(y\right)} \in \Sigma^y\),
\begin{alignat}{1}\label{eq:partitional-ic}
    & \sum_{n \in \phi^{-1}\left(y\right)} p_i\left(\left.n\right|y\right) \mathbb{E}_{\theta_{-i} \in\Theta^{n-1}}\left[u_{i}\left(g^{\Gamma^n}\left(S^{n}\left(\theta_i\right),S^{n}\left(\theta_{-i}\right)\right),\theta_{i}\right)\right] \nonumber\\
    \geq & \sum_{n \in \phi^{-1}\left(y\right)} p_i\left(\left.n\right|y\right) \mathbb{E}_{\theta_{-i} \in\Theta^{n-1}}\left[u_{i}\left(g^{\Gamma^n}\left(\sigma^n,S^{n}\left(\theta_{-i}\right)\right),\theta_{i}\right)\right],
\end{alignat}
where $p_i\left(\left.n\right|y\right)$ is agent \(i\)'s belief that the number of agents is \(n\) under signal \(y\).

Let \(p_i\left(y\right)\) denote the probability of agent \(i\) observing signal \(y\). Then,
\[
p_i\left(y\right) = \sum_{n \in \phi^{-1}\left(y\right)} \mathrm{Pr}\left[\left.|\widetilde{B}|=n\right|i \in \widetilde{B}\right] = \sum_{n \in \phi^{-1}\left(y\right)} p_i(n).
\]
Hence, \(p_i\left(\left.n\right|y\right) = \frac{p_i\left(n\right)}{p_i\left(y\right)}\) for \(y = \phi\left(n\right)\) and 0 otherwise. By summing \eqref{eq:partitional-ic} over all signals weighted by the probability of agent \(i\) observing each signal, we have
\begin{alignat*}{1}
    & \sum_{y \in Y}\left\{p_i\left(y\right) \sum_{n \in \phi^{-1}\left(y\right)} p_i\left(\left.n\right|y\right) \mathbb{E}_{\theta_{-i} \in\Theta^{n-1}}\left[u_{i}\left(g^{\Gamma^n}\left(S^{n}\left(\theta_i\right),S^{n}\left(\theta_{-i}\right)\right),\theta_{i}\right)\right]\right\} \\
    = & \sum_{y \in Y}\left\{\sum_{n \in \phi^{-1}\left(y\right)} p_i\left(n\right) \mathbb{E}_{\theta_{-i} \in\Theta^{n-1}}\left[u_{i}\left(g^{\Gamma^n}\left(S^{n}\left(\theta_i\right),S^{n}\left(\theta_{-i}\right)\right),\theta_{i}\right)\right]\right\}  \\
    = & \sum_{n \in \mathbb{N}} p_i\left(n\right) \mathbb{E}_{\theta_{-i} \in\Theta^{n-1}}\left[u_{i}\left(g^{\Gamma^n}\left(S^{n}\left(\theta_i\right),S^{n}\left(\theta_{-i}\right)\right),\theta_{i}\right)\right]\\
    \geq & \sum_{y \in Y}\left\{p_i\left(y\right) \sum_{n \in \phi^{-1}\left(y\right)} p_i\left(\left.n\right|y\right) \mathbb{E}_{\theta_{-i} \in\Theta^{n-1}}\left[u_{i}\left(g^{\Gamma^n}\left(\sigma^n,S^{n}\left(\theta_{-i}\right)\right),\theta_{i}\right)\right]\right\} \\
    = & \sum_{y \in Y}\left\{\sum_{n \in \phi^{-1}\left(y\right)} p_i\left(n\right) \mathbb{E}_{\theta_{-i} \in\Theta^{n-1}}\left[u_{i}\left(g^{\Gamma^n}\left(\sigma^n,S^{n}\left(\theta_{-i}\right)\right),\theta_{i}\right)\right]\right\} \\
    = & \sum_{n \in \mathbb{N}} p_i\left(n\right) \mathbb{E}_{\theta_{-i} \in\Theta^{n-1}}\left[u_{i}\left(g^{\Gamma^n}\left(\sigma^n,S^{n}\left(\theta_{-i}\right)\right),\theta_{i}\right)\right],\label{eq:partitional-ic-sum}
\end{alignat*}
where the second and last equalities follow from the fact that \(\phi\) is partitional.

Let
\begin{alignat*}{1}
    \Sigma^d & = \times_{y \in Y} \Sigma^y\\
    S^d & = \left(S^y\right)_{y \in Y} = \left(\left(S^n\right)_{n \in \phi^{-1}\left(y\right)}\right)_{y \in Y}.
\end{alignat*}
Then, the above inequality holds for all $i\in\mathbb{B}$, all $\theta_{i}\in\Theta$, and all \(\sigma^d = \left(\left(\sigma^n\right)_{n \in \phi^{-1}\left(y\right)}\right)_{y \in Y} = \left(\sigma^y\right)_{y \in Y} \in \times_{y \in Y}\Sigma^y = \Sigma^d\). Hence, \(\left(\Gamma^{\mathbb{N}}, \Sigma^d, S^d\right)\) is dark Bayesian incentive-compatible. By construction, the two mechanisms induce the same outcome rule, and thus they are equivalent. \qed

\section{Online Appendix}\label{sec:online-appendix}

\subsection{Finite Type Space}\label{subsec:finite}

We assume a finite type space $\Theta=\left\{ \theta^{1},\theta^{2},\ldots,\theta^{K}\right\} $, where $f_{k}=\mathrm{Pr}\left[\theta_{i}=\theta^{k}\right]$ for $k\in\left\{ 1,2,\ldots,K\right\}$. Assume $\theta^{1}=0$, $\theta^{k+1}>\theta^{k}$, and $K>1$.

\citet{myersonOptimalAuctionDesign1981} characterizes the optimal auction's allocation rule in terms of the virtual valuation in the continuous type space. Similar ideas apply in the finite type space. We can define the virtual valuation of type $\theta^{k}$ as\footnote{We define $\theta^{K+1}=\theta^{K}$.}
\[
v\left(\theta^{k}\right)=\theta^{k}-\left(\theta^{k+1}-\theta^{k}\right)\frac{1-\sum_{m=1}^{k}f_{m}}{f_{k}}.
\]

\citet{lovejoyOptimalMechanismsFinite2006} and \citet{elkindDesigningLearningOptimal2007} characterize the optimal auction in the finite type space, where the optimal reserve price is
\[
\rho^{*}=\min\left\{ \left.\theta^{k}\in\Theta\right| v\left(\theta^{k}\right)\geq0\right\} =\theta^{k^{*}}.
\]

We observe that when $\theta^{K}=\rho^{*}$, the second-price auction with the reserve price $\rho^{*}$ degenerates to the posted-price mechanism with the price $\rho^{*}$. To differentiate between these two mechanisms, we assume that $\theta^{K}>\rho^{*}$.

All notions of identity compatibility are defined in the same way as in the continuous type space. The main difference is that the probability of a tie is positive in the finite type space, which means that ties cannot be ignored as before. To avoid repetition, we focus directly on how our results extend to the finite type space, organizing the discussion by the auction format to which each result applies.

\subsubsection*{Second-Price Auction}

Our first main result (Theorem~\ref{thm:ex-post-bip}) characterizes the second-price auction under ex-post buyer identity compatibility, closely mirroring the characterization under strategy-proofness. We now demonstrate that a similar characterization can be achieved in the finite type space---a feature unique to ex-post buyer identity compatibility.

In the continuous type space, strategy-proofness and optimal efficiency (or optimality) pin down the second-price auction (or with the reserve price $\rho^{*}$) \citep{greenCharacterizationSatisfactoryMechanisms1977,holmstromGrovesSchemeRestricted1979,myersonOptimalAuctionDesign1981}. However, this result does not directly extend to the finite type space \citep{harrisAllocationMechanismsDesign1981,lovejoyOptimalMechanismsFinite2006,jeongFirstPricePrincipleMaximizing2023}, due to the role of ties.

In the finite type space, the standard second-price auction with the reserve price $\rho^{*}$---in which the winner pays exactly the second-highest bid---is not optimal. Instead, strategy-proofness and optimality pin down the \emph{tie-corrected} second-price auction with the reserve price $\rho^{*}$ (Appendix~\ref{subsec:tie-corrected-second-price}).\footnote{Anonymity is assumed in this paper. In the continuous type space, the probability of a tie is zero. This auction degenerates to the standard second-price auction.}

\begin{definition}
    The \emph{tie-corrected} second-price auction is the direct mechanism of the second-price auction with a fixed priority order drawn uniformly at random for breaking ties.
\end{definition}

In the tie-corrected second-price auction, instead of paying exactly the second-highest bid, the unique winner pays an amount strictly between the second-highest bid (the tying bid) and the lowest unique winning bid (one tick above the tying bid). Hence, the auction is referred to as ``tie-corrected.'' Although this auction (with reserve) is optimal, it is not ex-post buyer identity-compatible. This is because, under the tie-corrected payment rule, buyers strictly prefer winning the tie over being the unique winner. The risk of losing the tie can be virtually eliminated by employing a large number of identities. Similarly, when ties are broken according to a fixed priority order drawn uniformly at random, the second-price auction (with reserve) is optimal but fails to be even Bayesian buyer identity-compatible, because the probability of drawing a high priority can also be increased by using multiple identities.\footnote{\label{fn:tie-breaking-rule}The second-price auction with a fixed priority order for breaking ties is considered in \citet{akbarpourCredibleAuctionsTrilemma2020}. In the finite type space, buyers are not indifferent to how ties are broken in the second-price auction. Given a fixed priority order, buyers with higher priorities pay less. Anonymity requires the priority order to be drawn uniformly at random in advance. Hence, buyers can strictly benefit from using multiple identities in order to increase the chance of drawing a high priority. See Appendix~\ref{subsec:fixed-priority-order} for details.} To summarize, we cannot achieve optimality and ex-post buyer identity compatibility at the same time in the finite type space.\footnote{Consequently, Corollary~\ref{cor:impossibility} holds in the finite type space.}

Now, let's turn to efficiency. As above, efficiency and strategy-proofness alone cannot pin down the second-price auction in the finite type space, even if we assume buyers of the lowest type obtain zero payoffs. The winner can be charged either above or below the second-highest bid while still preserving strategy-proofness. As demonstrated earlier, this flexibility in the payment rule enables buyers to exploit by using multiple identities. We show that ex-post buyer identity compatibility eliminates this upward flexibility in the payment rule under strategy-proofness, leading to the following result.

\begin{theorem*}
    The second-price auction maximizes expected revenue among all efficient auctions that are ex-post buyer identity-compatible.
\end{theorem*}

The proof is relegated to Appendix~\ref{proof:ex-post-bip-finite}. It is important to note that the second-price auction does not maximize expected revenue among all efficient auctions.\footnote{In the continuous type space, the second-price auction with the reserve price $\rho^{*}$ is the unique optimal auction that is ex-post buyer identity-compatible. However, in the finite type space, this auction does not maximize expected revenue among all auctions that are ex-post buyer identity-compatible. See Appendix~\ref{subsec:tie-corrected-second-price} for details.} The tie-corrected one does, but it fails to be ex-post buyer identity-compatible. This result holds in the continuous type space as well.\footnote{It is implied by Theorem~\ref{thm:ex-post-bip}.} As previously discussed, the characterization of the second-price auction under strategy-proofness in the continuous type space cannot be directly extended to the finite type space. Therefore, Theorem~\ref{thm:ex-post-bip-finite} stands as a novel characterization of the second-price auction that applies to both the finite and continuous type spaces.

Similarly, when extending Proposition~\ref{prop:Bayesian-sip} to the finite type space, we should be careful with the tie-breaking rule to ensure optimality and Bayesian seller identity compatibility at the same time.

\begin{proposition}\label{prop:Bayesian-sip-finite}
    The second-price auction with the reserve price $\rho^{*}$, which breaks ties according to a fixed priority order drawn uniformly at random, is Bayesian seller identity-compatible.
\end{proposition}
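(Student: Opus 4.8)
The plan is to reduce the statement to the optimality of the tie-corrected second-price auction already recorded in this appendix. The mechanism in the statement is exactly the tie-corrected second-price auction with reserve $\rho^*$ --- by the definition of the latter as the direct mechanism of the second-price auction with a fixed priority order drawn uniformly at random for breaking ties --- and, in the finite type space, this is an \emph{optimal} auction (established in Appendix~\ref{subsec:tie-corrected-second-price}). Hence, for every realized buyer set $B$, its no-shill expected revenue equals $\pi^{*}_{|B|}$, the maximal expected revenue over all Bayesian incentive-compatible, ex-post individually rational mechanisms for $|B|$ buyers with type distribution $F$. The argument I would then run is: any seller deviation --- committing to a shill set $S$ and, for each realized number of bidders $n$, a type profile $\theta_{S}^{(n)}$ for the shills --- induces, for each $B$, a mechanism \emph{on the buyers alone} that still lies in this feasible class, so its revenue is at most $\pi^{*}_{|B|}$, i.e.\ at most the no-shill revenue; averaging over $B$ then delivers the inequality in Definition~\ref{def:Bayesian-seller}.

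In order, the steps. First I would fix $B$, write $\theta_{S}=\theta_{S}^{|B|}$, and note that the seller's realized revenue is exactly $\sum_{i\in B}t_{i}$: the shills' payments never enter the seller's objective $u_{0}=\sum_{i\in N_{B}}t_{i}$, and whenever a shill wins the item goes unsold and no buyer pays. Next I would define $\mathcal{M}(\theta_{S})$ as the direct mechanism on the $|B|$ buyers' reports obtained by running the tie-corrected second-price auction with reserve $\rho^{*}$ on the $|B|+|S|$ bidders with the last $|S|$ reports frozen at $\theta_{S}$ and reading off only the buyers' allocations and payments. Because the tie-corrected second-price auction is strategy-proof (hence dominant-strategy incentive-compatible) and ex-post individually rational for every number of bidders, and because freezing some report coordinates to constants preserves both properties for the remaining coordinates --- dominant-strategy incentive compatibility demands a best response against \emph{every} profile of the other bidders' reports, in particular those in which the frozen coordinates equal $\theta_{S}$, and likewise for ex-post individual rationality --- the mechanism $\mathcal{M}(\theta_{S})$ is dominant-strategy incentive-compatible and ex-post individually rational for the $|B|$ buyers. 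It is therefore feasible for the program defining $\pi^{*}_{|B|}$, so $\mathbb{E}_{\theta_{B}}\big[\sum_{i\in B}t_{i}(\theta_{B},\theta_{S})\big]\le\pi^{*}_{|B|}=\mathbb{E}_{\theta_{B}}\big[\sum_{i\in B}t_{i}(\theta_{B})\big]$. Since this bound holds for every $B$ and every admissible $(S,\{\theta_{S}^{(n)}\}_{n})$, taking expectations over $B$ yields the claim.

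The delicate point --- and the analogue of the tie issue flagged in the continuous-case proof --- is why the statement must use the tie-corrected (equivalently, uniformly-random-priority) rule rather than the plain ``winner pays the second-highest bid'' rule. Under the plain rule the no-shill revenue falls strictly short of $\pi^{*}_{|B|}$, because discreteness leaves surplus on the table on the events where the winner's bid exceeds the runner-up's by one tick; a shill bidding alongside the top buyer would then raise the winner's payment on exactly those events --- at the cost of occasionally winning itself --- and could be profitable, so the reduction above would break. Pre-correcting the ties restores the equality between the no-shill revenue and $\pi^{*}_{|B|}$, which is precisely what lets the feasibility comparison pin things down. Within the proof itself there is no serious obstacle once the optimality of the tie-corrected auction is invoked; the only points needing attention are (i) keeping everything in direct-mechanism form so that the uniformly random priority order --- which ranks buyers and shills jointly --- is already integrated out, with $Q_{i},T_{i}$ priority-averaged, so that the quantifier over all profiles of the other bidders' reports in strategy-proofness applies verbatim to profiles with the shill coordinates frozen (equivalently, an average of strategy-proof, individually rational rules is again strategy-proof and individually rational), and (ii) recording that when ties occur between a shill and buyers the ``shill wins'' event contributes zero buyer revenue, which is consistent with $\mathcal{M}(\theta_{S})$ simply allocating to no buyer and charging no buyer there.
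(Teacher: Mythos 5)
Your proof is correct, but it takes a genuinely different route from the paper's. The paper reasons directly about the seller's behavior under the random priority order: the seller's best case is to be assigned the lowest priorities, so that shills never risk winning a tie; that scenario is then mapped onto the analysis of \citet{akbarpourCredibleAuctionsTrilemma2020} for the English auction with reserve $\rho^{*}$ and fixed-priority tie-breaking, where exaggerating other bidders' reports is unprofitable, and the observation that adding losing bidders does not change the winning payment under a fixed priority order completes the argument. You instead reduce the claim to the per-$n$ optimality of the tie-corrected second-price auction with reserve $\rho^{*}$ (Appendix~\ref{subsec:tie-corrected-second-price}): freezing the shill coordinates of a strategy-proof, ex-post individually rational mechanism yields a feasible mechanism on the buyers alone that is still dominant-strategy incentive-compatible and ex-post individually rational, so its revenue is bounded by the optimum $\pi^{*}_{|B|}$, which the no-shill mechanism attains for every $|B|$; averaging over $B$ gives the inequality of Definition~\ref{def:Bayesian-seller}. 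Your route is self-contained (no appeal to the external credibility result), sidesteps the case analysis over priorities and ties, makes transparent why the tie correction matters (it is exactly what restores per-$n$ optimality in the finite type space), and in fact establishes the more general principle that any auction that is strategy-proof, ex-post individually rational, and optimal for every number of bidders is Bayesian seller identity-compatible---a principle that also delivers the continuous-type Proposition~\ref{prop:Bayesian-sip} in one stroke. What the paper's argument buys in exchange is an explicit account of how the seller would actually attempt to manipulate the priority-order auction and a direct link to the credibility literature. I see no gap; the one step worth making fully explicit is that the deviation revenue is exactly $\mathbb{E}_{\theta_{B}}\bigl[\sum_{i\in B}t_{i}(\theta_{B},\theta_{S})\bigr]$ because buyers continue to report truthfully when shills are added (strategy-proofness holds for every number of bidders), which you have in effect already recorded.
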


\begin{proof}
    In this auction, the best scenario for the seller is when they are drawn the lowest priorities while participating under multiple identities. It ensures that the seller does not risk winning the auction when tying with buyers. This scenario mirrors the one analyzed in \cite{akbarpourCredibleAuctionsTrilemma2020}, where the auctioneer considers whether to exaggerate other buyers' bids. They show that the auctioneer has no incentive to do so in the English auction with the reserve price \(\rho^*\), which breaks ties according to a fixed priority order. Consequently, the seller also has no incentive to bid above the reserve price $\rho^{*}$ in the second auction when assigned the lowest priority.\footnote{The second-price auction gives the seller even less information than the English auction.} Therefore, regardless of the priority order, bidding above the reserve price \(\rho^*\) remains unprofitable for the seller. The number of losing bidders also does not affect the winning payment under a fixed priority order. As a result, this auction is Bayesian seller identity-compatible.
\end{proof}

\subsubsection*{First-Price Auction}

In the finite type space, the dark first-price auction is ex-post auctioneer identity-compatible. However, it is no longer Bayesian buyer identity-compatible, because buyers can benefit from using multiple identities to increase their chances of winning ties. Although we cannot extend Theorem~\ref{thm:dark-opt} to the finite type space, our main insight holds, i.e., dark auctions mute the channel through which the seller heightens the perceived competition by inflating the number of bidders.

For lit auctions, the impossibility result (Theorem \ref{thm:lit-opt}) remains valid in the finite type space because the conflict between revenue maximization and fake competition exists irrespective of the type space. The proof is relegated to Appendix~\ref{proof:lit-opt-finite}. In the continuous type space, we identify the first-price auction as the most challenging for the seller to manipulate. In the finite type space, we must handle ties carefully. Specifically, we identify the \emph{tie-corrected} first-price auction as the most challenging for the seller to manipulate \emph{safely}. Safety means that the seller never runs the risk of winning the auction when shill bidding.\footnote{\citet*{komoShillProofAuctions2024} assumes that the Dutch auction breaks ties deterministically according to an exogenous fixed priority order, providing the seller with prior knowledge of whether a tie favors them. When the number of bidders is random, as in this paper, we must be explicit about how the order is determined if such an order is adopted. Under anonymity, the priority order should be drawn uniformly at random. Theorem~\ref{thm:lit-opt} holds true irrespective of the tie-breaking rule as long as it satisfies anonymity, because we focus on the safe deviations of the seller. In general, our setting provides the seller with slightly less information, exposing them to the risk of winning the auction in the event of a tie.} The tie correction maximizes the seller's loss for ensuring safety. Still, we show that the seller strictly benefits from using a single identity in any optimal lit auction when doing so.

\begin{definition}\label{def:tie-corrected-first-price}
    The \emph{tie-corrected} first-price auction works as follows. Bidders simultaneously report their types, and the highest reported type wins. In case of a tie, the winner pays exactly the reported type.\footnote{In the continuous type space, the probability of a tie is zero. This auction degenerates to the first-price auction.} Ties are broken uniformly at random. Otherwise, the winner pays a fixed amount $g^{n}\left(\theta\right)$ that depends on the reported type \(\theta\) and the number of bidders \(n\), similar to the standard first-price auction.\footnote{The fixed amount $g^{n}\left(\theta\right)$ is determined by incentive compatibility constraints. See Appendix~\ref{subsuc:tie-corrected-first-price} for details.}
\end{definition}

\subsection{Proof of Theorem~\ref{thm:ex-post-bip-finite}}\label{proof:ex-post-bip-finite}

As shown in Appendix~\ref{proof:ex-post-bip}, the second-price auction is efficient and ex-post buyer identity compatible. We only need to show the remaining part that it maximizes expected revenue among all such auctions.

Anonymity entails the symmetric tie-breaking rule. Then, efficiency implies that 
\[
q_{i}^{\text{eff}}\left(\theta_{B}\right)=\frac{1}{\left|W^{\text{eff}}\left(\theta_{B}\right)\right|},\text{ where }W^{\text{eff}}\left(\theta_{B}\right)=\left\{ \left.i\in B\right|\theta_{i}=\max\left\{ \theta_{B}\right\} \right\} .
\]
By Lemma~\ref{lem:bip-sp}, ex-post buyer identity compatibility implies strategy-proofness. Hence, for each buyer $i\in B$, given any other buyers' type profile $\theta_{-i}\in\Theta^{\left|B\right|-1}$, if $i$ is the unique winner by playing the equilibrium strategy of type $\theta_{i}$, the winning payment must be fixed, i.e., $t_{i}^{\text{eff}}\left(\theta_{i},\theta_{-i}\right)=t_{i}^{\text{w}}\left(\theta_{-i}\right)$ for all $\theta_{i}>\max\left\{ \theta_{-i}\right\} $; if $i$ ties with other buyers by playing the equilibrium strategy of type $\theta_{i}$, i.e., $\theta_{i}=\max\left\{ \theta_{-i}\right\} $, the payment conditional on winning \(t_{i}^{\text{t}}\left(\theta_{-i}\right)\) must be weakly lower than $t_{i}^{\text{w}}\left(\theta_{-i}\right)$ by strategy-proofness, and be weakly lower than $\max\left\{ \theta_{-i}\right\} $ by ex-post individual rationality. Therefore, we have
\[
t_{i}^{\text{t}}\left(\theta_{-i}\right)=\frac{t_{i}^{\text{eff}}\left(\max\left\{ \theta_{-i}\right\} ,\theta_{-i}\right)}{q_{i}^{\text{eff}}\left(\max\left\{ \theta_{-i}\right\} ,\theta_{-i}\right)}\leq\min\left\{ t_{i}^{\text{w}}\left(\theta_{-i}\right),\max\left\{ \theta_{-i}\right\} \right\} .
\]

When buyer $i$ has type $\theta_{i}>\max\left\{ \theta_{-i}\right\} $, buyer $i$ obtains an equilibrium payoff of $\theta_{i}-t_{i}^{\text{w}}\left(\theta_{-i}\right)$ by using only one identity. With a set of identities $N_{i}$, buyer $i$ can play the equilibrium strategy of type $\theta_{j}=\max\left\{ \theta_{-i}\right\} $ for all $j\in N_{i}$, then $i$ wins the tie with probability
\[
\frac{\left|N_{i}\right|}{\left|W^{\text{eff}}\left(\theta_{-i},\theta_{N_{i}}\right)\right|}>\frac{\left|N_{i}\right|}{\left|N_{i}\right|+\left|B\right|}.
\]
The payment conditional on winning is $t_{i}^{\text{t}}\left(\theta_{-i},\theta_{N_{i}\backslash\left\{ i\right\} }\right)$.\footnote{By anonymity, we have $t_{i}^{\text{t}}\left(\theta_{-i},\theta_{N_{i}\backslash\left\{ i\right\} }\right)=t_{j}^{\text{t}}\left(\theta_{-i},\theta_{N_{i}\backslash\left\{ j\right\} }\right)$ for all $i,j\in N_{i}$.} Ex-post buyer identity compatibility ensures that buyer $i$ cannot profit from using any number of identities in any case. In particular, for any $\left|N_i\right|\in \mathbb{N} $, we have
\begin{alignat}{1}
    & \mathbb{E}_{B \in \mathcal{B}}\left[\left.\mathbb{E}_{\theta_{-i} \in\Theta^{\left|B\right|-1}}\left[\left(\theta_{i}-t^{\text{w}}_{i}\left(\theta_{-i}\right)\right)\mathbf{1}_{\theta_{i}>\max\left\{\theta_{-i}\right\}}\right]\right|i\in B\right] \nonumber \\
    & + \mathbb{E}_{B \in \mathcal{B}}\left[\left.\mathbb{E}_{\theta_{-i} \in\Theta^{\left|B\right|-1}}\left[\frac{\theta_{i}-t^{\text{t}}_{i}\left(\theta_{-i}\right)}{\left|W^{\text{eff}}\left(\theta_{B}\right)\right|}\mathbf{1}_{\theta_{i}=\max\left\{\theta_{-i}\right\}}\right]\right|i\in B\right] \label{eq:bip-eqm-payoff}\\
    \geq & \mathbb{E}_{B \in \mathcal{B}}\left[\left.\mathbb{E}_{\theta_{-i} \in\Theta^{\left|B\right|-1}}\left[\frac{\left|N_{i}\right|\left(\theta_{i}-t_{i}^{\text{t}}\left(\theta_{-i},\theta_{N_{i}\backslash\left\{ i\right\} }\right)\right)}{\left|W^{\text{eff}}\left(\theta_{-i},\theta_{N_{i}}\right)\right|}\mathbf{1}_{\theta_{i}\geq\max\left\{\theta_{-i}\right\}}\right]\right|i\in B\right] \nonumber \\
    > & \mathbb{E}_{B \in \mathcal{B}}\left[\left.\mathbb{E}_{\theta_{-i} \in\Theta^{\left|B\right|-1}}\left[\frac{\left|N_{i}\right|\left(\theta_{i}-t_{i}^{\text{t}}\left(\theta_{-i},\theta_{N_{i}\backslash\left\{ i\right\} }\right)\right)}{\left|N_{i}\right|+\left|B\right|}\mathbf{1}_{\theta_{i}\geq\max\left\{\theta_{-i}\right\}}\right]\right|i\in B\right] \nonumber  \\
    = & \sum_{n \in \mathbb{N}} \mathrm{Pr}\left[\left.|\widetilde{B}|=n\right|i \in \widetilde{B}\right]\mathbb{E}_{\theta_{-i} \in\Theta^{n-1}}\left[\frac{\left|N_{i}\right|\left(\theta_{i}-t_{i}^{\text{t}}\left(\theta_{-i},\theta_{N_{i}\backslash\left\{ i\right\} }\right)\right)}{\left|N_{i}\right|+n}\mathbf{1}_{\theta_{i}\geq\max\left\{\theta_{-i}\right\}}\right]. \label{eq:bip-dev-payoff}
\end{alignat}

The left-land side \eqref{eq:bip-eqm-payoff} is the equilibrium payoff of buyer $i$ when using only one identity, while the right-hand side \eqref{eq:bip-dev-payoff} is the deviation payoff of buyer $i$ when using a set of identities \(N_i\) and following the strategy that $\theta_{j}=\max\left\{ \theta_{-i}\right\} $ for all $j\in N_{i}$. Ex-post buyer identity compatibility entails that \eqref{eq:bip-dev-payoff} serves as a lower bound for \eqref{eq:bip-eqm-payoff}. When an auction is efficient, maximizing revenue is equivalent to minimizing buyers' equilibrium payoffs. Thus, to prove the theorem, it suffices to show that the second-price auction attains the lowest equilibrium payoffs for buyers among all efficient auctions that are ex-post buyer identity-compatible.

Because the expected number of buyers is finite, buyer \(i\)'s expected number of competitors is also finite,\footnote{We can ignore buyers who participate in the auction with probability zero, i.e., \(\mathrm{Pr}\left[i \in \widetilde{B}\right] = 0\).} i.e.,
\[
\sum_{n \in \mathbb{N}} n \mathrm{Pr}\left[\left.|\widetilde{B}|=n\right|i \in \widetilde{B}\right] = \frac{\sum_{n \in \mathbb{N}} n \mathrm{Pr}\left[|\widetilde{B}|=n,i \in \widetilde{B}\right]}{\mathrm{Pr}\left[i \in \widetilde{B}\right]} \leq  \frac{\sum_{n \in \mathbb{N}} n \mathrm{Pr}\left[|\widetilde{B}|=n\right]}{\mathrm{Pr}\left[i \in \widetilde{B}\right]} <\infty.
\]
Therefore, when buyer \(i\) employs an arbitrarily large number of identities, we can interchange limits as follows:
\begin{alignat*}{1}
    & \lim_{\left|N_{i}\right|\to\infty}\sum_{n \in \mathbb{N}} \mathrm{Pr}\left[\left.|\widetilde{B}|=n\right|i \in \widetilde{B}\right]\mathbb{E}_{\theta_{-i} \in\Theta^{n-1}}\left[\frac{\left|N_{i}\right|\left(\theta_{i}-t_{i}^{\text{t}}\left(\theta_{-i},\theta_{N_{i}\backslash\left\{ i\right\} }\right)\right)}{\left|N_{i}\right|+n}\mathbf{1}_{\theta_{i}\geq\max\left\{\theta_{-i}\right\}}\right] \\
    = & \sum_{n \in \mathbb{N}} \mathrm{Pr}\left[\left.|\widetilde{B}|=n\right|i \in \widetilde{B}\right] \lim_{\left|N_{i}\right|\to\infty}\mathbb{E}_{\theta_{-i} \in\Theta^{n-1}}\left[\frac{\left|N_{i}\right|\left(\theta_{i}-t_{i}^{\text{t}}\left(\theta_{-i},\theta_{N_{i}\backslash\left\{ i\right\} }\right)\right)}{\left|N_{i}\right|+n}\mathbf{1}_{\theta_{i}\geq\max\left\{\theta_{-i}\right\}}\right] \\
    = & \sum_{n \in \mathbb{N}} \mathrm{Pr}\left[\left.|\widetilde{B}|=n\right|i \in \widetilde{B}\right] \mathbb{E}_{\theta_{-i} \in\Theta^{n-1}}\left[\lim_{\left|N_{i}\right|\to\infty}\frac{\left|N_{i}\right|\left(\theta_{i}-t_{i}^{\text{t}}\left(\theta_{-i},\theta_{N_{i}\backslash\left\{ i\right\} }\right)\right)}{\left|N_{i}\right|+n}\mathbf{1}_{\theta_{i}\geq\max\left\{\theta_{-i}\right\}}\right] \\
    = & \sum_{n \in \mathbb{N}} \mathrm{Pr}\left[\left.|\widetilde{B}|=n\right|i \in \widetilde{B}\right] \mathbb{E}_{\theta_{-i} \in\Theta^{n-1}}\left[\left(\theta_{i}-t_{i}^{\text{t}}\left(\theta_{-i},\theta_{N_{i}\backslash\left\{ i\right\} }\right)\right)\mathbf{1}_{\theta_{i}\geq\max\left\{\theta_{-i}\right\}}\right] \\
    \geq & \sum_{n \in \mathbb{N}} \mathrm{Pr}\left[\left.|\widetilde{B}|=n\right|i \in \widetilde{B}\right]\mathbb{E}_{\theta_{-i} \in\Theta^{n-1}}\left[\left(\theta_{i}-\max\left\{\theta_{-i}\right\}\right)\mathbf{1}_{\theta_{i}\geq\max\left\{\theta_{-i}\right\}}\right],
\end{alignat*}
where the last inequality follows from the fact that 
\[
    t_{i}^{\text{t}}\left(\theta_{-i},\theta_{N_{i}\backslash\left\{ i\right\} }\right)\leq\max\left\{ \theta_{-i},\theta_{N_{i}\backslash\left\{ i\right\} }\right\} =\max\left\{ \theta_{-i}\right\}.
\]

Hence, the equilibrium payoff of buyer $i$ in any efficient auction that is ex-post buyer identity-compatible satisfies:
\[
\eqref{eq:bip-eqm-payoff} \geq \lim_{\left|N_{i}\right|\to\infty} \eqref{eq:bip-dev-payoff} = \sum_{n \in \mathbb{N}} \mathrm{Pr}\left[\left.|\widetilde{B}|=n\right|i \in \widetilde{B}\right]\mathbb{E}_{\theta_{-i} \in\Theta^{n-1}}\left[\left(\theta_{i}-\max\left\{\theta_{-i}\right\}\right)\mathbf{1}_{\theta_{i}\geq\max\left\{\theta_{-i}\right\}}\right].
\]
Since the second-price auction attains this lower bound, it maximizes expected revenue among all efficient auctions that are ex-post buyer identity-compatible. \qed

\subsection{Proof of Theorem~\ref{thm:lit-opt} (Finite Type Space)}\label{proof:lit-opt-finite}

As in Appendix~\ref{proof:lit-opt}, we prove by showing that, for any optimal (or optimally efficient) lit auction, and any set of buyers $B$, the seller can always strictly benefit from using a single identity. In the finite type space, we demonstrate that the seller can achieve this through only safe deviations---i.e., deviations in which the seller never wins the auction when shill bidding. Hence, the result holds irrespective of the tie-breaking rule as long as it satisfies anonymity. The following argument focuses on optimal auctions, but the same logic applies to optimally efficient auctions by letting the reserve price \(\rho^*\) be zero.

Consider a set of bidders $N = B \cup \left\{0\right\}$, where bidder 0 is the identity controlled by the seller and every one else is a distinct buyer. The optimal allocation rule requires that the item is always sold to the bidder of the highest type conditional on being above the reserve price \(\rho^*\) \citep{lovejoyOptimalMechanismsFinite2006, elkindDesigningLearningOptimal2007}. Now we focus on the expected payment of buyer $i\in B$ conditional on being the highest type $\theta_{i} = \max\left\{ \theta_{B}, \theta_0\right\} >\rho^{*}$.\footnote{Under optimality, buyers of type \(\rho^*\) always pay exactly \(\rho^*\) conditional on winning. We assume the existence of a type \(\theta_i > \rho^*\); otherwise, the highest type would be $\theta^{K}=\rho^{*}$, in which case the optimal auction reduces to the posted-price mechanism with the price $\rho^{*}$, which is ex-post seller identity-compatible.} Notice that being the highest type is not equivalent to winning the auction because of ties. Let $\theta_{-i}=\theta_{B\backslash\left\{ i\right\} }$ and $\overline{\theta}_{-i}=\max\left\{ \theta_{B\backslash\left\{ i\right\} }\right\}$. We obtain the following breakdown of this expected payment:
\begin{alignat*}{1}
T_{i}^{N}\left(\theta_{i}\right)= & \mathbb{E}_{\theta_{-i},\theta_{0}}\left[\left.t_{i}^{\text{opt}}\left(\theta_{i},\theta_{-i},\theta_{0}\right)\right|\theta_{i}\geq\max\left\{ \overline{\theta}_{-i},\theta_{0}\right\} \right] \\
= & x_{1}\times p_{x_{1}}+x_{2}\times p_{x_{2}}+x_{3}\times p_{x_{3}}+x_{4}\times p_{x_{4}},
\end{alignat*}
where,
\begin{alignat*}{1}
x_{1}= & \mathbb{E}_{\theta_{-i},\theta_{0}}\left[\left.t_{i}^{\text{opt}}\left(\theta_{i},\theta_{-i},\theta_{0}\right)\right|\theta_{i}>\max\left\{ \overline{\theta}_{-i},\theta_{0}\right\} \right],\\
x_{2}= & \mathbb{E}_{\theta_{-i},\theta_{0}}\left[\left.t_{i}^{\text{opt}}\left(\theta_{i},\theta_{-i},\theta_{0}\right)\right|\theta_{i}=\theta_{0}>\overline{\theta}_{-i}\right],\\
x_{3}= & \mathbb{E}_{\theta_{-i},\theta_{0}}\left[\left.t_{i}^{\text{opt}}\left(\theta_{i},\theta_{-i},\theta_{0}\right)\right|\theta_{i}=\overline{\theta}_{-i}>\theta_{0}\right],\\
x_{4}= & \mathbb{E}_{\theta_{-i},\theta_{0}}\left[\left.t_{i}^{\text{opt}}\left(\theta_{i},\theta_{-i},\theta_{0}\right)\right|\theta_{i}=\theta_{0}=\overline{\theta}_{-i}\right],\\
 p_{x_{1}}= & \mathrm{Pr}\left[\left.\theta_{i}>\max\left\{ \overline{\theta}_{-i},\theta_{0}\right\} \right|\theta_{i}\geq\max\left\{ \overline{\theta}_{-i},\theta_{0}\right\} \right],\\
 p_{x_{2}}= & \mathrm{Pr}\left[\left.\theta_{i}=\theta_{0}>\overline{\theta}_{-i}\right|\theta_{i}\geq\max\left\{ \overline{\theta}_{-i},\theta_{0}\right\} \right],\\
 p_{x_{3}}= & \mathrm{Pr}\left[\left.\theta_{i}=\overline{\theta}_{-i}>\theta_{0}\right|\theta_{i}\geq\max\left\{ \overline{\theta}_{-i},\theta_{0}\right\} \right],\\
 p_{x_{4}}= & \mathrm{Pr}\left[\left.\theta_{i}=\theta_{0}=\overline{\theta}_{-i}\right|\theta_{i}\geq\max\left\{ \overline{\theta}_{-i},\theta_{0}\right\} \right].
\end{alignat*}

Now we discuss how the seller can at least push up the payment of buyer $i\in B$ safely by pretending to be bidder $0$ of type $\theta'_{0}$ if the seller knows the type profile of buyers $\theta_{B}$. Safety for the seller means that the seller will never tie with buyer $i$ and end up winning the auction with some probability.
\begin{enumerate}
    \item $\theta_{i}>\max\left\{ \overline{\theta}_{-i},\theta_{0}\right\} $: The seller will maximize pointwise the payment for each type profile. We have
    \begin{alignat*}{1}
        x_{1}' & =\mathbb{E}_{\theta_{-i},\theta_{0}}\left[\left.\sup_{\theta'_{0}<\theta_{i}}t_{i}^{\text{opt}}\left(\theta_{i},\theta_{-i},\theta'_{0}\right)\right|\theta_{i}>\max\left\{ \overline{\theta}_{-i},\theta_{0}\right\} \right]\\
        & \geq\mathbb{E}_{\theta_{-i},\theta_{0}}\left[\left.t_{i}^{\text{opt}}\left(\theta_{i},\theta_{-i},\theta_{0}\right)\right|\theta_{i}>\max\left\{ \overline{\theta}_{-i},\theta_{0}\right\} \right]\\
        & =x_{1}.
    \end{alignat*}
    \item $\theta_{i}=\theta_{0}>\overline{\theta}_{-i}$: The seller will pretend to be a bidder of a lower type $\theta'_{0}<\theta_{i}$ to avoid winning the auction with some probability. Moreover, the seller will do it in the optimal way as in the case where $\theta_{i}>\max\left\{ \overline{\theta}_{-i},\theta_{0}\right\} $. Hence, we have
    \begin{alignat*}{1}
        x_{2}' & =\mathbb{E}_{\theta_{-i},\theta_{0}}\left[\left.\sup_{\theta'_{0}<\theta_{i}}t_{i}^{\text{opt}}\left(\theta_{i},\theta_{-i},\theta'_{0}\right)\right|\theta_{i}=\theta_{0}>\overline{\theta}_{-i}\right]\\
        & =\mathbb{E}_{\theta_{-i},\theta_{0}}\left[\left.\sup_{\theta'_{0}<\theta_{i}}t_{i}^{\text{opt}}\left(\theta_{i},\theta_{-i},\theta'_{0}\right)\right|\theta_{i}>\max\left\{ \overline{\theta}_{-i},\theta_{0}\right\} \right]\\
        & =x_{1}'\geq x_{1}.
    \end{alignat*}
    \item $\theta_{i}=\overline{\theta}_{-i}>\theta_{0}$: The seller will maximize pointwise the payment for each type profile. We have
    \begin{alignat*}{1}
        x_{3}' & =\mathbb{E}_{\theta_{-i},\theta_{0}}\left[\left.\sup_{\theta'_{0}<\theta_{i}}t_{i}^{\text{opt}}\left(\theta_{i},\theta_{-i},\theta'_{0}\right)\right|\theta_{i}=\overline{\theta}_{-i}>\theta_{0}\right]\\
        & \geq\mathbb{E}_{\theta_{-i},\theta_{0}}\left[\left.t_{i}^{\text{opt}}\left(\theta_{i},\theta_{-i},\theta_{0}\right)\right|\theta_{i}=\overline{\theta}_{-i}>\theta_{0}\right]\\
        & =x_{3}.
    \end{alignat*}
    \item $\theta_{i}=\overline{\theta}_{-i}=\theta_{0}$: The seller will pretend to be a bidder of a lower type $\theta'_{0}<\theta_{i}$ to avoid winning the auction with some probability. Moreover, the seller will do it in the optimal way as in the case where $\theta_{i}=\overline{\theta}_{-i}>\theta_{0}$. Hence, we have
    \begin{alignat*}{1}
        x_{4}' & =\mathbb{E}_{\theta_{-i},\theta_{0}}\left[\left.\sup_{\theta'_{0}<\theta_{i}}t_{i}^{\text{opt}}\left(\theta_{i},\theta_{-i},\theta'_{0}\right)\right|\theta_{i}=\theta_{0}=\overline{\theta}_{-i}\right]\\
        & =\mathbb{E}_{\theta_{-i},\theta_{0}}\left[\left.\sup_{\theta'_{0}<\theta_{i}}t_{i}^{\text{opt}}\left(\theta_{i},\theta_{-i},\theta'_{0}\right)\right|\theta_{i}=\overline{\theta}_{-i}>\theta_{0}\right]\\
        & =x_{3}'\geq x_{3}.
    \end{alignat*}
\end{enumerate}

In total, by pretending to be bidder $0$ of type $\theta'_{0}$, the seller can at least safely push up the expected payment of buyer $i\in B$ conditional on being the highest type $\theta_{i} = \max\left\{ \theta_{B}\right\} >\rho^{*}$ to
\begin{alignat}{1}
\hat{T}_{i}^{N}\left(\theta_{i}\right) & = \mathbb{E}_{\theta_{-i}}\left[\left.\sup_{\theta'_{0}<\theta_{i}}t_{i}^{\text{opt}}\left(\theta_{i},\theta_{-i},\theta'_{0}\right)\right|\theta_{i}\geq\overline{\theta}_{-i}\right] \nonumber \\
& =  x_{1}'\times p_{x_{1}}+x_{2}'\times p_{x_{2}}+x_{3}'\times p_{x_{3}}+x_{4}'\times p_{x_{4}} \nonumber \\
& \geq x_{1}\times p_{x_{1}}+x_{1}\times p_{x_{2}}+x_{3}\times p_{x_{3}}+x_{3}\times p_{x_{4}} \label{eq:winning-payment-finite-1}\\
& =  x_{1}\times\left( p_{x_{1}}+ p_{x_{2}}\right)+x_{3}\times\left( p_{x_{3}}+ p_{x_{4}}\right) \nonumber\\
& =  \left(1+h\right)\left(x_{1}\times p_{x_{1}}+x_{3}\times p_{x_{3}}\right), \nonumber
\end{alignat}
where
\[
h=\frac{ p_{x_{2}}}{ p_{x_{1}}}=\frac{ p_{x_{4}}}{ p_{x_{3}}}=\frac{\mathrm{Pr}\left[\left.\theta_{i}=\theta{}_{0}\right|\theta_{i}\geq\max\left\{ \overline{\theta}_{-i},\theta{}_{0}\right\} \right]}{\mathrm{Pr}\left[\left.\theta_{i}>\theta{}_{0}\right|\theta_{i}\geq\max\left\{ \overline{\theta}_{-i},\theta{}_{0}\right\} \right]},
\]
because types are independent. Ex-post individual rationality implies that
\begin{alignat*}{1}
    x_{2} &= \mathbb{E}_{\theta_{-i},\theta_{0}}\left[\left.t_{i}^{\text{opt}}\left(\theta_{i},\theta_{-i},\theta_{0}\right)\right|\theta_{i}=\theta_{0}>\overline{\theta}_{-i}\right]\\
    &\leq \theta_i\mathbb{E}_{\theta_{-i},\theta_{0}}\left[\left.q_{i}^{\text{opt}}\left(\theta_{i},\theta_{-i},\theta_{0}\right)\right|\theta_{i}=\theta_{0}>\overline{\theta}_{-i}\right],\\
    x_{4} &= \mathbb{E}_{\theta_{-i},\theta_{0}}\left[\left.q_{i}^{\text{opt}}\left(\theta_{i},\theta_{-i},\theta_{0}\right)\right|\theta_{i}=\theta_{0}=\overline{\theta}_{-i}\right] \\
    &\leq \theta_i\mathbb{E}_{\theta_{-i},\theta_{0}}\left[\left.q_{i}^{\text{opt}}\left(\theta_{i},\theta_{-i},\theta_{0}\right)\right|\theta_{i}=\theta_{0}=\overline{\theta}_{-i}\right].
\end{alignat*}
Then,
\begin{alignat}{1}
    & x_{1}\times p_{x_{1}}+x_{3}\times p_{x_{3}} \nonumber \\
    = & T_{i}^{N}\left(\theta_{i}\right)-\left(x_2 \times  p_{x_{2}} + x_4 \times p_{x_{4}}\right) \label{eq:winning-payment-finite-2}\\
    \geq & T_{i}^{N}\left(\theta_{i}\right)-\theta_{i}\mathbb{E}_{\theta_{-i},\theta_{0}}\left[\left.q_{i}^{\text{opt}}\left(\theta_{i},\theta_{-i},\theta_{0}\right)\right|\theta_{i}=\theta_{0}\geq\overline{\theta}_{-i}\right]\mathrm{Pr}\left[\left.\theta_{i}=\theta_{0}\right|\theta_{i}\geq\max\left\{ \overline{\theta}_{-i},\theta_{0}\right\}\right]. \nonumber
\end{alignat}

Notice that optimality under anonymity pins down the expected payment of buyer \(i\).\footnote{In particular, it is pinned down by the local downward incentive compatibility constraints which bind under optimality \citep{lovejoyOptimalMechanismsFinite2006,elkindDesigningLearningOptimal2007}.} Ex-post individual rationality implies that the payment conditional on not being the highest or \(\theta_i < \rho^*\) is zero. Conditional on winning the tie, buyer \(i\) of type \(\rho^*\) pays exactly \(\rho^*\) under optimality. Hence, the expected payment of buyer \(i\) conditional on being the highest type \(\theta_i > \rho^*\), i.e., \(T_{i}^{N}\left(\theta_{i}\right)\), is fixed. Combining \eqref{eq:winning-payment-finite-1} and \eqref{eq:winning-payment-finite-2}, we observe that the minimum of $\hat{T}_{i}^{N}\left(\theta_{i}\right)$ can be obtained by the \emph{tie-corrected} first-price auction with the reserve price $\rho^{*}$: when bidders tie with type $\theta \geq\rho^{*}$, they pay $\theta$.\footnote{In the continuous type space, the probability of a tie is zero. This auction degenerates to the first-price auction (with reserve).} Ties are broken uniformly at random. Otherwise, the unique winner pays a fixed payment $g^{n}\left(\theta\right)\geq\rho^{*}$, where $n=\left|N\right|$ is the number of bidders in the auction. The exact value of $g^{n}\left(\theta\right)$ is determined by incentive compatibility constraints.\footnote{We assume \(g^{n}\left(\theta\right)= 0 \) when \(\theta < \rho^*\) for convenience.} (See Appendix~\ref{subsuc:tie-corrected-first-price}.)

Intuitively, in the tie-corrected first-price auction with the reserve price $\rho^{*}$, the seller incurs the greatest loss by not tying with buyers. Furthermore, the seller cannot manipulate the payment when they lose the auction. This is the ``worst'' optimal auction for the seller, which characterizes the lowest expected payment from buyer \(i\) to the seller who pretends to be bidder 0 safely, i.e.,
\begin{alignat*}{1}
    & \mathbb{E}_{\theta_{-i}}\left[\sup_{\theta'_{0}<\theta_{i}}t_{i}^{\text{opt}}\left(\theta_{i},\theta_{-i},\theta'_{0}\right)\right] \\
    \geq & \mathbb{E}_{\theta_{-i}}\left[g^{n}\left(\theta_{i}\right)\times\mathbf{1}_{\theta_{i}>\overline{\theta}_{-i}} + \theta_{i} q_{i}^{\text{opt}}\left(\theta_{i},\theta_{-i}\right) \times \mathbf{1}_{\theta_{i}=\overline{\theta}_{-i}}\right].
\end{alignat*}
    
Notice that conditional on \(\theta_i = \theta_j\), the payment for buyer \(i\) is the same as that for buyer \(j\) under anonymity, i.e., \(t_{i}^{\text{opt}}\left(\theta_{B},\theta'_{0}\right) = t_{j}^{\text{opt}}\left(\theta_{B},\theta'_{0}\right)\). In particular, we have
\begin{alignat*}{1}
    & \sup_{\theta'_{0}<\theta_i = \theta_j = \overline{\theta}_{B}}t_{i}^{\text{opt}}\left(\theta_{B},\theta'_{0}\right) + \sup_{\theta'_{0}<\theta_i = \theta_j = \overline{\theta}_{B}}t_{j}^{\text{opt}}\left(\theta_{B},\theta'_{0}\right) \\
    = & \sup_{\theta'_{0}<\theta_i = \theta_j = \overline{\theta}_{B}} \left\{t_{i}^{\text{opt}}\left(\theta_{B},\theta'_{0}\right) + t_{j}^{\text{opt}}\left(\theta_{B},\theta'_{0}\right)\right\}.
\end{alignat*}
In general, we have
\[
    \sup_{\theta'_{0}<\overline{\theta}_{B}}\sum_{i\in B}t_{i}^{\text{opt}}\left(\theta_{B},\theta'_{0}\right) = \sum_{i\in B} \sup_{\theta'_{0}<\overline{\theta}_{B}}t_{i}^{\text{opt}}\left(\theta_{B},\theta'_{0}\right).
\]
Then, we can characterize the lower bound of the expected revenue for the seller by employing a fake identity and playing safely in the optimal lit auction for any set of buyers \(B\) as follows:
\begin{alignat*}{1}
    & \mathbb{E}_{\theta_{B}}\left[\sup_{\theta'_{0}<\overline{\theta}_{B}}\sum_{i\in B}t_{i}^{\text{opt}}\left(\theta_{B},\theta'_{0}\right)\right] = \sum_{i\in B} \mathbb{E}_{\theta_{B}}\left[\sup_{\theta'_{0}<\overline{\theta}_{B}}t_{i}^{\text{opt}}\left(\theta_{B},\theta'_{0}\right)\right]\\
    = & \sum_{i\in B}  \mathbb{E}_{\theta_B}\left[g^{n}\left(\theta_{i}\right)\times\mathbf{1}_{\theta_{i}>\overline{\theta}_{-i}} + \theta_{i} q_{i}^{\text{opt}}\left(\theta_{i},\theta_{-i}\right) \times \mathbf{1}_{\theta_{i}=\overline{\theta}_{-i}}\right].
\end{alignat*}

By the revenue equivalence theorem in the finite type space \citep{lovejoyOptimalMechanismsFinite2006, elkindDesigningLearningOptimal2007}, any optimal auction for a set of buyers $B$ generates the same expected revenue as in the tie-corrected first-price auction with the reserve price $\rho^{*}$ for the same set of buyers, since both are optimal auctions. Note that \(\left|B\right| = \left|N\right| -  1 = n - 1\). Then, we have
\begin{alignat*}{1}
    & \mathbb{E}_{\theta_{B}}\left[\sum_{i\in B}t_{i}^{\text{opt}}\left(\theta_{B}\right)\right]
    =  \sum_{i\in B} \mathbb{E}_{\theta_{B}}\left[t_{i}^{\text{opt}}\left(\theta_{B}\right)\right] \\
    = & \sum_{i\in B}  \mathbb{E}_{\theta_{B}}\left[ g^{n-1}\left(\theta_{i}\right)\times\mathbf{1}_{\theta_{i}>\overline{\theta}_{-i}} + \theta_{i} q_{i}^{\text{opt}}\left(\theta_{i},\theta_{-i}\right) \times \mathbf{1}_{\theta_{i}=\overline{\theta}_{-i}}\right].
\end{alignat*}

In Appendix~\ref{subsuc:tie-corrected-first-price}, we show that for all $\theta_{i}>\rho^{*}$, and all $n\in\mathbb{N}$, we have
\[
g^{n-1}\left(\theta_{i}\right)<g^{n}\left(\theta_{i}\right).
\]
Therefore,
\[
    \mathbb{E}_{\theta_{B}}\left[\sum_{i\in B}t_{i}^{\text{opt}}\left(\theta_{B}\right)\right] <
    \mathbb{E}_{\theta_{B}}\left[\sup_{\theta'_{0}<\overline{\theta}_{B}}\sum_{i\in B}t_{i}^{\text{opt}}\left(\theta_{B},\theta'_{0}\right)\right].
\]
Because $B$ is any arbitrary set of buyers, we have
\[
\mathbb{E}_{B \in \mathcal{B}}\left[\mathbb{E}_{\theta_{B}}\left[\sum_{i\in B}t_{i}^{\text{opt}}\left(\theta_{B}\right)\right]\right]<\mathbb{E}_{B \in \mathcal{B}}\left[\mathbb{E}_{\theta_{B}}\left[\sup_{\theta_{0}<\overline{\theta}_{B}}\sum_{i\in B}t_{i}^{\text{opt}}\left(\theta_{B},\theta_{0}\right)\right]\right],
\]
which violates ex-post seller identity compatibility.

\subsection{Second-Price Auction with a Fixed Priority Order}\label{subsec:fixed-priority-order}

Assume the fixed priority order is such that we always break ties by giving the item to the lowest-numbered buyer. Then the allocation rule is $q_{i}^{\text{p-2nd}}\left(\theta_{B}\right)=\mathbf{1}_{i=W^{\text{p-2nd}}\left(\theta_{B}\right)}$, where
\[
W^{\text{p-2nd}}\left(\theta_{B}\right)=\min\left\{ \left.i\in B\right|\theta_{i}=\max\left\{ \theta_{B}\right\} \right\} ,
\]
and the payment rule is
\[
t^{\text{p-2nd}}_{i}\left(\theta_{B}\right)=q_{i}^{\text{p-2nd}}\left(\theta_{B}\right)\times\min\left\{ \left.\theta'_{i}\in\Theta\right|i=W^{\text{p-2nd}}\left(\theta'_{i},\theta_{-i}\right)\right\} .
\]
For example, let $B=\left\{ 1,2\right\} $ and $\Theta=\left\{ \theta^{L},\theta^{H}\right\} $, where $\theta^{H}>\theta^{L}$. Then\footnote{$t^{\text{p-2nd}}_{i}\left(\theta^{L},\theta^{L}\right)=\theta^{L}$ and $t^{\text{p-2nd}}_{i}\left(\theta^{H},\theta^{H}\right)=\theta^{H}$ for all $i\in\left\{ 1,2\right\} $.}
\[
t^{\text{p-2nd}}_{1}\left(\theta^{H},\theta^{L}\right)=\theta^{L}\text{, but }t^{\text{p-2nd}}_{2}\left(\theta^{L},\theta^{H}\right)=\theta^{H}.
\]
Buyer 1 of type $\theta^{H}$ has positive payoff, whereas buyer 2 of type $\theta^{H}$ has zero payoff. In general, the higher the priority, the higher the expected payoff for the buyer in the second-price auction. If the priority order is drawn uniformly at random in advance, buyers can strictly benefit from using multiple identities in order to increase the chance of drawing a high priority. After the priority order is drawn, buyers will bid truthfully for the identity that has the highest priority, and bid the lowest type for all other identities. By doing so, buyers will never be harmed by the additional identities, and the resulting payment rule for buyers is the same as the one by ignoring those identities with lower priorities. Hence, the second-price auction with a fixed priority order drawn uniformly at random for breaking ties is not Bayesian buyer identity-compatible.

Now we consider the tie-corrected second-price auction, which is actually the direct mechanism of the second-price auction with a fixed priority order drawn uniformly at random for breaking ties.

\subsection{Tie-Corrected Second-Price Auction}\label{subsec:tie-corrected-second-price}

From \citet{myersonOptimalAuctionDesign1981}, \citet{lovejoyOptimalMechanismsFinite2006}, and \citet{elkindDesigningLearningOptimal2007}, we know that optimality in the regular auction implies that the seller always gives the item to the buyer who has the highest type conditional on being above the reserve price $\rho^{*}$. Under the symmetric tie-breaking rule, which follows from anonymity, the allocation rule is pinned down as $q_{i}^{\text{opt}}\left(\theta_{B}\right)=\frac{1}{\left|W^{\text{opt}}\left(\theta_{B}\right)\right|}$,
where
\[
W^{\text{opt}}\left(\theta_{B}\right)=\left\{ \left.i\in B\right|\theta_{i}=\max\left\{ \theta_{B}\right\} \text{ and }\theta_{i}\geq\rho^{*}\right\} .
\]
Notice that strategy-proofness is not enough to pin down the payment rule.\footnote{Both the second-price auction and the tie-corrected one are strategy-proof under the same allocation rule.} In the finite type space, \citet{lovejoyOptimalMechanismsFinite2006} and \citet{elkindDesigningLearningOptimal2007} point out that optimal auctions require that the local downward incentive compatibility constraints must be binding, whereas the local upward incentive compatibility constraints can be slack. Specifically, in the optimal auction that is strategy-proof, buyer $i\in B$ of type $\theta^{k}$ finds it indifferent between bidding $\theta^{k}$ and $\theta^{k-1}$, given any other bidders' type profile $\theta_{-i}\in\Theta^{\left|B\right|-1}$.
Then, the payment rule is pinned down as 
\[
t_{i}^{\text{tc-2nd}}\left(\theta_{B}\right)=\begin{cases}
\frac{1}{\left|W^{\text{opt}}\left(\theta_{i}^{t},\theta_{-i}\right)\right|}\theta_{i}^{\text{t}}+\left(1-\frac{1}{\left|W^{\text{opt}}\left(\theta_{i}^{t},\theta_{-i}\right)\right|}\right)\theta_{i}^{\text{w}} & \text{if \ensuremath{\theta_{i}\geq\theta_{i}^{\text{w}}},}\\
\frac{1}{\left|W^{\text{opt}}\left(\theta_{i}^{t},\theta_{-i}\right)\right|}\theta_{i}^{\text{t}} & \text{if \ensuremath{\theta_{i}=\theta_{i}^{t}},}\\
0 & \text{otherwise,}
\end{cases}
\]
where 
\[
\theta_{i}^{\text{w}}=\min\left\{ \left.\theta_{i}\in\Theta\right| q_{i}^{\text{opt}}\left(\theta_{i},\theta_{-i}\right)=1\right\} 
\]
is the lowest unique winning type for buyer $i$, 
\[
\theta_{i}^{\text{t}} = \min\left\{ \left.\theta_{i}\in\Theta\right| q_{i}^{\text{opt}}\left(\theta_{i},\theta_{-i}\right) > 0\right\}
\]
is the tying type for buyer $i$, and $\frac{1}{\left|W^{\text{opt}}\left(\theta_{i}^{t},\theta_{-i}\right)\right|}$ is the probability of winning for buyer $i$ of the tying type. We refer to this auction as the tie-corrected second-price auction (with reserve), which is pinned down by strategy-proofness and optimality in the finite type space.

For comparison, we also present the second-price auction under the symmetric tie-breaking rule here. The allocation rule is the same as before. The payment rule is
\[
t_{i}^{\text{2nd}}\left(\theta_{B}\right)=\begin{cases}
\theta_{i}^{\text{t}} & \text{if \ensuremath{\theta_{i}\geq\theta_{i}^{\text{w}}},}\\
\frac{1}{\left|W^{\text{opt}}\left(\theta_{i}^{t},\theta_{-i}\right)\right|}\theta_{i}^{\text{t}} & \text{if \ensuremath{\theta_{i}=\theta_{i}^{t}},}\\
0 & \text{otherwise.}
\end{cases}
\]
Notice that buyers pay strictly less in the second-price auction than in the tie-corrected one, where the unique winning bidder $i$ is asked to pay between $i$'s lowest unique winning type (strictly higher than the second-highest bid in the finite type space) and $i$'s tying type (the second-highest bid), with the weight on the tying type equal to the probability that $i$ wins the tie. Hence, the second-price auction under the symmetric tie-breaking rule is not optimal in the finite type space. The two auctions are equivalent when ties occur with probability zero.\footnote{If $\mathrm{Pr}\left[\theta_{i}=\max\left\{ \theta_{-i}\right\} \right]=0$, then $\theta_{i}^{\text{w}}=\theta_{i}^{\text{t}}$.} For example, in the continuous (or some asymmetric) type space, we have $\theta_{i}^{\text{w}}=\theta_{i}^{\text{t}}$ and $\left|W^{\text{opt}}\left(\cdot\right)\right|\in\left\{ 0,1\right\} $.

In the previous example, we have $t_{1}^{\text{tc-2nd}}\left(\theta^{H},\theta^{L}\right)=t_{2}^{\text{tc-2nd}}\left(\theta^{L},\theta^{H}\right)=\frac{\theta^{H}+\theta^{L}}{2}$. Notice that this is also the payment rule when the fixed priority order is drawn uniformly at random. In fact, the second-price auction with a fixed priority order (with reserve) is also optimal. When the priority order is drawn uniformly at random, the payment rule under randomization is identical to that of the tie-corrected second-price auction. Hence, the tie-corrected second-price auction is the direct mechanism of the second-price auction with a fixed priority order drawn uniformly at random for breaking ties.

We can easily check that the tie-corrected second-price auction is indeed strategy-proof for all buyers $i\in B$. If $\theta_{i}<\theta_{i}^{t}$, buyer $i$ finds it not profitable to win the auction by bidding higher, because the winning payment is strictly higher than $\theta_{i}$. If $\theta_{i}=\theta_{i}^{t}$, buyer $i$ has payoff zero for truthful bidding, is indifferent to losing the auction, and finds it not profitable to be the unique winner by bidding higher when possible, because the winning payment is strictly higher than $\theta_{i}^{\text{t}}$. If $\theta_{i}>\theta_{i}^{\text{w}}$, buyer $i$ find it strictly worse off by bidding $\theta_{i}^{\text{t}}$ (or less), because the gain from the decrease in the payment cannot make up the loss from the decrease in the probability of winning. If $\theta_{i}=\theta_{i}^{\text{w}}$, buyer $i$ has positive payoff for truthful bidding, finds it strictly worse off by losing the auction, and is indifferent between bidding $\theta_{i}^{t}$ and $\theta_{i}^{\text{w}}$ (or more), because $i$ pays exactly $\theta_{i}$ for the increased probability of winning. Hence, the tie-corrected second-price auction is strategy-proof.

However, the tie-corrected second-price auction is not ex-post buyer identity-compatible. It is always profitable for buyer $i\in B$ of type $\theta_{i}\geq\theta_{i}^{\text{w}}$ to bid $\theta_{i}^{\text{t}}$ under a sufficiently large number of identities. By doing so, buyer $i$ wins the tie with probability sufficiently close to one, and hence avoids the loss from the decrease in the probability of winning by bidding less. At the same time, the winning payment strictly decreases from the unique winning payment to the tying payment $\theta^{\text{t}}$.

The second-price auction is ex-post buyer identity-compatible. However, the second-price auction (with reserve) does not maximize expected revenue among all auctions that are ex-post identity-compatible. For example, if the probability of $k$ buyers showing up in the auction is sufficiently close to one, and the probability of having more than $k$ buyers is sufficiently close to zero, then the auctioneer can run the second-price auction (with reserve) up to $k-1$ bidders, run the tie-corrected one (with reserve) only for $k$ bidders, and commit to no sale for more than $k$ bidders. Then, this auction is ex-post buyer identity-compatible, and generates higher expected revenue than always running the second-price auction, because it performs strictly better when there are $k$ bidders.

\subsection{Tie-Corrected First-Price Auction}\label{subsuc:tie-corrected-first-price}

In the tie-corrected first-price auction with the reserve price $\rho^{*}$, bidders report their types in the auction, and whoever reports the highest wins the auction. In case of a tie at the type $\theta^{k}\geq\rho^{*}$, bidders pay $\theta^{k}$ by breaking ties uniformly at random. Otherwise, the unique winner pays a fixed amount $g^{n}\left(\theta^{k}\right)\geq\rho^{*}$, where $n=\left|N\right|$ is the number of bidders in the auction.\footnote{We assume that \(g^{n}\left(\theta^k\right)=0\)  when \(\theta^k < \rho^{*}\) for convenience.} Recall that $f_{k}=\mathrm{Pr}\left[\theta_{i}=\theta^{k}\right]$. The cumulative distribution function is defined as $F_{k}=\sum_{m=1}^{k}f_{m}$.\footnote{$F_{k}=0$ when $k<1$.} Recall that the optimal reserve price is defined as
\[
\rho^{*}=\min\left\{ \left.\theta^{k}\in\Theta\right|v\left(\theta^{k}\right)\geq0\right\} =\theta^{k^{*}}.
\]

The expected payoff of bidder $i$ of type $\theta^{k}>\theta^{k^{*}}$ is
\begin{alignat*}{1}
U_{i}\left(\left.\theta^{k}\right|\theta^{k}\right)= & \left(\theta^{k}-g^{n}\left(\theta^{k}\right)\right)\mathrm{Pr}\left[\forall j\in N\backslash\left\{ i\right\} :\theta^{k}>\theta_{j}\right]\\
= & \left(\theta^{k}-g^{n}\left(\theta^{k}\right)\right)\mathbb{P}^{n-1}\left(\theta^{k}>\theta_{j}\right)\\
= & \left(\theta^{k}-g^{n}\left(\theta^{k}\right)\right)F_{k-1}^{n-1},
\end{alignat*}

The expected payoff of bidder $i$ when misreporting type $\theta^{k-1}$ is
\begin{alignat*}{1}
U_{i}\left(\left.\theta^{k-1}\right|\theta^{k}\right)= & \left(\theta^{k}-g^{n}\left(\theta^{k-1}\right)\right)\mathrm{Pr}\left[\forall j\in N\backslash\left\{ i\right\} :\theta^{k-1}>\theta_{j}\right]\\
 & +\left(\theta^{k}-\theta^{k-1}\right)\sum_{m=1}^{n-1}\frac{\mathrm{Pr}\left[\text{\(m\) additional bidders tie at the type \ensuremath{\theta^{k-1}}}\right]}{m+1}\\
= & \left(\theta^{k}-g^{n}\left(\theta^{k-1}\right)\right)F_{k-2}^{n-1}\\
 & +\left(\theta^{k}-\theta^{k-1}\right)\sum_{m=1}^{n-1}\frac{\binom{n-1}{m}f_{k-1}^{m}F_{k-2}^{n-1-m}}{m+1}.
\end{alignat*}
Here $\binom{n-1}{m}$ is the number of ways of selecting $m$ bidders out of $n-1$ bidders to tie with bidder \(i\). We have $m+1$ in the denominator because of breaking ties uniformly at random. The summation accounts for all winning probabilities at ties with different numbers of bidders.
\begin{alignat*}{1}
 & \sum_{m=1}^{n-1}\frac{\binom{n-1}{m}f_{k-1}^{m}F_{k-2}^{n-1-m}}{m+1}\\
= & \sum_{m=1}^{n-1}\frac{\left(n-1\right)!}{\left(m+1\right)!\left(n-1-m\right)!}f_{k-1}^{m}F_{k-2}^{n-1-m}\\
= & \sum_{m=2}^{n}\frac{\left(n-1\right)!}{m!\left(n-m\right)!}f_{k-1}^{m-1}F_{k-2}^{n-m}\\
= & \frac{1}{nf_{k-1}}\left(\left(f_{k-1}+F_{k-2}\right)^{n}-F_{k-2}^{n}-nf_{k-1}F_{k-2}^{n-1}\right)\\
= & \frac{1}{nf_{k-1}}\left(F_{k-1}^{n}-F_{k-2}^{n}\right)-F_{k-2}^{n-1}.
\end{alignat*}

Then, we have
\begin{alignat*}{1}
U_{i}\left(\theta^{k-1}\mid\theta^{k}\right)= & \left(\theta^{k}-g^{n}\left(\theta^{k-1}\right)\right)F_{k-2}^{n-1}\\
 & +\left(\theta^{k}-\theta^{k-1}\right)\left[\frac{1}{nf_{k-1}}\left(F_{k-1}^{n}-F_{k-2}^{n}\right)-F_{k-2}^{n-1}\right].\\
= & \left(\theta^{k-1}-g^{n}\left(\theta^{k-1}\right)\right)F_{k-2}^{n-1}\\
 & +\left(\theta^{k}-\theta^{k-1}\right)\frac{F_{k-1}^{n}-F_{k-2}^{n}}{nf_{k-1}}.
\end{alignat*}

Optimality implies that the downward local incentive compatibility constraints must bind in equilibrium \citep{lovejoyOptimalMechanismsFinite2006,elkindDesigningLearningOptimal2007}. Hence,
\(
U_{i}\left(\left.\theta^{k}\right|\theta^{k}\right)=U_{i}\left(\left.\theta^{k-1}\right|\theta^{k}\right)
\), and we have
\[
\left(\theta^{k}-g^{n}\left(\theta^{k}\right)\right)F_{k-1}^{n-1}=\left(\theta^{k-1}-g^{n}\left(\theta^{k-1}\right)\right)F_{k-2}^{n-1}+\left(\theta^{k}-\theta^{k-1}\right)\frac{F_{k-1}^{n}-F_{k-2}^{n}}{nf_{k-1}}.
\]
Since bidders of the lowest possible winning type have zero payoffs, it follows that $g^{n}\left(\theta^{k^{*}}\right)=\theta^{k^{*}}$. Therefore,
\[
\left(\theta^{k}-g^{n}\left(\theta^{k}\right)\right)F_{k-1}^{n-1}=\sum_{m=k^{*}+1}^{k}\left(\theta^{m}-\theta^{m-1}\right)\frac{F_{m-1}^{n}-F_{m-2}^{n}}{nf_{m-1}},
\]
and
\begin{equation}
g^{n}\left(\theta^{k}\right)=\theta^{k}-\sum_{m=k^{*}+1}^{k}\left(\theta^{m}-\theta^{m-1}\right)\frac{F_{m-1}^{n}-F_{m-2}^{n}}{nf_{m-1}F_{k-1}^{n-1}}\label{eq:biddingfuc}
\end{equation}
Notice that $g^{n}\left(\theta^{k}\right)<\theta^{k}$ and $\lim_{n\rightarrow\infty}g^{n}\left(\theta^{k}\right)=\theta^{k}$.\footnote{As a simple check, when $d\theta=\theta^{m}-\theta^{m-1}\rightarrow0$ and $f_{m-1}\rightarrow0$, the payment rule~\eqref{eq:biddingfuc} becomes the equilibrium bidding function in the first-price auction with the reserve price $\rho^{*}$ in the continuous type space, i.e., $b_{i}\left(\theta\right)=\theta-\frac{\int_{\rho^{*}}^{\theta}F^{n-1}\left(s\right)ds}{F^{n-1}\left(\theta\right)}$. Notice that our auction is different from the first-price auction in the finite type space because of the different payment in case of a tie. However, the difference disappears in the continuous type space because the probability of a tie is zero. Hence, it should reduce to the first-price auction with the reserve price $\rho^{*}$ in the continuous type space when taking limits.} In particular,
\[
g^{n+1}\left(\theta^{k}\right)-g^{n}\left(\theta^{k}\right)=\sum_{m=k^{*}+1}^{k}\left(\theta^{m}-\theta^{m-1}\right)\left(\frac{F_{m-1}^{n}-F_{m-2}^{n}}{nf_{m-1}F_{k-1}^{n-1}}-\frac{F_{m-1}^{n+1}-F_{m-2}^{n+1}}{\left(n+1\right)f_{m-1}F_{k-1}^{n}}\right),
\]
where,
\begin{alignat*}{1}
 & \frac{F_{m-1}^{n}-F_{m-2}^{n}}{nf_{m-1}F_{k-1}^{n-1}}-\frac{F_{m-1}^{n+1}-F_{m-2}^{n+1}}{\left(n+1\right)f_{m-1}F_{k-1}^{n}}\\
= & \frac{\left(n+1\right)\left(F_{m-1}^{n}-F_{m-2}^{n}\right)F_{k-1}-n\left(F_{m-1}^{n+1}-F_{m-2}^{n+1}\right)}{n\left(n+1\right)f_{m-1}F_{k-1}^{n}}\\
\geq & \frac{\left(n+1\right)\left(F_{m-1}^{n}-F_{m-2}^{n}\right)F_{m-1}-n\left(F_{m-1}^{n+1}-F_{m-2}^{n+1}\right)}{n\left(n+1\right)f_{m-1}F_{k-1}^{n}}\\
= & \frac{\left(F_{m-1}^{n}-F_{m-2}^{n}\right)F_{m-1}+\left[nF_{m-2}-nF_{m-1}\right]F_{m-2}^{n}}{n\left(n+1\right)f_{m-1}F_{k-1}^{n}}\\
= & \frac{\sum_{i=0}^{n-1}\left(F_{m-1}^{i+1}F_{m-2}^{n-1-i}-F_{m-2}^{n}\right)}{n\left(n+1\right)F_{k-1}^{n}}\\
> & \frac{\sum_{i=0}^{n-1}\left(F_{m-2}^{i+1}F_{m-2}^{n-1-i}-F_{m-2}^{n}\right)}{n\left(n+1\right)F_{k-1}^{n}}=0.
\end{alignat*}
Hence, for all $\theta^{k}\geq\rho^{*}$,
\[
g^{n}\left(\theta^{k}\right)<g^{n+1}\left(\theta^{k}\right).
\]

\end{document}